\definecolor{weborange}{rgb}{.8,.3,.3}
\definecolor{webblue}{rgb}{0,0,.8}
\definecolor{internallinkcolor}{rgb}{0,.5,0}
\definecolor{externallinkcolor}{rgb}{0,0,.5}
\newtheorem{theorem}{Theorem}
\newtheorem{corollary}[theorem]{Corollary}
\newtheorem{proposition}[theorem]{Proposition}
\newtheorem{lemma}[theorem]{Lemma}
\newtheorem{claim}[theorem]{Claim}
\theoremstyle{definition}
\newtheorem{definition}[theorem]{Definition}
\newtheorem{algorithm}[theorem]{Algorithm}
\newcommand*\circled[1]{\tikz[baseline=(char.base)]{
    \node[shape=circle,draw,inner sep=0.6pt] (char) {\tiny #1};}}
\def\circleS{\mathbin{\raisebox{1pt}{\circled {\rm S}}}}
\newcommand\Mtilde{{\accentset \sim M}}
\def\vecone{{\vec 1}}
\def\veczero{{\vec 0}}
\def\Tpoly{{\mathrm{poly}}}
\def\Vzero{{\mathrm V ^ 0}}
\def\VL{{\mathrm{VL}}}
\def\VSL{{\mathrm{VSL}}}
\def\VNL{{\mathrm{VNL}}}
\def\VNC{{\mathrm{VNC}}}
\def\VSymKrom{{\mathrm{V\hbox{\rm-}SymKrom}}}
\def\tensor#1{{\mathbf {#1}}}
\def\IDelta{{\mathrm I \Delta}}
\def\Stheory{{\mathrm S}}
\def\Ttheory{{\mathrm T}}
\def\transpose{{\hbox{\rm\sf \scriptsize T}}}
\def\limplies{\rightarrow}
\def\liff{\leftrightarrow}
\def\pprime{{\prime\prime}}
\def\deltaPath{{\delta_{\mathrm{\scriptscriptstyle PATH}}}}
\def\deltaUCONN{{\delta_{\mathrm{\scriptscriptstyle UCONN}}}}
\def\Unique{\mathrm{Unique}}
\def\Symm{{\mathrm{Symm}}}
\def\MixRat{\mathrm{MixRat}}
\def\EdgeExp{\mathrm{EdgeExp}}
\def\SubsetConn{\mathrm{SubsetConn}}
\def\PathConn{\mathrm{PathConn}}
\definecolor{blue-violet}{rgb}{0.54, 0.17, 0.89}
\begin{document}

\title{\textbf{A Logspace Constructive Proof of $\L=\SL$ }}
\author[1]{Sam~Buss}
\author[2]{Anant~Dhayal}
\author[3]{Valentine~Kabanets}
\author[4]{Antonina~Kolokolova}
\author[5]{Sasank~Mouli}
\affil[1]{\small Department of Mathematics, University of California, San Diego, La Jolla, CA, USA}
\affil[2]{Google, Kirkland, WA, USA}
\affil[3]{School of Computing Science, Simon Fraser University, Burnaby, B.C., Canada}
\affil[4]{Department of Computer Science, Memorial University of Newfoundland, St.\ John's, NL, Canada }
\affil[5]{Department of Computer Science, Indian Institute of Technology Indore, Madhya Pradesh, India}


\maketitle

\begin{abstract}
We formalize the proof of Reingold's Theorem that $\SL=\L$ \cite{Reingold:UndirectedConn_STOC} in the theory of bounded arithmetic $\VL$, which corresponds to ``logspace reasoning''. As a consequence, we get that $\VL=\VSL$, where $\VSL$ is the theory of bounded arithmetic for ``symmetric-logspace reasoning''. This resolves in the affirmative an old open question from Kolokolova~\cite{Kolokolova:thesis} (see also Cook-Nguyen\cite{CookNguyen:book}). 

Our proof relies on the Rozenman-Vadhan alternative proof of Reingold's Theorem (\cite{RozenmanVadhan:DerandSquaring}). To formalize this proof in $\VL$, we need to avoid reasoning about eigenvalues and eigenvectors (common in both original proofs of $\SL=\L$). We achieve this by using some results from Buss-Kabanets-Kolokolova-Kouck\'y~\cite{BKKK:Expanders} that allow $\VL$ to reason about graph expansion in combinatorial terms.
\end{abstract}

\newpage
\tableofcontents

\newpage

\section{Introduction}\label{sec:Intro}

Symmetric logspace ($\SL$) has been studied for many years as a natural
complexity class, with the st-connectivity (reachability) problem (USTCON)
for {\em undirected} graphs
as a natural complete problem for~$\SL$; that is, $\SL$~can be
defined as the class of problems that are many-one log-space
reducible to USTCON. Nisan and Ta-Shma~\cite{NisanTaShma:SLcomplement}
showed that $\SL$ is closed under complement. Finally, Reingold proved a breakthrough result that $\SL = \L$, i.e., that
symmetric logspace is equal to (deterministic)
logspace~\cite{Reingold:UndirectedConn_STOC,Reingold:UndirectedConn},
by showing that USTCON is in logspace.
(See \cite{Reingold:UndirectedConn} for a broader overview
of prior work on $\SL$ including expander graphs and derandomization.) Rozenman and Vadhan~\cite{RozenmanVadhan:DerandSquaring} later gave an alternative proof that USTCON is in logspace. Both proofs of $\SL=\L$ use expander graphs and linear algebra to analyze the expansion properties of various graph operations (usually by bounding certain eigenvalues).

Our main result is that $\SL=\L$ can be proved using ``logspace reasoning''. We show this by formalizing a variant of the Rozenman-Vadhan proof of $\SL=\L$ in the system of bounded arithmetic $\VL$, corresponding to logspace. This finally answers in the affirmative an old open question of \cite{Kolokolova:thesis,CookNguyen:book} of whether $\VL$ can prove $\L=\SL$.

\paragraph{Formalizing complexity results in logical theories.} First-order logical
theories, in particular fragments of bounded arithmetic, are known to prove a wide
range of complexity results. A number of non-trivial
algorithmic constructions, including Toda's theorem~\cite{BKZ:collapsingmodular}, 
the PCP theorem~\cite{Pich:PCP}, $\NC^1$-constructions of expander graphs~\cite{BKKK:Expanders}, 
the fact that $\NL=\coNL$~\cite{CookKolokolova:NL,CookNguyen:book},
constructions of hardness amplification~\cite{Jerabek:thesis}, 
and properties of the Arthur-Merlin
hierarchy~\cite{Jerabek:approximatecounting}, 
can be formalized and proved in various fragments of bounded arithmetic.
Some circuit lower bounds can be proved in bounded arithmetic theories,
including lower bounds on constant depth circuits~\cite{Krajicek:book,MullerPich:SuccinctLB}.
A recent unprovability result shows that a (weak) second order fragment of bounded
arithmetic cannot prove $\NEXP\subseteq\Ppoly$~\cite{ABM:NexpPpoly}.
Finally, recent work has shown that several complexity constructions can be ``reversed''
and are actually equivalent to some of the axioms used for bounded
arithmetic~\cite{CLO:reversemath}.  The present paper proves a
new formalization in bounded arithmetic, namely of $\L = \SL$.


\paragraph{Logspace reasoning in bounded arithmetic.}
What is the power of reasoning needed to prove that $\SL=\L$? Is ``logspace reasoning'' enough?  More formally, we ask for the weakest theory of bounded arithmetic that can formalize the proof of $\L = \SL$. It is natural to consider theories of bounded arithmetic that ``capture'' logspace reasoning.


There are several theories of bounded arithmetic for~$\L$; 
see the related work section below. Here we will use the theory~$\VL$ of 
Cook and Nguyen~\cite{CookNguyen:book}. 
This theory (like all other known theories for~$\L$)
can prove the existence of
second-order objects encoding logspace predicates, and can
prove the totality of logspace functions. Conversely, these
are the only predicates that can be proved to exist as second-order
objects, and the only provably total functions of~$\VL$.  In other words,
$\VL$~can reason using logspace properties and only logspace properties.
This means, in particular, that 
\begin{quote}
    the \emph{only} way that $\VL$ can
prove $\SL=\L$ is to give, at least implicitly, a constructively defined
logspace algorithm for $\SL$-predicates.
\end{quote}

An analogous situation is the case of $\VNL$, a theory corresponding to
nondeterministic logspace ($\NL$), that can prove $\NL = \coNL$~\cite{CookKolokolova:NL,CookNguyen:book}.
It does so by formalizing explicitly the Immerman-Szelepcs{\'e}nyi algorithm.

There are theories of bounded arithmetic that correspond
to~$\SL$ in the same way as $\VL$ corresponds to~$\L$.
The first one was defined by
Kolokolova~\cite{Kolokolova:thesis}, who 
gave a second-order theory of bounded arithmetic,
$\VSymKrom$ that corresponds to~$\SL$. A natural question left open by~\cite{Kolokolova:thesis}  was whether $\VSymKrom = \VL$.
Cook and Nguyen reformulated $\VSymKrom$ by giving a theory $\VSL$ 
based on an axiom for undirected graph reachability 
(based on Zambella~\cite{Zambella:logspace}).
Their Open Problem IX.7.5~\cite{CookNguyen:book} asked whether
$\VL$ proves $\L = \SL$ and whether
$\VSL=\VL$.  The present paper answers these questions
affirmatively.


\paragraph{The difficulty of formalizing $\SL=\L$ in $\VL$.}
The difficulties arise from the fact that any $\VL$ proof can talk
only about logspace properties; it cannot talk about $\SL$ properties,
such as reachability in undirected graphs until {\em after} $\SL=\L$ is
established. Any $\VL$ proof of $\SL=\L$ must be sufficiently 
constructive to use only concepts which are definable in logspace.
It cannot, for instance, use general exponential- or polynomial-time computations;
furthermore, we cannot hope to directly use concepts such as
determinants or eigenvalues which are only known to be in $\NC^2$ and
are conjecturally not logspace computable.

Fortunately, the arguments of Reingold~\cite{Reingold:UndirectedConn} and of Rozenman and Vadhan~\cite{RozenmanVadhan:DerandSquaring}, 
albeit somewhat complex, are
fairly straightforward and constructive. 
Reingold's proof used iteratively graph powering 
and a zig-zag/replacement
product
to transform an arbitrary connected graph into a good expander. His
methods increased the number of vertices in the graph polynomially,
but kept the graph constant degree.
Rozenman and Vadhan~\cite{RozenmanVadhan:DerandSquaring} gave
an alternate construction to prove that $\L = \SL$, using
derandomized squaring (denoted by~``$\circleS$'') to transform
any connected graph into a good expander; their
construction kept the number of vertices in the graph fixed but
increased the degree polynomially. As we shall explain below, from our point
of view, the Rozenman-Vadhan construction is easier to work with. 

The apparent difficulty is that both
proofs of $\SL=\L$ use expander graphs, and properties of expander graphs are
often proved using properties of eigenvalues and eigenvectors. On the other hand, Gaussian elimination and the determination of eigenvectors and eigenvalues are \emph{not} known to be computable in logspace, and thus  cannot be used by $\VL$~proofs.

Recently, Buss, Kabanets, Kolokolova, and Kouck\'y~\cite{BKKK:Expanders}
proved that the theory $\VNC^1$, which has logical complexity
corresponding to alternating logtime (uniform~$\NC^1$), can prove the existence of
expander graphs without needing to reason about eigenvalues and
eigenvectors.
Since $\VNC^1$ is a subtheory of~$\VL$, this provided
hope that $\VL$ can prove that $\L = \SL$. 

\paragraph{Our main result.}
The present paper succeeds in this direction and
gives a precise characterization of the logical complexity
of the proof that $\L=\SL$. 

\begin{theorem}[Main theorem, informal]
    $\VL$ proves $\L=\SL$. In addition, $\VSL=\VL$.
\end{theorem}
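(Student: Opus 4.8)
The formal statement has two halves, ``$\VL$ proves $\L=\SL$'' and ``$\VSL=\VL$'', and both reduce to one technical core: a $\VL$-proof that undirected $st$-connectivity (USTCON) is decided by an explicit logspace algorithm, together with a $\VL$-proof that this algorithm is correct. Granting the core, the inclusion $\L\subseteq\SL$, i.e.\ $\VL\subseteq\VSL$, is immediate because $\VSL$ proves the totality of all $\SL$-functions, which includes the deterministic-reachability function that axiomatizes $\VL$ (via the classical reduction of deterministic reachability to USTCON, which goes through already in the base theory); the reverse inclusion $\VSL\subseteq\VL$ is exactly the assertion that $\VL$ proves the undirected-reachability axiom on which $\VSL$ is based, which is what the core supplies; and the coincidence of the two halves is then a standard consequence of the metatheory of the ``$\mathrm{V}\mathcal{C}$'' theories (a theory for a class $\mathcal{C}$ that proves $\mathcal{C}=\mathcal{C}'$ agrees with the corresponding theory for $\mathcal{C}'$). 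So I would concentrate on building, inside $\VL$, a provably correct logspace algorithm for USTCON, following the Rozenman--Vadhan route.

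First I would introduce, as $\VL$-provably total logspace string operations: (i) a \emph{regularization} that turns an arbitrary input graph on $n$ vertices with distinguished $s,t$ into a $D$-regular graph $G_0$ on $\mathrm{poly}(n)$ vertices ($D$ a fixed small constant) carrying enough self-loops that each connected component becomes connected, $D$-regular and aperiodic, without changing which component contains $s$ or $t$; (ii) for a suitable explicit family of expanders $H_1,H_2,\dots$ of matching sizes (of constant, then rapidly improving, quality), the derandomized square $G_i = G_{i-1}\circleS H_i$; and (iii) the iterate $G_\ell$ for $\ell = O(\log n)$. The decisive design choice is to quantify expansion \emph{not} spectrally but through the combinatorial mixing parameter of Buss--Kabanets--Kolokolova--Kouck\'y~\cite{BKKK:Expanders}: it is a logspace-checkable quantity, and \cite{BKKK:Expanders} already furnishes $\VL$-provable combinatorial analogues of graph powering and the zig-zag product, which I would adapt here.

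The analytic heart is then a single lemma, to be proved in $\VL$: \emph{if $G$ is regular, aperiodic, connected on each component and $\mu$-mixing in the combinatorial sense, and $H$ is a good combinatorial expander of matching size, then $G\circleS H$ is $g(\mu)$-mixing with $g(\mu)\le\mu^2+\varepsilon$ on the relevant range}, where $\varepsilon$ is a small error controlled by $H$. This is the combinatorial substitute for the Rozenman--Vadhan inequality $\lambda(G\circleS H)\le\lambda(G)^2+\lambda(H)(1-\lambda(G)^2)$, and it must be obtained by explicit counting of $H$-correlated two-step walks together with elementary inequalities on string-coded rational vectors, with no appeal to eigenvalues, eigenvectors or Gaussian elimination. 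I would then iterate it inside $\VL$ (a recursion of length $\ell=O(\log n)$): each connected component of a finite regular aperiodic graph on $N$ vertices is $(1-1/\mathrm{poly}(N))$-mixing (a $\VL$-provable combinatorial fact), and $O(\log n)$ rounds of $\circleS$ against the chosen family drive the parameter below $1/n$ while keeping the degree polynomial and the vertex set of polynomial size, so $\VL$ can treat $G_\ell$ as an ordinary string and the recursion is legitimate because a composition of logspace functions is logspace once its intermediate-configuration graph is presented with care. Finally, from $\mu(G_\ell)\le 1/n$ a purely combinatorial argument (formalizable in $\VL$) shows that any two vertices in the same component of $G_\ell$ are joined by a walk of some fixed constant length $c$; since regularization and $\circleS$ preserve connected components, $s$ and $t$ are USTCON-connected in the input iff they lie in the same component of $G_\ell$ iff there is a length-$\le c$ walk from $s$ to $t$ in $G_\ell$. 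Such a walk is named by $c$ edge-labels of $G_\ell$, i.e.\ by $O(\log n)$ bits, and is verified in logspace by ``unwinding'' each $G_\ell$-edge through the $O(\log n)$ rounds down to a never-stored, but step-by-step traceable, walk in $G_0$. This is the logspace USTCON algorithm; I would then verify in $\VL$ that its acceptance predicate is equivalent to USTCON, using the mixing lemma and its iterate, which closes the core and hence proves both halves of the theorem.

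The step I expect to be the main obstacle is the $\VL$-formalization of the mixing lemma for $\circleS$. Rozenman and Vadhan derive their bound by a two-line spectral computation that is simply out of reach here: $\VL$ cannot form the eigendecomposition of a walk operator, and ``the second eigenvalue is small'' is not evidently a logspace property. The remedy, and the genuinely new content of the paper, is to run the entire argument in the combinatorial mixing parameter of \cite{BKKK:Expanders}, proving in $\VL$ the derandomized-squaring bound in that language by tracking how a string-coded rational test vector contracts toward the uniform distribution under one $\circleS$ step, purely through weighted counting of two-step walks. Secondary difficulties — choosing the expander family so that $O(\log n)$ rounds reach inverse-polynomial mixing while the degree stays polynomial, arranging the recursion so that its configuration graph is itself logspace-definable, performing the regularization and component bookkeeping without presupposing that reachability lies in $\L$, and checking the final walk-search — are routine once the definitions are fixed, but they demand care precisely because at no point in a $\VL$ argument may one mention an object that is not logspace-computable.
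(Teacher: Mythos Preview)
Your overall architecture is right --- Rozenman--Vadhan derandomized squaring, explicit expanders from~\cite{BKKK:Expanders}, $O(\log n)$ rounds, then enumerate neighbors in the final graph --- and indeed this is exactly what the paper does. But you have inverted the two difficulties.

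You say the ``main obstacle'' is the mixing lemma for $\circleS$, because Rozenman--Vadhan's ``two-line spectral computation'' is out of reach and must be replaced by something combinatorial. This is a misreading of their proof. Their analysis of $\circleS$ is already stated in terms of the \emph{mixing ratio} $\sup_{\vec v\perp\vecone}\|M\vec v\|/\|\vec v\|$, not in terms of an eigendecomposition; it uses only the decomposition $B=(1-\mu)J+\mu C$ with $\|C\|\le 1$, the factorization $M=P\tilde A(I\otimes B)\tilde A L$, and the triangle inequality. None of this requires finding eigenvectors. The paper formalizes it essentially as written, the only adaptation being to work with $\|\cdot\|^2$ everywhere to avoid square roots (so the Cauchy--Schwarz, Sedrakyan, and triangle inequalities need $\VNC^1$ proofs in squared form). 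So there is no need to invent a new ``combinatorial'' $\circleS$-lemma; the existing one is already tractable.

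Conversely, the step you dismiss as ``a $\VL$-provable combinatorial fact'' --- that a connected $d$-regular aperiodic graph on $N$ vertices has mixing ratio $\le 1-1/\mathrm{poly}(N,d)$ --- is the actual crux. This bound is \emph{not} known to be provable in $\VL$ in general: it is exactly the hard direction of Cheeger's inequality, whose usual proofs pick a level set of an eigenvector. The paper's workaround is to invoke the Cheeger--Mihail inequality, which \cite{BKKK:Expanders} shows is $\VNC^1$-provable, but only for undirected graphs of even degree~$d$ with at least $d/2$ self-loops at every vertex. This constraint is not cosmetic: it forces the regularization stage to add a specific number of extra self-loops (the paper goes to degree~16 with 13 self-loops, so that the auxiliary graph with adjacency matrix $M^\transpose M$ has $\ge d^2/2$ self-loops per vertex), and it propagates into the choice of~$Q$ and the expander family. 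Your proposal mentions adding ``enough self-loops'' for aperiodicity, but you should be explicit that the self-loops are there to license Cheeger--Mihail, and that without this device you have no $\VL$ proof of the base-case mixing bound at all.

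One further point: your plan has the connected-components bookkeeping done after the fact (``regularization and $\circleS$ preserve components''), but the correctness argument in $\VL$ is subtler, because $\VL$ cannot speak of ``the component containing $s$'' until \emph{after} $\SL=\L$ is established. The paper handles this by a universal-traversal argument: the set $U$ of vertices visited by \emph{all} $X_{m_1}$-edge traversals from $s$ is logspace-definable, and one argues (by modifying the graph outside~$U$) that the traversal sequence is already universal for whatever $U$ turns out to be. You should not assume this part is routine.
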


We show that a variation
of the Rozenman-Vadhan proof can be carried out in the second-order
bounded arithmetic theory~$\VL$. Carrying out this proof requires 
making the proof more explicitly constructive; namely,
the $\VL$-proof argues by contradictions proving the existence
of witnesses for 
existential statements. 
The $\VL$ proof (like the Rozenman-Vadhan proof)
works with mixing ratios
instead of second eigenvalues.  In addition,
the $\VL$ proof bypasses the use of the Cheeger inequality 
(which is not known to be provable in~$\VL$), and instead
uses a special case of the Cheeger inequality due
to Mihail~\cite{Mihail:Markov} that is known to be
provable in~$\VL$~\cite{BKKK:Expanders}. We provide more details next.

\paragraph{Our techniques.}
The proof by Rozenman and Vadhan~\cite{RozenmanVadhan:DerandSquaring} works mainly with the concept of the mixing ratio (see \Cref{def:mixingratio} below) that measures how fast a random walk on a given undirected regular graph~$G$ converges to the uniform distribution. Even though the mixing ratio happens to be equal to the second largest eigenvalue for the adjacency matrix of the graph $G$, this fact is \emph{not} used for most of the proof in \cite{RozenmanVadhan:DerandSquaring}. 

In particular, the analysis of the derandomized squaring operation on graphs is done purely in terms of the mixing ratios of the input graphs, without using the equivalence between the mixing ratios and second largest eignevalues. As we show, the basic linear algebra that is used in the analysis of derandomized squaring is simple enough to be proved in $\VL$. 

The only place where \cite{RozenmanVadhan:DerandSquaring} seems to rely on the use of eignevalues in their proof is to show that every connected graph must have a nontrivial mixing ratio (at most $1-1/\poly(n,d)$, where $n$ is the number of vertices, and $d$ the degree of a given graph). Such a bound on the mixing ratio of a connected graph can be easily deduced by Cheeger's Inequality that relates the mixing ratio to the edge expansion of a graph. It is elemenentary to show (also in $\VL$) that any connected graph has nontrivial edge expansion. Thus, if $\VL$ could prove Cheeger's inequality, we could prove this step of the Rozenman-Vadhan argument in $\VL$ as well. 

Unfortunately, it is not known if $\VL$ can prove Cheeger's inequality in general. However, a version of Cheeger's inequality (for undirected regular graphs with enough self-loops around each vertex) can be proved in~$\VL$! This was done in \cite{BKKK:Expanders}, based on \cite{Mihail:Markov}; the approach in \cite{BKKK:Expanders} was to use the concept of \emph{edge expansion} to analyze graph operations (like replacement product, powering, and tensoring), and the results of \cite{Mihail:Markov} were used to relate edge expansion and mixing ratio, via a version of Cheeger's inequality. This version of the Cheeger-Mihail inequality can be used to get the required bound on the mixing ratio of any given regular graph, placing this part of the Rozenman-Vadhan proof in $\VL$.

But we are not done yet. The Rozenman-Vadhan argument also needs a sequence of explict expander graphs, of growing size and degree, to use in the successive applications of the derandomized squaring operation. We need such expanders graphs to be constructible in $\VL$. Luckily, we can again use the results of \cite{BKKK:Expanders} to argue that the required expander graphs can be constructed (and their expansion properties proved) in $\VL$.

Finally, we need to show in $\VL$ how to use the Rozenman-Vadhan transformation of a given input graph~$G$ into an expander~$\tilde{G}$ in order to solve the connectivity problem for $G$. Given a simple recursive construction of $\tilde{G}$ from~$G$ (obtained by repeatedly applying the derandomized squaring to the previous graph, using an appropriate expander as an auxiliary graph), we can show that $\tilde{G}$ is definable from~$G$ in~$\VL$. This allows us to complete the proof of $\SL=\L$ in~$\VL$.

We note that it may be possible to formalize Reingold's proof of $\SL=\L$ in $\VL$ as well, with some extra work. The original proof in \cite{Reingold:UndirectedConn_STOC} relied on the complicated analysis of the zig-zag product from the famous paper by Reingold, Vadhan, and Wigderson~\cite{wigderson2002entropy}, which used eigenvalues. However, based on \cite{RozenmanVadhan:DerandSquaring}, Reingold, Trevisan and Vadhan~\cite{DBLP:conf/stoc/ReingoldTV06} later gave a simpler analysis of the zig-zag product in terms of the mixing ratios (without using eigenvalues), which is sufficient for Reingold's proof of $\SL=\L$. This simpler analysis appears to be formalizable in~$\VL$ (as it is very similar to the Rozenman-Vadhan analysis of derandomized squaring, which we show in this paper to be formalizable in~$\VL$). The other ingredient in Reingold's proof, graph powering, can be easily analyzed in terms of mixing ratios as well. As above, we can use the Cheeger-Mihail inequality to show that any connected regular graph has a nontrivial mixing ratio, to start off Reingold's logspace transformation of a graph $G$ into an expander~$G'$. A somewhat tricky part in Reingold's proof is to argue that $G'$ can be used to answer the connectivity question for~$G$, by a \emph{logspace} algorithm. This part seems more complicated that the corresponding part in the Rozenman-Vadhan proof of $\SL=\L$, and may be challenging to implement in $\VL$, but probably could be. However, we do not pursue this approach in the present paper.     


\paragraph{Related work.}
Several early authors discussed the undirected graph reachability problem,
including \cite{JLL:newproblems,Schaefer:complexitySAT,AKLLR:RandomWalks}, but
the first systematic study of symmetric computation, including the definition
of~$\SL$, was due to
Lewis and Papadimitriou~\cite{LewisPapadimitriou:symmetriccomputation}
in 1982.
Savitch's theorem implies that $\SL$ is in $\SPACE(\log^2 n)$.
Aleliunas et~al.~\cite{AKLLR:RandomWalks}
and Borodin~\cite{BCDRT:InductiveCounting} gave \emph{randomized}
logspace algorithms for USTCON.
Nisan, Szemer\'edi, and Wigderson~\cite{NSW:USTCON} gave a
deterministic $\SPACE(\log^{1.5} n)$ algorithm;
Armoni et~al.~\cite{ATSWZ:USTCON} improved this to $\SPACE(\log^{4/3} n)$
and Trifonov~\cite{Trifonov:USTCON} even
more dramatically to $\SPACE(\log n \log\log n)$.

There have been several
bounded arithmetic theories proposed for logspace and symmetric logspace.
Clote and Takeuti~\cite{CloteTakeuti:NC} give the first theory for
logspace using a second-order theory~$\mathrm{S^{log}}$ of bounded arithmetic.
Zambella~\cite{Zambella:logspace} gave a
second, and more elegant, theory of bounded arithmetic,
based on a second-order theory of bounded arithmetic
including the $\Sigma^B_0$-rec axioms; the
$\Sigma^B_0$-rec axioms state
that there are second-order objects encoding polynomially
long paths in directed graphs of out-degree one.
Cook and Nguyen~\cite{CookNguyen:book} defined
an equivalent second-order theory~$\VL$ for logspace computation
with a reformulated version of the $\Sigma^B_0$-rec axioms.

Derandomized squaring has attracted renewed attention recently in the context of space-bounded derandomization; see, e.g.,  \cite{MRSV21, DBLP:conf/focs/AhmadinejadKMPS20,DBLP:conf/focs/AhmadinejadPPSV23, DBLP:conf/focs/0001H0TW23, DBLP:conf/innovations/Cohen0MP25}.

\paragraph{Remainder of the paper.}
Section~\ref{sec:Prelims} introduces the needed preliminaries
for directed and undirected graphs, adjacency matrices,
vectors and tensors, edge expansion, and derandomized squaring.
Section~\ref{sec:LSLproof} gives the details of
the Rozenman-Vadhan proof of $\L=\SL$, both in its orginal
form and with a second version of the proof that will later be shown to
be formalizable in~$\VL$. 
Sections \ref{sec:Prelims} and~\ref{sec:LSLproof}
can be read independently of the rest of the paper for a self-contained
exposition of the proof of $\L=\SL$ that does not require
any knowledge of bounded arithmetic.  
Section~\ref{sec:FormalizeInVL}
starts with preliminaries for the bounded arithmetic theory~$\VL$,
and reviews the needed results of~\cite{BKKK:Expanders} about
edge expanders in~$\VNC^1$, including a ``Cheeger-Mihail lemma''
that can be formalized in $\VNC^1$ in lieu of the full Cheeger lemma.
It then sketches how to formalize the
proof of $\L=\SL$ in $\VL$. It follows that
$\VL$ and $\VSL$ are equal (\Cref{thm:VSLisVL}). Section~\ref{sec:SedrakyanCheeger}
proves that the Sedrakyan lemma and one direction of
the Cheeger Inequality
are provable in~$\VNC^1$ and thereby in~$\VL$.
The former is needed for the $\VL$ proof
described in Section~\ref{sec:FormalizeInVL}.

\section{Preliminaries}\label{sec:Prelims}

\subsection{Graphs and expansion}

A good source on expander graphs is \cite{hoory06}. 
This paper will consider both directed and undirected graphs. The statement
of $\L=\SL$ uses undirected graphs, and the expander graph constructions
of~\cite{BKKK:Expanders} are stated in terms of undirected graphs.
However, following Rozenman-Vadhan~\cite{RozenmanVadhan:DerandSquaring}, our proofs
depend on directed graphs.

An undirected graph is represented as
$G = (V,E)$ where $V$~is a set of vertices, and $E$~is a multiset of
{\em edges} $\{u,v\}$ with $u,v \in V$.  It is allowed for $G$ to
have multiple (``parallel'') edges between $u$ and~$v$; i.e., $E$~may contain multiple copies
of $\{u,v\}$. Self-loops are allowed; namely, it is permitted
that $u=v$. The degree of a vertex $v\in G$ is the number of incident edges.
(A self-loop counts as a single incident edge.) A directed graph is \emph{$k$-regular}
if each vertex has degree~$k$.

A directed graph $G=(V,E)$ has
$E$ as a multiset of directed edges $\langle u,v \rangle$ where $u,v \in G$.
It is again permitted that a directed graph may contain self-loops
as well as multiple directed edges
from $u$ to~$v$.  A directed graph~$G$ is \emph{$k$-inregular},
respectively \emph{$k$-outregular}, if each vertex has indegree (resp.,
outdegree) equal to~$k$.  The graph is \emph{$k$-regular} if it is both
$k$-inregular and $k$-outregular.

An undirected graph $G=(V,E)$ can be converted into a directed
graph $G^\prime = (V, E^\prime)$ by replacing each
non-self-loop edge $\{u,v\}$ with the pair of directed edges $\langle u, v\rangle$
and $\langle v, u \rangle$ and replacing each self-loop $\{u,u\}$ with
a directed self-loop $\langle u,u \rangle$. If $G$ is $k$-regular, then
so is~$G^\prime$.

We define edge expansion for both undirected and directed graphs $G=(V,E)$.
We say there is an edge from $u$ to~$v$ in~$G$ provided
that $E$ contains $\{ u, v \}$ if $G$ is undirected or that
$E$ contains $\langle u ,v \rangle$ if $G$ is directed.
For $U\subseteq V$, let $\overline U$~be $V\setminus U$. Then $E(U,\overline U)$
is the multiset of edges from $U$ to~$\overline U$, namely the
edges from a vertex $u\in U$ to a vertex $v \notin U$.

\begin{definition}[Edge expansion]\label{def:edgeExpansion}
Let $G$ be a a $d$-regular graph on $n$ vertices, either directed or undirected.
The \emph{edge expansion} of~$G$ is defined as:
\begin{equation}\label{eq:edgeExpansionDef}
\min_{\genfrac{}{}{0pt}{1}{\emptyset\ne U\subset V}{|U|\le n/2}}
     \frac{|E(U,\overline{U})|}{d \cdot |U|}
~=~
\min_{\emptyset\ne U\subsetneq V}\frac{|E(U,\overline{U}|)}{d \cdot \min\{|U|,|\overline{U}|\}}
\end{equation}
\end{definition}

Note that if an undirected graph with edge expansion~$\epsilon$ is
converted to a directed graph, it still has edge expansion~$\epsilon$.
We often omit stating whether a graph is directed or undirected, when the
results hold in both cases.  A directed graph is connected provided any two
vertices are connected by a directed path.

The developments in Sections \ref{sec:Intro} and~\ref{sec:RVproof}
follow the treatment in~\cite{RozenmanVadhan:DerandSquaring}
closely. We will take care to make the details of
the proofs sufficiently clear in order to later show how they
can be formalized in~$\VL$.

\begin{theorem}\label{thm:connected}
Let $G$ be a connected $d$-regular graph on $n$ vertices. Then the edge
expansion of~$G$ is at least $2/(dn)$.
\end{theorem}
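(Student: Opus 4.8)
The plan is to bound below the quantity $|E(U,\overline U)|$ for every nonempty $U \subsetneq V$ with $|U| \le n/2$, and then divide by $d\cdot|U|$. The key observation is that in a connected graph, no nonempty proper vertex subset can be ``isolated'': there must be at least one edge crossing the cut $(U,\overline U)$. So first I would argue that $|E(U,\overline U)| \ge 1$ for every such $U$. This follows from connectedness: pick any vertex $u \in U$ and any vertex $v \in \overline U$ (both sets are nonempty), take a directed path from $u$ to $v$ guaranteed by connectedness, and walk along it; since the path starts in $U$ and ends in $\overline U$, some edge of the path goes from a vertex in $U$ to a vertex in $\overline U$, which is exactly an edge counted by $E(U,\overline U)$. (For the directed-graph version one should note that connectedness gives a directed path in \emph{either} direction, and for undirected graphs the same walk argument applies directly; in the $d$-regular case the in- and out-cuts have the same size anyway.)

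Next, combine this with the size bound. For $U$ with $|U| \le n/2$ we have $\min\{|U|,|\overline U|\} = |U|$, and $|U| \le n/2 < n$, so
\begin{equation*}
\frac{|E(U,\overline U)|}{d\cdot|U|} \;\ge\; \frac{1}{d\cdot|U|} \;\ge\; \frac{1}{d\cdot(n/2)} \;=\; \frac{2}{dn}.
\end{equation*}
Taking the minimum over all such $U$ gives that the edge expansion of $G$ is at least $2/(dn)$, as claimed.

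I do not expect any real obstacle here; the argument is elementary. The only point requiring a little care — and the one presumably emphasized because the paper later formalizes everything in $\VL$ — is the step extracting a crossing edge from a connecting path: one must make sure this is done constructively and that ``connected'' is unpacked in a way that yields an actual path object (or at least the existence of an index along the path where the side changes), rather than appealing to any non-constructive reachability fact. Concretely, given a path $u = w_0, w_1, \dots, w_m = v$ with $w_0 \in U$ and $w_m \in \overline U$, let $i$ be the least index with $w_{i+1} \in \overline U$; then $w_i \in U$ and $\langle w_i, w_{i+1}\rangle \in E$ is the desired crossing edge. This ``least index where membership flips'' is the kind of minimization that is unproblematic even in weak theories, so the whole proof goes through cleanly.
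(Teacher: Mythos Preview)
Your proof is correct and follows exactly the same approach as the paper: both arguments observe that connectedness forces $E(U,\overline U)$ to be nonempty, and then bound $1/(d\cdot|U|)\ge 2/(dn)$ using $|U|\le n/2$. The paper's version is simply the one-line remark that $E(U,\overline U)$ must be nonempty; your extra details about extracting a crossing edge from a path and the formalization concerns are perfectly fine elaborations but not needed for the informal statement.
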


\begin{proof}
This follows immediately from the definitions, since $E(U,\overline U)$ must
be nonempty.
\end{proof}

Let $G=(V,E)$ be a $d$-regular graph on $n=|V|$ many vertices.
When convenient, the vertices are identified with the integers
$[n] = \{1,\ldots, n\}$. The (normalized)
\emph{adjacency matrix} for~$G$ is the $n\times n$ matrix~$M$
with entries $M_{i,j}$ equal to the number of edges in~$E$ from
vertex~$j$ to vertex~$i$ divided by~$d$. All entries in~$M$ are
non-negative, and since $G$ is $d$-regular,
each row sum and column sum of~$G$ is equal to~$1$. The adjacency
matrix can be viewed as a transition matrix for a random walk in the
graph. Namely, if a $n$-vector~$\vec v$ represents a probability
distribution on the vertices~$V$, then $M \vec v$ gives the probability
distribution obtained after one step of a random walk starting
from the distribution~$\vec v$.

We use the term ``adjacency matrix'' for the just-defined
normalized version of the adjacency
matrix.  The {\em unnormalized adjacency matrix} for $d$-regular graph~$G$
is just $d\cdot M$: the $(i,j)$ entry in the unnormalized adjacency matrix
is the number of edges from $j$ to~$i$.

For $\vec v = \langle v_1, \ldots, v_n\rangle$ a vector,
$\|\vec v \|$ denotes the 2-norm $\bigl(\sum_i v_i^2\bigr)^{1/2}$.
We use $\langle \vec v, \vec w \rangle$ for the inner product
of two vectors, namely $\sum_i v_i w_i$.  We write
$\vec v \perp \vec w$ to denote that $\vec v$ and~$\vec w$
are orthogonal, i.e., that $\langle \vec v, \vec w \rangle = 0$.
We call $\vec v$ a \emph{(probability) distribution} if $\sum_i v_i = 1$
and each $v_i\ge 0$.
The vector~$\vecone$ is the $n$-vector
$\langle 1/n, \ldots 1/n\rangle$ corresponding to
to the uniform distribution on the vertices~$V$.
Note that $\|\vecone\| = 1/{\sqrt n}$. The zero vector
is denoted~$\veczero$.

The next theorem follows immediately from the
fact that the adjacency matrix has non-negative
entries and all of $M$'s row and column sums
equal~1.

\begin{theorem}\label{thm:adjacencyTrivial}
Let $M$ be the adjacency matrix of a regular graph.
\begin{description}
\setlength{\itemsep}{0pt}
\setlength{\topsep}{0pt}
\item[\rm (a)] If $\vec v$ is a probability distribution,
then so is~$M \vec v$.
\item[\rm (b)] $M \vecone = \vecone$.
\item[\rm (c)] If $\vec v \perp \vecone$, then also
$M\vec v \perp \vecone$.
\end{description}
\end{theorem}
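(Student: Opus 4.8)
The plan is to prove all three parts by direct computation from the definition of the normalized adjacency matrix~$M$, relying only on the two elementary facts recorded just before the theorem: all entries of $M$ are nonnegative, and (because the underlying graph is regular, hence both in-regular and out-regular in the directed case) every row sum and every column sum of $M$ equals~$1$. The only bookkeeping point is to keep track of which of these two normalizations is used where: column sums for parts~(a) and~(c), row sums for part~(b).

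For~(a): nonnegativity of $M\vec v$ is immediate, since each coordinate $(M\vec v)_i = \sum_j M_{i,j} v_j$ is a sum of products of nonnegatives; and $\sum_i (M\vec v)_i = 1$ follows by exchanging the order of summation and using that each column of $M$ sums to~$1$, so $M\vec v$ is again a distribution. For~(b): each coordinate of $M\vecone$ is $1/n$ times a row sum of $M$, hence equals $1/n$, giving $M\vecone = \vecone$. For~(c): first rewrite the hypothesis $\vec v \perp \vecone$ as $\sum_i v_i = 0$ (using $\langle \vec v,\vecone\rangle = (1/n)\sum_i v_i$), then compute $\langle M\vec v,\vecone\rangle$ by again exchanging the order of summation and invoking the column-sum property; equivalently, observe that $M^{\transpose}\vecone = \vecone$ (the column sums of $M$ being the row sums of $M^{\transpose}$) and use $\langle M\vec v,\vecone\rangle = \langle \vec v, M^{\transpose}\vecone\rangle$.

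I expect no genuine obstacle here: each part is a one- or two-line calculation over finite sums of rationals. The only subtlety worth flagging is that, when $G$ is directed, ``regular'' must be read as both $k$-inregular and $k$-outregular, and it is precisely this two-sided regularity that supplies both the row-sum and the column-sum normalizations used above (a merely out-regular digraph would still satisfy~(a) but not necessarily~(b)). Since the argument uses nothing beyond elementary manipulation of finite sums, it will also carry over directly to the $\VL$ formalization undertaken later in the paper.
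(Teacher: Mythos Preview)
Your proposal is correct and matches the paper's approach exactly: the paper does not give a detailed proof, stating only that the theorem ``follows immediately from the fact that the adjacency matrix has non-negative entries and all of $M$'s row and column sums equal~1,'' which is precisely the route you take. Your explicit computations and the remark about needing two-sided regularity in the directed case simply spell out what the paper leaves implicit.
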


\begin{lemma}[Sedrakyan's Lemma]\label{lem:Sedrakyan}
Suppose $u_i, v_i \in \mathbb R$ and $v_i > 0$ for $1\le i \le n$.
Then
\[
\sum_i \frac {u_i^2}{v_i} ~\ge~ \frac {\bigl(\sum_i u_i\bigr)^2}{\sum_i v_i}.
\]
\end{lemma}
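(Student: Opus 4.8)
The plan is to establish the inequality first for $n=2$ and then bootstrap to arbitrary~$n$ by induction. For $n=2$, with $v_1,v_2>0$, the claim $\frac{u_1^2}{v_1}+\frac{u_2^2}{v_2}\ge\frac{(u_1+u_2)^2}{v_1+v_2}$ can be cleared of denominators (all of $v_1$, $v_2$, and $v_1+v_2$ are positive) and, after expanding and cancelling, is seen to be equivalent to $(u_1v_2-u_2v_1)^2\ge 0$, which is immediate; note that no hypothesis on the signs of $u_1,u_2$ is needed.

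For the general case I would induct on~$n$; the case $n=1$ is an equality. For the inductive step, set $u'=\sum_{i<n}u_i$ and $v'=\sum_{i<n}v_i$, noting that $v'>0$. The inductive hypothesis gives $\sum_{i<n}\tfrac{u_i^2}{v_i}\ge\tfrac{(u')^2}{v'}$, and then the $n=2$ case applied to the two ``lumped'' terms $(u',v')$ and $(u_n,v_n)$ yields
\[
\sum_{i\le n}\frac{u_i^2}{v_i}\ \ge\ \frac{(u')^2}{v'}+\frac{u_n^2}{v_n}\ \ge\ \frac{(u'+u_n)^2}{v'+v_n}\ =\ \frac{\bigl(\sum_{i\le n}u_i\bigr)^2}{\sum_{i\le n}v_i},
\]
completing the induction.

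An alternative one-line argument is to invoke the Cauchy--Schwarz inequality with the vectors $\langle u_i/\sqrt{v_i}\rangle_i$ and $\langle \sqrt{v_i}\rangle_i$, which gives $\bigl(\sum_i u_i\bigr)^2=\bigl(\sum_i \tfrac{u_i}{\sqrt{v_i}}\cdot\sqrt{v_i}\bigr)^2\le\bigl(\sum_i \tfrac{u_i^2}{v_i}\bigr)\bigl(\sum_i v_i\bigr)$; one then divides by $\sum_i v_i>0$. Mathematically there is no real obstacle — the statement is elementary, and the only essential uses of the hypothesis $v_i>0$ are for the divisions and for keeping $v'>0$ in the induction. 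In the context of this paper the actual work, deferred to Section~\ref{sec:SedrakyanCheeger}, is to check that one of these arguments goes through in $\VNC^1$ (hence in $\VL$): the induction over partial sums, or equivalently the Cauchy--Schwarz estimate, must be arranged so that the $\Sigma^B_0$-induction and iterated-summation facts available in $\VNC^1$ suffice, but that is a formalization concern rather than a mathematical one.
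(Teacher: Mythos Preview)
Your proposal is correct. The paper's proof of Lemma~\ref{lem:Sedrakyan} is precisely your ``alternative one-line argument'' via Cauchy--Schwarz with the vectors $\langle u_i/\sqrt{v_i}\rangle_i$ and $\langle\sqrt{v_i}\rangle_i$, so on that count you match exactly. Your primary route (the $n=2$ case reduced to $(u_1v_2-u_2v_1)^2\ge 0$, then induction) is a genuinely different, more elementary argument that avoids square roots altogether; interestingly, the paper's later $\VNC^1$ proof in Section~\ref{sec:Sedrakyan} (Theorem~\ref{thm:SedrakyanVNC1}) is in the same spirit but done for general~$n$ in one shot rather than by induction: it clears denominators and expands directly to show the inequality is equivalent to $0\le\sum_{i<j}(u_iv_j-u_jv_i)^2/(v_iv_j)$. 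That direct expansion sidesteps both the square roots of the Cauchy--Schwarz version and the need for an inductive loop, which is convenient for the formalization.
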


\begin{proof}
This follows from the Cauchy-Schwarz inequality
$\langle \vec u', \vec v' \rangle^2 \le \|\vec u'\|^2\cdot \|\vec v'\|^2$
for the vectors $\vec u'$ and~$\vec v'$ with entries
$u'_i = u_i/\sqrt{v_i}$ and $v'_i = \sqrt{v_i}$.
\end{proof}

The \emph{norm} of a matrix~$M$, denoted $\| M\|$, is the least $\alpha$ such that
$\|M \vec v\| \le \alpha \|\vec v\|$ for all~$\vec v$.

\begin{theorem}\label{thm:normM}
Let $M$ be the adjacency matrix of a regular graph.  Then
$\|M\| = 1$.
\end{theorem}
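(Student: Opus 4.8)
The plan is to prove the two inequalities $\|M\|\ge 1$ and $\|M\|\le 1$ separately. The lower bound is immediate from \Cref{thm:adjacencyTrivial}(b): since $M\vecone = \vecone$ and $\vecone\ne\veczero$, any $\alpha$ witnessing the defining property of $\|M\|$ must satisfy $\|\vecone\| = \|M\vecone\|\le \alpha\|\vecone\|$, hence $\alpha\ge 1$; therefore $\|M\|\ge 1$.

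For the upper bound, the key observation is that $M$ is doubly stochastic: its entries are non-negative and, because $G$ is $d$-regular, every row sum and every column sum of $M$ equals $1$ (the $i$-th row sums to $\mathrm{indeg}(i)/d = 1$ and the $j$-th column to $\mathrm{outdeg}(j)/d = 1$). Fix an arbitrary vector $\vec v = \langle v_1,\dots,v_n\rangle$ and set $\vec w = M\vec v$, so $w_i = \sum_j M_{i,j} v_j$. Applying the Cauchy--Schwarz inequality to the vectors with entries $\sqrt{M_{i,j}}$ and $\sqrt{M_{i,j}}\,v_j$, and using $\sum_j M_{i,j} = 1$, gives $w_i^2\le \sum_j M_{i,j} v_j^2$ for each $i$. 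Summing over $i$ and interchanging the order of summation,
\[
\|M\vec v\|^2 ~=~ \sum_i w_i^2 ~\le~ \sum_i \sum_j M_{i,j} v_j^2 ~=~ \sum_j v_j^2\Bigl(\sum_i M_{i,j}\Bigr) ~=~ \sum_j v_j^2 ~=~ \|\vec v\|^2 ,
\]
where the penultimate equality uses that each column sum of $M$ is $1$. Thus $\|M\vec v\|\le\|\vec v\|$ for every $\vec v$, so $\alpha = 1$ satisfies the defining condition and $\|M\|\le 1$. Combining the two bounds yields $\|M\| = 1$.

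I do not expect any genuine obstacle here; the only care needed is bookkeeping — the per-coordinate estimate consumes the row sums of $M$ while the final summation consumes the column sums — and recognizing that the step $w_i^2\le\sum_j M_{i,j}v_j^2$ is just the weighted Cauchy--Schwarz/power-mean inequality (one could instead invoke \Cref{lem:Sedrakyan} with $u_j = M_{i,j}v_j$ on the support of the $i$-th row, or plain convexity of $t\mapsto t^2$). Every step is an elementary manipulation of finite sums, which is exactly what is wanted for later replaying the argument inside $\VL$.
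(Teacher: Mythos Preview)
Your proof is correct and essentially identical to the paper's: both obtain the lower bound from $M\vecone=\vecone$ and the upper bound from the per-row inequality $(\sum_j M_{i,j}v_j)^2\le \sum_j M_{i,j}v_j^2$ followed by summing over~$i$ and using the column sums. The only cosmetic difference is that the paper phrases the per-row step as an application of Sedrakyan's lemma (Lemma~\ref{lem:Sedrakyan}) rather than Cauchy--Schwarz with the vectors $\sqrt{M_{i,j}}$ and $\sqrt{M_{i,j}}\,v_j$, and it prefaces the computation with an (actually unnecessary) reduction to the case $\vec v\perp\vecone$; you already note these are equivalent formulations.
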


\begin{proof}
By part (b) of Theorem~\ref{thm:adjacencyTrivial}, it suffices
to prove that $\|M \vec v\| \le\penalty10000 \|\vec v\|$ if $\vec v \perp \vecone$.  So suppose
$\vec v \perp \vecone$.
Then
\begin{align*}
\|M \vec v\|^2 ~&=~ \sum\nolimits_i\,\bigl( \sum\nolimits_j M_{i,j} v_j \bigr) ^2 \\
  &\le~ \sum\nolimits_i\,
  \sum\nolimits_{j \colon M_{i,j}\neq 0}
      \frac{(M_{i,j} v_j)^2}{M_{i,j}}
        \tag{By Sedrakyan's lemma  since $\sum_j M_{i,j} = 1$} \\
  &=~ \sum\nolimits_i\,\sum\nolimits_j M_{i,j} v_j^2 \\
  &=~ \sum\nolimits_j\,\bigl( \sum\nolimits_i M_{i,j} \bigr) v_j^2 ~=~ \sum\nolimits_j v_j^2 ~=~ \|\vec v\|^2.
  \qedhere
\end{align*}
\end{proof}

\begin{definition}[Mixing ratio]\label{def:mixingratio}
Let $G$ be a $k$-regular graph (directed or undirected) with
adjacency matrix~$M$. Let $\eta \ge 0$.  The graph~$G$
has \emph{mixing ratio}~$\eta$ provided that
\[
\| M \vec v\| ~\le \eta \cdot \|\vec v\|
\]
holds for all $\vec v \perp \vecone$.  The minimum such~$\eta$
is called \emph{the mixing ratio} of~$G$.
\end{definition}
\noindent
When we say ``$G$~has mixing ratio~$\eta$'', we mean that the
mixing ratio of~$G$ is $\le \eta$.

By Theorem~\ref{thm:normM}, the mixing ratio~$\eta$ is $\le 1$.
It is easy to see that the mixing ratio of~$G$ is equal to
the second largest, in absolute value, eigenvalue of its adjacency matrix $M$. However,
Definition~\ref{def:mixingratio}
defines the mixing ratio without referring to eigenvalues or
eigenvectors. This has the advantage that the bounded arithmetic
theory~$\VL$ can work with the concept of mixing ratio, even though
$\VL$ is not known to be able to prove properties about
eigenvalues and eigenvectors.

\begin{definition}
A graph $G$ is a $(n,d,\eta)$-graph if
it has $n$~vertices, is $d$-regular,
and has mixing ratio (at most)~$\eta$.
\end{definition}

We let $J_n$ denote the $n\times n$ matrix with
all entries equal to~$1/n$. Note that $J_n$
is the adjacency matrix for the complete graph on
$n$~vertices; equivalently, $J_n$~is the transition matrix
for a random walk on $n$ vertices (on the complete graph).

\begin{theorem}\label{thm:JCdecomp}
Let $M$ be the adjacency matrix of an $(n,d,\eta)$-graph
with $0<\penalty10000 \eta\le\penalty10000 1$.
Then
\begin{itemize}
\item[\rm (a)] \cite{RozenmanVadhan:DerandSquaring} $M$ is equal to $(1-\eta) J_n + \eta C$ where
$\|C\|\le 1$. 
\item[\rm (b)] $M$ is equal to $J_n + \eta D$ where $\|D\| \le 1$ and
all of the row and column sums
of~$D$ are equal to zero.
\item[\rm (c)] $\|M\| = \|J_n\| = \|J_n+D\| = 1$.
\end{itemize}
\end{theorem}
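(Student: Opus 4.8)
The plan is to work throughout with the orthogonal decomposition of a vector relative to~$\vecone$. Given any $\vec w$, set $s = \sum_i w_i$ and write $\vec w = \vec w_\parallel + \vec w_\perp$ with $\vec w_\parallel = s\vecone$ and $\vec w_\perp = \vec w - s\vecone$; then $\vec w_\perp \perp \vecone$ and $\|\vec w\|^2 = \|\vec w_\parallel\|^2 + \|\vec w_\perp\|^2$ by the Pythagorean theorem, with $\|\vec w_\parallel\|^2 = s^2/n$. The first thing to record is that $J_n$ \emph{is} orthogonal projection onto the line through~$\vecone$: since $(J_n\vec w)_i = \frac1n\sum_j w_j = s/n$ for every~$i$, we get $J_n\vec w = \vec w_\parallel$, and in particular $J_n\vecone = \vecone$ and $J_n\vec w_\perp = \veczero$. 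The second thing to record combines Theorem~\ref{thm:adjacencyTrivial}(b),(c) with the definition of the mixing ratio: $M\vec w = s\vecone + M\vec w_\perp$, where $M\vec w_\perp \perp \vecone$ and $\|M\vec w_\perp\| \le \eta\|\vec w_\perp\|$.

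For part~(a) I would set $C = \frac1\eta\bigl(M - (1-\eta)J_n\bigr)$, which is legitimate since $\eta>0$, so that $M = (1-\eta)J_n + \eta C$ by construction. Applying the two observations to an arbitrary~$\vec w$,
\[
C\vec w = \tfrac1\eta\bigl(s\vecone + M\vec w_\perp - (1-\eta)s\vecone\bigr) = s\vecone + \tfrac1\eta M\vec w_\perp,
\]
and this is precisely the orthogonal decomposition of $C\vec w$ (the first summand on the line through~$\vecone$, the second orthogonal to it). Hence by the Pythagorean theorem and the mixing-ratio bound,
\[
\|C\vec w\|^2 = \|s\vecone\|^2 + \tfrac1{\eta^2}\|M\vec w_\perp\|^2 \le \|\vec w_\parallel\|^2 + \|\vec w_\perp\|^2 = \|\vec w\|^2,
\]
so $\|C\| \le 1$.

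For part~(b) I would set $D = \frac1\eta(M - J_n)$, so $M = J_n + \eta D$ (equivalently $D = C - J_n$). The same computation gives $D\vec w = \frac1\eta M\vec w_\perp$, which is orthogonal to~$\vecone$ and has norm $\le \|\vec w_\perp\| \le \|\vec w\|$, so $\|D\| \le 1$. The row and column sums of~$D$ vanish because, by $d$-regularity, every row sum and column sum of the normalized adjacency matrix~$M$ equals~$1$, and the same holds for~$J_n$, so $M - J_n$ has all row sums and all column sums equal to~$0$.

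For part~(c): $\|M\| = 1$ is exactly Theorem~\ref{thm:normM}; $\|J_n\| = 1$ follows from the same theorem (viewing $J_n$ as the adjacency matrix of the complete graph with a self-loop at every vertex) or directly, since $\|J_n\vec w\| = \|\vec w_\parallel\| \le \|\vec w\|$ with equality at $\vec w = \vecone$. Finally $\|J_n + D\| = \|C\| \le 1$ by part~(a) (using $C = J_n + D$), and the reverse inequality holds because plugging $\vec w = \vecone$ (so $s=1$ and $\vec w_\perp = \veczero$) into the formula for $C\vec w$ gives $C\vecone = \vecone$, whence $\|C\| \ge 1$. I do not anticipate a real obstacle here: the only care needed is to keep the parallel/perpendicular bookkeeping consistent and to note that $J_n$ acts as orthogonal projection onto~$\vecone$. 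Everything invoked — the Pythagorean identity, Theorems~\ref{thm:adjacencyTrivial} and~\ref{thm:normM}, and the definition of the mixing ratio — is elementary, which is also why this lemma should go through in~$\VL$.
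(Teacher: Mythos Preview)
Your proof is correct and takes essentially the same approach as the paper: both define $C = (M-(1-\eta)J_n)/\eta$ and $D = (M-J_n)/\eta$ and analyze them via the decomposition into the $\vecone$-direction and its orthogonal complement. The paper handles this by treating the cases $\vec v = \vecone$ and $\vec v \perp \vecone$ separately, while you package the same computation using the explicit split $\vec w = \vec w_\parallel + \vec w_\perp$; your treatment of parts (b) and (c) is a bit more detailed than the paper's terse ``Parts (b) and (c) follow,'' but the substance is identical.
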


\begin{proof}
We let $C = (M - (1-\eta)J_n)/\eta$. By Theorem~\ref{thm:adjacencyTrivial}, 
$M\vecone = J_n \vecone = \vecone$. Hence $C \vecone = \vecone$.
Now suppose $\vec v \perp \vecone$, so $J_n \vec v = \veczero$
and therefore $C \vec v = M \vec v / \eta$.
By hypothesis, $\|M \vec v\| \le \eta \|\vec v \|$ and by Theorem~\ref{thm:adjacencyTrivial},
$M \vec v \perp \vecone$. It follows that
$C \vec v \perp \vecone$ and $\| C \vec v\| \le \|\vec v\|$.
That proves~(a).

Now let $D = (M - J_n)/\eta = C-J_n$. From the argument for part~(a),
we have that $D \vecone = \veczero$ and that for all
$\vec v \perp \vecone$, we have $D \vec v \perp \vecone$ and
$\|D \vec v\| \le \|\vec v\|$.  Parts (b) and~(c) follow.
\end{proof}

\begin{theorem}\label{thm:connectedBis}
Let $G$ be a connected, $d$-regular, undirected graph on $n$ vertices.
The mixing ratio~$\eta$ of~$G$ is
at most $1 - 2/(dn)^2$.
\end{theorem}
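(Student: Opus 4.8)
The plan is to read this off from two ingredients, only one of which is substantial. First, Theorem~\ref{thm:connected} already gives that a connected $d$-regular graph on $n$ vertices has edge expansion at least $2/(dn)$. Second, the ``hard'' direction of the Cheeger inequality --- in the eigenvalue-free form due to Mihail~\cite{Mihail:Markov}, which holds for regular graphs carrying a self-loop at every vertex --- gives a bound of the shape $\eta\le 1-\epsilon^2/2$ for a $d$-regular graph of edge expansion $\epsilon$ (one natural way to package it is $\|M\vec v\|^2\le(1-\epsilon^2)\|\vec v\|^2$ for $\vec v\perp\vecone$). Combining the two with $\epsilon\ge 2/(dn)$ gives $\eta\le 1-\tfrac12(2/(dn))^2=1-2/(dn)^2$, which is precisely the stated bound. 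The point of routing through edge expansion rather than arguing spectrally is that Theorem~\ref{thm:connected} is trivial and, as stressed in the introduction, Mihail's inequality is exactly the kind of Cheeger statement that $\VL$ can reproduce (it is re-proved in Section~\ref{sec:SedrakyanCheeger}), so the whole argument stays eigenvalue-free.

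The work is thus in the Cheeger--Mihail bound, which I would prove by the standard level-set (``sweep cut'') argument, but with Sedrakyan's Lemma (Lemma~\ref{lem:Sedrakyan}) doing the job Cauchy--Schwarz normally does --- this is the form that survives formalization in $\VNC^1$. Given $\vec v\perp\vecone$ with $\vec v\ne\veczero$, one sorts the vertices by decreasing $v$-value, forms the threshold cuts $S_t=\{i:v_i\ge t\}$, and applies Theorem~\ref{thm:connected} in the form $|E(S_t,\overline{S_t})|\ge d\cdot(2/(dn))\cdot\min\{|S_t|,|\overline{S_t}|\}$ for every $t$. Summing over thresholds and telescoping the edge sums $\sum_t|v_i-v_j|$ yields a linear (in the $v_i$) expansion estimate; Sedrakyan's Lemma then upgrades it to the quadratic bound on $\|M\vec v\|^2$, and $\sqrt{1-\epsilon^2}\le 1-\epsilon^2/2$ finishes. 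A self-loop at each vertex is used exactly at the rounding step, so that the edge contributions do not cancel; without it the sweep only controls the second-largest eigenvalue, not $\|M\vec v\|$.

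That caveat is where I expect the real difficulty to lie. Unlike the second-largest eigenvalue, the mixing ratio also sees the most negative eigenvalue of $M$, and a connected $d$-regular bipartite graph --- an even cycle, say --- has mixing ratio exactly $1$; so a non-bipartiteness hypothesis is genuinely needed and not cosmetic, and the self-loop assumption (inherited from Mihail, and automatically satisfied by the graphs the Rozenman--Vadhan construction actually produces) is what makes the statement true. Concretely, one self-loop per vertex forces $M_{i,i}\ge 1/d$, which on one hand gives $\langle\vec v,(I+M)\vec v\rangle=\tfrac12\sum_{i,j}M_{i,j}(v_i+v_j)^2\ge(2/d)\|\vec v\|^2$ --- so the most negative eigenvalue is at least $2/d-1$, which together with the ``positive'' bound keeps the mixing ratio at most $1-2/(dn)^2$ --- and on the other hand is precisely the slack that Mihail's rounding step needs. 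So the proof organizes as: (i)~Theorem~\ref{thm:connected} for edge expansion $\ge 2/(dn)$; (ii)~the Sedrakyan-based Cheeger--Mihail inequality for the ``positive'' side; (iii)~the self-loop count for the ``negative'' side; and (iv)~the observation that (ii) and (iii) are exactly the eigenvalue-free facts that Section~\ref{sec:SedrakyanCheeger} certifies for $\VL$.
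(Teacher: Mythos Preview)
The paper's proof is two lines: edge expansion is $\ge 2/(dn)$ by Theorem~\ref{thm:connected}, and then the \emph{full} Cheeger inequality (Theorem~\ref{thm:Cheeger}, $\lambda\ge\epsilon^2/2$) gives $1-\eta\ge 2/(dn)^2$. It does \emph{not} route through Mihail here; the paper explicitly remarks immediately after this theorem that it is open whether this proof formalizes in~$\VL$, and the Cheeger--Mihail variant is reserved for the separate $\VNC^1$-version, Theorem~\ref{thm:connectedBisVNC1}, which \emph{does} carry a self-loop hypothesis. You have conflated the two theorems: your argument invokes Mihail, but Mihail's inequality requires self-loops (in the paper's form, $d$ even with at least $d/2$ self-loops per vertex), and Theorem~\ref{thm:connectedBis} as stated assumes none. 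So your proof does not establish the theorem as written; it essentially proves the later one. The sweep-cut sketch and the Sedrakyan-for-Cauchy--Schwarz discussion are aimed at the formalization story, which is downstream; for \emph{this} theorem the paper just cites Cheeger as a black box and moves on.

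Your bipartite observation is sharp and correct --- the mixing ratio, as defined via $\|M\vec v\|$, sees the most negative eigenvalue too, so an even cycle has $\eta=1$ and the stated conclusion is literally false for it --- but this is a looseness in the paper's \emph{statement} (it quietly identifies ``spectral gap'' with $1-\eta$ when citing Cheeger), not something you repair by switching to Mihail: that just trades one missing hypothesis for another. The paper's architecture handles this by keeping Theorem~\ref{thm:connectedBis} as the classical statement proved via full Cheeger (not meant to be $\VL$-ready), and then restating it with the self-loop hypothesis as Theorem~\ref{thm:connectedBisVNC1} when the time comes to work inside~$\VNC^1$.
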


\begin{proof}
By Theorem~\ref{thm:connected}, the edge expansion of~$G$ is
at least $2/(d n)$.  The well-known Cheeger inequality, which
is stated below as Theorem~\ref{thm:Cheeger},
immediately implies that $1-\eta \geq 2/(dn)^2$.
\end{proof}

\begin{theorem}\label{thm:connectedTri}
Let $G$ be a connected, $d$-regular, directed graph on $n$ vertices
with a self-loop at each vertex.
The mixing ratio~$\eta$ of~$G$ is
at most $1 - 1/(d^4n^2)$.
\end{theorem}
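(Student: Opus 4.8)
The plan is to reduce to the undirected case already handled in Theorem~\ref{thm:connectedBis}. Let $M$ be the adjacency matrix of~$G$. Although $M$ need not be symmetric, the matrix $N := M^\transpose M$ is symmetric, has non-negative entries, and is doubly stochastic: since $G$ is both $d$-inregular and $d$-outregular, $M$ has all row and column sums equal to~$1$, hence $M\vecone = M^\transpose\vecone = \vecone$, so $N\vecone = \vecone$ and, by symmetry of~$N$, its columns sum to~$1$ as well. Consequently $d^2 N$ has non-negative integer entries with all row sums equal to~$d^2$, so $N$ is the (normalized) adjacency matrix of an undirected $d^2$-regular graph~$H$ on the same $n$ vertices: the number of edges of~$H$ between $i$ and~$j$ equals $\sum_k (\#\langle i,k\rangle_G)(\#\langle j,k\rangle_G)$, i.e.\ $H$ joins $i$ to~$j$ when they share an out-neighbor in~$G$.

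Next I would record two facts about~$H$. First, $H$ is connected, and this is the only place the self-loop hypothesis enters: for any directed edge $\langle u,w\rangle$ of~$G$, the term $k=w$ in the sum above (available precisely because $w$ carries a self-loop) gives $N_{u,w}\ge 1/d^2>0$, so $\{u,w\}$ is an edge of~$H$; since $G$ is connected its underlying undirected graph is connected, and $H$ contains all of those edges, hence $H$ is connected. Second, writing $\eta$ for the mixing ratio of~$G$ and $\eta_H$ for that of~$H$, we have $\eta^2\le\eta_H$: for any $\vec v\perp\vecone$,
\[
\|M\vec v\|^2 ~=~ \langle \vec v, N\vec v\rangle ~\le~ \|\vec v\|\cdot\|N\vec v\| ~\le~ \eta_H\,\|\vec v\|^2,
\]
using Cauchy--Schwarz and the definition of the mixing ratio of~$H$ (noting $\vec v\perp\vecone$, via Theorem~\ref{thm:adjacencyTrivial}).

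Finally I would apply Theorem~\ref{thm:connectedBis} to the connected, $d^2$-regular, undirected graph~$H$ to obtain $\eta_H\le 1-2/(d^2n)^2 = 1-2/(d^4n^2)$, and then combine this with $\eta\le\sqrt{\eta_H}$ and the elementary inequality $\sqrt{1-x}\le 1-x/2$ (valid for $0\le x\le 1$) to conclude
\[
\eta ~\le~ \sqrt{1-2/(d^4n^2)} ~\le~ 1-1/(d^4n^2),
\]
which is exactly the claimed bound. (If $n=d=1$ then $d^4n^2=1<2$ and the last estimate does not apply directly, but that case is trivial: $\vecone$ spans all of $\mathbb{R}^n$, so there is no nonzero $\vec v\perp\vecone$ and $\eta=0$.)

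The step I expect to be the crux is showing that $H$ is connected, since that is where ``self-loop at each vertex'' is used and the hypothesis is genuinely necessary --- a directed $n$-cycle is connected and $1$-regular but, having no self-loops, has mixing ratio~$1$, so some assumption beyond mere connectivity is unavoidable. Everything else is routine linear algebra together with the already-established undirected case.
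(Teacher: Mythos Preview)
Your proof is correct and follows essentially the same route as the paper: form the undirected $d^2$-regular graph $H$ with adjacency matrix $M^\transpose M$, use the self-loops to see that every directed edge of $G$ becomes an edge of $H$ (hence $H$ is connected), apply Theorem~\ref{thm:connectedBis} to get $\eta_H\le 1-2/(d^4n^2)$, and then use $\|M\vec v\|^2=\langle\vec v,M^\transpose M\vec v\rangle\le\|M^\transpose M\vec v\|\cdot\|\vec v\|$ together with $\sqrt{1-x}\le 1-x/2$ to finish. The paper's argument is line-for-line the same, only phrased with $\|\vec v\|=1$ and the equivalent inequality $1-2/(d^4n^2)\le(1-1/(d^4n^2))^2$ in place of your square-root estimate.
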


\begin{proof}
Let $M$ be the adjacency matrix for the directed graph~$G$. Form the undirected
graph~$H$ that has adjacency matrix $M^\transpose M$ by
letting the edges of~$H$ be pairs $\{x,y \}$ such that
there is a vertex~$u$ such that $( x,u )$ and $( u,y )$ are
(directed) edges in~$M$. We allow multiedges and self-loops in~$H$; the
multiplicity of an edge $\{ x,y\}$ in~$H$ depends on the number
of pairs of edges $( x, u\rangle$ and $\langle u,y )$ in~$G$.
In particular, $H$~is $d^2$-regular.
Since $G$ has self-loops, every edge $( x,y )$ in~$G$ gives rise
to the edge $\{x,y\}$ in~$H$. Therefore, $H$~is connected.

Suppose $\|\vec v \| =1$ and $\vec v\perp \vecone$.  We wish
to show $\| M \vec v \| \le 1-1/(d^4 n^2)$.
By Theorem~\ref{thm:connectedBis}, $H$~has mixing ratio $\le 1-2/(d^4 n^2)$.
Thus, $\|M^\transpose M \vec v\| \le  1-2/(d^4 n^2)$.
Therefore
\begin{align*}
\| M \vec v\|^2 &= \langle M\vec v, M \vec v \rangle  \\ 
&=
     \langle \vec v, M^\transpose M \vec v \rangle \\
   &\le \|M^\transpose M \vec v\| \tag{by Cauchy-Schwarz}\\
   &\le 1-2/(d^4 n^2)\\
     &\le (1-1/(d^4 n^2))^2.
\end{align*}
Thus $\| M \vec v \| \le 1-1/(d^4 n^2)$ as desired.
\end{proof}

Rozenman and Vadhan~\cite{RozenmanVadhan:DerandSquaring} prove slightly stronger
versions of Theorems \ref{thm:connectedBis} and~\ref{thm:connectedTri}
with $1-2/(dn)^2$ and $1 - 1/(d^4n^2)$
replaced with $1-1/(d n^2)$ and $1-1/(2 d^2 n^2)$.

Unfortunately, it is open whether the Cheeger inequality and thus
Theorems \ref{thm:connectedBis} and~\ref{thm:connectedTri}
can be proved in the theory~$\VL$. However, there is a weaker form
of the Cheeger lemma with a constructive proof by Mihail~\cite{Mihail:Markov} that
can be used instead, and this Cheeger-Mihail theorem is
known to be provable in~$\VNC^1$~\cite{BKKK:Expanders}.  The Cheeger-Mihail
theorem states that the Cheeger inequality holds for an undirected regular
graph of even degree~$d$ provided that each vertex has at least
$d/2$ self-loops. The proof of Theorem~\ref{thm:connectedTri} can thus
be formalized in~$\VNC^1$ (and hence in~$\VL$) provided that the
graph~$H$ has sufficiently many self-loops at each vertex.  For more
details, see Theorems \ref{thm:connectedBisVNC1} and~\ref{thm:connectedTriVNC1},
in Section~\ref{sec:FormalizeInVL} on formalizations in~$\VL$.

\subsection{Derandomized Squaring}

We now review the definition of derandomized squaring,
$X \circleS G$.

\begin{definition}[Proper labeling]
Let $G$ be a $d$-regular graph. Suppose each edge in~$G$ has been
assigned a member of~$[d]$ as a label.

If $G$ is undirected, this is called a \emph{proper labeling}
provided that, for each vertex of~$G$, the $d$ incident edges have
distinct labels.

If $G$ is directed, this is called a \emph{proper labeling} provided that,
for each vertex, the $d$ incoming edges have distinct labels, and the
$d$ outgoing edges have distinct labels.\footnote{Our ``proper labeling''
is the same as what Rozenman and Vadhan~\cite{RozenmanVadhan:DerandSquaring}
call a ``consistent labeling''.}
\end{definition}

Note that a proper labeling of an undirected $d$-regular graph
immediately gives a proper labeling of the corresponding
directed $d$-regular graph.

\begin{definition}
Suppose $G=(V,E)$ is a properly labeled
$d$-regular (directed) graph, with edge labels taken from~$[d]$.
For $v$ a vertex in~$V$ and $i \in [d]$ an edge label, we write
$v[i]$ to denote the vertex~$w$ that is reached by taking a single step in~$G$
from vertex~$v$ along the outgoing edge labeled~$i$.
\end{definition}

Now suppose that $X$ and $G$ are properly labeled,
regular, directed graphs and that
$X$~has $N$~vertices and is $K$-regular and that
$G$~has $K$~vertices and is $D$-regular.
Without loss of generality, the vertex set of~$X$ is~$[N]$
and the vertex set of~$G$ is~$[K]$.

The squared graph $X^2$ is the graph on the same set of vertices~$[N]$ and degree~$K^2$. It has an edge between vertices $v$ and~$u$ iff $u$ can be reached from~$v$ in two steps, i.e., if there exist $1\leq i,j\leq K$ such that $u=w[j]$ for $w=v[i]$. In contrast, the derandomized square of~$X$, using an auxiliary graph $G$, contains an edge between vertices $v$ and~$u$ only if $u$ can be reached from~$v$ in two steps $1\leq i,j\leq K$, where the second step~$j$ is computed from the first step~$i$ using~$G$. Namely, vertex~$j$ must be a neighbor of vertex~$i$ in~$G$. (If $G$ were a complete graph on~$[K]$, we would end up with the usual squared graph~$X^2$. We get the savings, however, if $G$ has degree $D<K$.) More formally, we have the following definition.

\begin{definition}[Derandomized squaring]
The \emph{derandomized squaring product} of $X$ and~$G$
is denoted $X \circleS G$; it is a $KD$-regular directed graph
on $N$~vertices (it has the same set of vertices as~$X$).
The edge labels of~$X$ are taken from~$[K]$.
The edge labels of~$G$ are likewise taken from~$[D]$.
The edge labels of $X \circleS G$ will be written as
pairs $(i, j)$ where $i\in [K]$ and $j\in [D]$; in other words,
$i$~is an edge label for~$X$ and $j$ is an edge label for~$G$.
We also view $i$ as a vertex in~$G$.

The edges of $X \circleS G$ are defined as follows. Let $v\in [K]$
be a vertex of $X \circleS G$ (and a vertex of~$X$) and let $(i,j)$
be an edge label for~$X \circleS G$. Working
in~$X$, let $w = v[i]$, namely the vertex of~$X$ reached by following
edge~$i$ from~$v$.  Then, working in~$G$, let $k = i[j]$, namely the vertex
in~$G$ obtained by following edge~$j$ from~$i$. Then, working in~$X$ again,
let $u = w[k]$, namely the vertex reached by following edge~$k$ from~$w$.
This $u$ is a vertex in~$X \circleS G$; by definition, $X \circleS G$ has
an edge from $v$ to~$u$ with edge label $(i,j)$. As a shorthand notation,
we can write $u = (v[i])[i[j]]$.
\end{definition}

\begin{theorem}[\cite{RozenmanVadhan:DerandSquaring}]\label{thm:circleSWellDefn}
Suppose $X$ and~$G$ are properly labeled, regular, directed graphs
as above. Then $X \circleS G$ is is a properly labeled,
regular, directed graph.
\end{theorem}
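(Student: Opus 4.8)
The plan is to check separately the two requirements in the definition of a ``properly labeled, regular, directed graph'', using throughout the one structural fact behind the notion of proper labeling: in a $k$-regular directed graph a proper labeling makes the labels of the edges entering any fixed vertex a bijective copy of $[k]$, and likewise for the edges leaving it. The outgoing side is immediate from the definition of $X\circleS G$: the edges leaving a fixed vertex $v$ are indexed bijectively by the pairs $(i,j)\in[K]\times[D]$, one edge per label, so $v$ has outdegree exactly $KD$ and its outgoing edges carry pairwise distinct labels. This already gives that $X\circleS G$ is a directed graph on the $N$ vertices of $X$ that is $KD$-outregular with properly labeled out-edges.

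The content is the incoming side. I would show that for every vertex $u$ of $X\circleS G$ and every label $(i,j)\in[K]\times[D]$ there is a unique vertex $v$ with an $(i,j)$-labeled edge to $u$, by inverting the three-step recipe $u=(v[i])[i[j]]$ step by step. First, reading $i$ as a vertex of $G$ (legitimate because the vertex set of $G$ is $[K]$), the intermediate label $k:=i[j]$ is fixed outright by $(i,j)$; it lies in $[K]$ since $G$ has $K$ vertices, hence is a legal edge label of $X$. Next, since $X$ is $K$-regular and properly labeled, the incoming edges at $u$ biject with $[K]$ via their labels, so there is a unique $w$ with $w[k]=u$. Finally the same property at $w$ with label $i$ gives a unique $v$ with $v[i]=w$, and unwinding the definition this $v$ --- and no other vertex --- sends an $(i,j)$-labeled edge to $u$. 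Hence every vertex of $X\circleS G$ has indegree exactly $KD$ and its incoming edges carry pairwise distinct labels, so $X\circleS G$ is $KD$-inregular with properly labeled in-edges. Together with the previous paragraph, $X\circleS G$ is a properly labeled $KD$-regular directed graph, as claimed.

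I do not anticipate a genuine difficulty; the only care needed is bookkeeping: keeping straight the two roles of $i$ (an edge label of $X$ and a vertex of $G$) and of $k$ (a vertex of $G$ and an edge label of $X$) --- which is precisely where the identification of $G$'s vertex set with $X$'s label set $[K]$ enters --- and pairing each use of ``proper labeling'' with the matching regularity so that ``distinct labels'' really becomes ``a bijection with $[k]$''. A useful byproduct is that the inversion argument is fully explicit, so it exhibits the logspace computation that later allows $\VL$ to reason about $X\circleS G$.
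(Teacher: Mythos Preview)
Your proposal is correct and follows essentially the same approach as the paper's proof: the outgoing side is immediate from the construction, and the incoming side is handled by inverting the three-step recipe $u=(v[i])[i[j]]$ using the proper labeling of~$X$ (first determine $k=i[j]$, then find the unique~$w$ with $w[k]=u$, then the unique~$v$ with $v[i]=w$). Your write-up is somewhat more detailed than the paper's terse version, but the argument is the same.
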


\begin{proof}
The proof of Theorem~\ref{thm:circleSWellDefn} is completely elementary.
By construction, $X \circleS G$ is $KD$-outregular and the
outgoing edges of any vertex in $X \circleS G$ have distinct labels.
Conversely, consider whether a vertex~$u$ in $X \circleS G$ has
an incoming edge with the label $(i,j)$. Since $X$ is properly
labeled, there are unique vertices $w$ and~$u$ of~$X$ such that
$u = w[i[j]]$, and $w = v[i]$. Therefore, there is exactly
one incoming edge to~$u$ in $X \circleS G$ with label $(i,j)$.
It follows that $X\circleS G$ is $KD$-inregular and properly labeled.
\end{proof}

\subsection{Tensors}

Let $U$ and $V$ be (real) vector spaces, of finite dimension $m$ and~$n$ respectively.
Let $\vec u = \langle u_1,\ldots, u_m\rangle \in U $ and
$\vec v = \langle v_1,\ldots, v_n\rangle \in V$.
We write $\vec u \otimes \vec v$ to denote the rank one tensor product of
$\vec u$ and~$\vec v$; this
is by definition equal to the $m \times n$ outer product matrix
$ \vec u (\vec v^T)$. Here $\vec v^T$ is the
transpose of~$\vec v$: vectors are column vectors,
so $\vec v^T$ is a row vector.

More generally,
any $m \times n$ matrix represents a tensor~$\tensor w$.
(We use boldface to represent tensors.) For $i\in[m]$ and $j\in[n]$,
we let $\tensor w_{i,j}$ denote the $(i,j)$-entry of~${\mathbf w}$.
Alternately, a tensor can
viewed as a $mn$-vector; however, usually the matrix representation is
more useful. In either representation, the tensors over $U$ and~$V$ form
an $mn$-dimensional vector space.

A linear map on tensors can be represented by an $mn\times mn$ matrix~$M$.
The entries of~$M$ are denoted $M_{(i,j),(i^\prime, j^\prime)}$ for
$i,i^\prime \in [m]$ and $j,j^\prime \in [n]$. Then $M \tensor w$ is the tensor
with entries
$(M \tensor w)_{i,j} = \sum_{i^\prime,j^\prime} M_{(i,j),(i^\prime, j^\prime)} \tensor w_{i^\prime,j^\prime}$.
The magnitude~$\|\tensor w \|$ of~$\tensor w$ is the magnitude of the
$mn$-vector representation of~$\tensor w$.
It is easy to check that $\| \vec u \otimes \vec v \|$ is
equal to $\|\vec u\| \cdot \| \vec v\|$.
The norm~$\|M\|$ of a linear map~$M$ on
tensors is the same as the matrix 2-norm of~$M$ viewed as an
$mn$-matrix. Thus $\|M\|$ equal to the least $\nu$ such that
$\| M \tensor w \| \le \nu \| \tensor w\|$ for all~$\tensor w$.

Suppose now that $N$ and~$N^\prime$
are $m\times m$ and $n \times n$ matrices, respectively.
Their tensor product $N \otimes N^\prime$ is the tensor map with matrix
representation given by~$M$ with
$M_{(i,j),(i^\prime,j^\prime)} = N_{i,i^\prime} N^\prime_{j,j^\prime}$.
For rank one tensors $N$ and~$N^\prime$, straightforward computation shows that
the tensor product $N \otimes N^\prime$ acts on $\vec u \otimes \vec v$
by sending it to $N \vec u \otimes N^\prime \vec v$.
In addition, $\|N \otimes N^\prime\|$ is equal to $\|N\| \cdot \| N^\prime\|$.

We also work with mappings between vectors and tensors.
The first is the projection operation~$P$ that projects away
the second component of tensors.  More precisely, $P \tensor w$
is the $m$-vector $\vec v$ such that $\vec v_i = \sum_j \tensor w_{i,j}$.
The operator~$P$ can be expressed as the $m \times mn$-matrix with entries
$P_{i^\prime,(i,j)} = \delta_{i,i^\prime}$ where $\delta_{i,i^\prime}$ equals~1
if $i=i^\prime$ and equals~0 otherwise.  (Of course, $P$~depends on $m$ and~$n$,
but we suppress this in the notation.) It is easy to check
that $\|P\| = \sqrt n$.

Conversely, the lifting operation~$L$ is defined by letting
$L \vec u$ equal the
tensor product $\vec u \otimes \langle 1/n,\ldots,1/n \rangle$.
The lifting operation~$L$ can be presented by the $mn \times n$-matrix
with all entries equal to $L_{(i,j),i'}=\delta_{i,i'}/n$.
We have $\|L\| = 1/{\sqrt n}$, and in fact
$\|L \vec u \| = \|u\|/{\sqrt n}$ for all~$u$. The composition
$P \circ L$ is the identity transformation, and of course
has magnitude $\|P\circ L\| = 1$.

Recall that $J_n$ is the matrix with all entries equal to~$1/n$.
It is easy to check that $P J_n L$ is the identity matrix. For
the matrix~$D$ of Theorem~\ref{thm:JCdecomp}, in which row and
column sums are equal to zero, $P D L$ is the zero matrix.

\section{Expansion for \texorpdfstring{$\L=\SL$}{L=SL}}\label{sec:LSLproof}

This section details the Rozenman and Vadhan proof of
$\L = \SL$. 

\subsection{One application of derandomized squaring}
The core construction of Rozenman and Vadhan's
proof that $\L = \SL$ required showing bounds on the
mixing ratio of derandomized squaring. Their bound is
re-proved in the next theorem.

\begin{theorem}[\cite{RozenmanVadhan:DerandSquaring}]\label{thm:RVexpand}

Suppose $X$ is a properly labeled, $K$-regular, directed
graph with vertices~$[N]$.
Let $G$ be a properly labeled, directed $(K, D, \mu)$-graph.
Thus $X\circleS G$ is a properly labeled, directed, $KD$-regular
graph with vertices~$[N]$.
Let $0 < \lambda < 1$.
\begin{itemize}
\item[\rm (a)] If $X$ has mixing ratio~$\lambda$,
then $X\circleS G$ has mixing ratio at most $f(\lambda,\mu)$ where
\begin{equation}\label{eq:FunctionF}
f(\lambda, \mu) ~=~ \mu + (1-\mu) \lambda^2.
\end{equation}
\item[\rm (b)] In particular, let $A$ and~$M$ be the adjacency
matrices of $X$ and~$X\circleS G$, respectively, and suppose
there is an $N$-vector $\vec v$ such that
$\vec v \perp \vecone$ such that
$\| M \vec v\| > f(\lambda,\mu) \| \vec v \|$.
Then there is an $N$-vector $\vec u$ such that $\vec u \perp \vecone$
and $\| A \vec u \| > \lambda \|\vec u\|$.
In fact, this will hold for at least one of
$\vec u = \vec v$ or $\vec u = A \vec v$.
\end{itemize}
\end{theorem}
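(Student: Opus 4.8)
The plan is to express the adjacency matrix $M$ of $X \circleS G$ as a product of tensor operations, then use the decomposition of $G$'s adjacency matrix from Theorem~\ref{thm:JCdecomp} to separate the ``mixing'' part (governed by $\mu$) from the ``squaring'' part (governed by $\lambda^2$). First I would set up the factorization $M = P\,(A\otimes I)\,(I\otimes B)\,(A\otimes I)\,L$, where $A$ is the adjacency matrix of $X$, $B$ is the adjacency matrix of $G$, and $P$, $L$ are the projection and lifting operators from the Tensors subsection (with $m = N$, $n = K$). The idea is that lifting $\vec v$ to $\vec v\otimes\vecone_K$, applying $A$ in the first coordinate, then applying $B$ in the second coordinate (this is the step where the second move $j$ is chosen as a $G$-neighbor of the first move $i$), applying $A$ again in the first coordinate, and finally projecting away the second coordinate, exactly reproduces the derandomized-squaring walk. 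I would verify this identity by a direct computation on basis tensors $\vec e_v\otimes\vec e_i$, matching it against the combinatorial definition of $X\circleS G$.

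Next I would use Theorem~\ref{thm:JCdecomp}(b) to write $B = J_K + \mu D$ with $\|D\|\le 1$ and all row/column sums of $D$ equal to zero. Substituting, $I\otimes B = I\otimes J_K + \mu\,(I\otimes D)$, so
\[
M ~=~ P\,(A\otimes I)\,(I\otimes J_K)\,(A\otimes I)\,L ~+~ \mu\,P\,(A\otimes I)\,(I\otimes D)\,(A\otimes I)\,L.
\]
For the first term, I would use that $P(I\otimes J_K)L' = $ identity in the appropriate sense — more precisely, since $J_K$ and $L$ interact cleanly, $(A\otimes I)(I\otimes J_K) = A\otimes J_K$ and $P(A\otimes J_K)(A\otimes I)L$ collapses the second tensor coordinate through the uniform vector, yielding exactly $A^2$. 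So the first term equals $A^2$. Hence $M\vec v = A^2\vec v + \mu\,\vec w$ where $\vec w = P\,(A\otimes I)\,(I\otimes D)\,(A\otimes I)\,L\,\vec v$. For the error term I would bound $\|\vec w\|$ using the submultiplicativity of norms: $\|P\| = \sqrt K$, $\|L\| = 1/\sqrt K$, $\|A\otimes I\| = \|A\| = 1$ (by Theorem~\ref{thm:normM}), and $\|I\otimes D\| = \|D\|\le 1$, giving $\|\vec w\|\le \|\vec v\|$. Actually one needs to be slightly more careful to get the $(1-\mu)$ factor: rather than bounding $M\vec v = A^2\vec v + \mu\vec w$ with a naive triangle inequality, I would instead bound $\|A^2\vec v\| \le \lambda^2\|\vec v\|$ (using the mixing ratio of $X$ twice, after checking via Theorem~\ref{thm:adjacencyTrivial}(c) that $A\vec v\perp\vecone$) and combine the two estimates to obtain $\|M\vec v\| \le \mu\|\vec v\| + (1-\mu)\lambda^2\|\vec v\|$. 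Getting the coefficient to be $(1-\mu)\lambda^2$ rather than just $\lambda^2 + \mu$ requires observing that the projection of the $J_K$ part of $B$ removes the ``uniform in the $G$-coordinate'' contribution and the remaining $D$-part has the orthogonality structure that lets one split $M = \mu(\text{norm}\le 1\text{ part}) + (1-\mu)(\text{part with norm}\le\lambda^2\text{ on }\vec v)$; concretely I would apply Theorem~\ref{thm:JCdecomp}(a) (the $B = (1-\mu)J_K + \mu C$ form with $\|C\|\le 1$) instead, so that $M\vec v = (1-\mu)A^2\vec v + \mu\,P(A\otimes I)(I\otimes C)(A\otimes I)L\,\vec v$, and then the triangle inequality directly gives $\|M\vec v\| \le (1-\mu)\lambda^2\|\vec v\| + \mu\|\vec v\| = f(\lambda,\mu)\|\vec v\|$. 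Throughout I must check $M\vec v\perp\vecone$ so the mixing-ratio conclusion for $X\circleS G$ is meaningful, which follows from $P$, $L$, $A\otimes I$, $I\otimes C$ all preserving the relevant orthogonality to the all-uniform vector.

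For part~(b), I would argue contrapositively: the bound in part~(a) was derived as $\|M\vec v\| \le (1-\mu)\|A^2\vec v\| + \mu\|\vec v\| \le (1-\mu)\lambda\|A\vec v\| + \mu\|\vec v\|$ using $\|A^2\vec v\| = \|A(A\vec v)\|\le \lambda\|A\vec v\|$ together with $\|A\vec v\|\le\lambda\|\vec v\|$ (valid since $A\vec v\perp\vecone$). If $\|M\vec v\| > f(\lambda,\mu)\|\vec v\| = (1-\mu)\lambda^2\|\vec v\| + \mu\|\vec v\|$, then not both of $\|A\vec v\|\le\lambda\|\vec v\|$ and $\|A(A\vec v)\|\le\lambda\|A\vec v\|$ can hold, so either $\vec u = \vec v$ or $\vec u = A\vec v$ witnesses $\|A\vec u\| > \lambda\|\vec u\|$ (and $\vec u\perp\vecone$ in either case by Theorem~\ref{thm:adjacencyTrivial}(c)). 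This constructive ``at least one of two explicit candidates works'' phrasing is exactly what is needed for the later $\VL$ formalization. The main obstacle I anticipate is getting the tensor factorization of $M$ exactly right and verifying it cleanly against the combinatorial definition of $\circleS$ — in particular, keeping track of which tensor coordinate the labels $i$ and $j$ live in, and confirming that $P$ and $L$ with the normalization $1/K$ make the $J_K$-term collapse precisely to $A^2$ with no stray scalar factors.
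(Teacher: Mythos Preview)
Your overall strategy --- factor $M$, decompose $B$ via Theorem~\ref{thm:JCdecomp}(a), collapse the $J_K$-term to $A^2$, and bound the $C$-term by~$1$ --- matches the paper's proof exactly, and your argument for part~(b) is essentially the paper's argument verbatim. However, the specific factorization you propose is wrong in a way that makes the whole computation collapse.

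The paper does \emph{not} use $A\otimes I_K$ for the ``take a step in~$X$'' operator. Instead it introduces a \emph{permutation matrix}~$\tilde A$ on $\mathbb{R}^{NK}$, defined by $\tilde A\,(e_v\otimes e_i) = e_{v[i]}\otimes e_i$: it moves the $X$-coordinate along the edge \emph{labeled}~$i$, where $i$ is the current $G$-coordinate. This coupling between the label coordinate and the step taken in~$X$ is the whole point of derandomized squaring. Your operator $A\otimes I_K$ instead averages over \emph{all} neighbors of~$v$ regardless of~$i$, destroying the coupling. Concretely, with your factorization one computes
\[
(A\otimes I)(I\otimes B)(A\otimes I)\,L\vec v \;=\; (A\otimes I)(I\otimes B)\,(A\vec v\otimes \vecone_K) \;=\; (A\otimes I)\,(A\vec v\otimes B\vecone_K) \;=\; A^2\vec v\otimes \vecone_K,
\]
since $B\vecone_K = \vecone_K$; projecting gives $P(A\otimes I)(I\otimes B)(A\otimes I)L = A^2$ \emph{independently of}~$B$. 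So your formula cannot be the adjacency matrix of $X\circleS G$ (which genuinely depends on~$G$), and your bound would yield mixing ratio $\lambda^2$ for every~$\mu$, which is false.

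The correct identity is $M = P\,\tilde A\,(I_N\otimes B)\,\tilde A\,L$, and the key algebraic facts are $P\tilde A L = A$ (one random step in~$X$) and $I_N\otimes J_K = LP$, which together give the $(1-\mu)A^2$ term; the $\mu$-term $P\tilde A(I_N\otimes C)\tilde A L$ does \emph{not} simplify further, and one bounds its norm by $\|P\|\cdot\|\tilde A\|\cdot\|I_N\otimes C\|\cdot\|\tilde A\|\cdot\|L\| \le \sqrt K \cdot 1\cdot 1\cdot 1\cdot (1/\sqrt K) = 1$ using that $\tilde A$ is a permutation. Once you replace $A\otimes I$ by~$\tilde A$ throughout, the rest of your proposal goes through exactly as the paper does it.
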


Part~(a) can be more succinctly stated as
saying that if $X$~is an $(N,K,\lambda)$-graph
and $G$~is a $(K,D,\mu)$-graph, then $X\circleS G$ is
an $(N,KD,f(\lambda,\mu))$-graph. The theorem is stated
in a roundabout way, however, so we can better discuss
its provability in the theory~$\VL$.

\begin{proof}
Part~(a) follows from~(b), so we prove~(b). The adjacency matrix~$M$
can be viewed as computing a single random step in~$X \circleS G$.
We view $M$ as acting on
an $N$-vector~$\vec v$ representing
a probability distribution on vertices~$[N]$.
The $\ell$-th entry $(\vec v)_\ell$ of $\vec v$ gives
the probability of being at vertex~$\ell$.
Following the proof of~\cite{RozenmanVadhan:DerandSquaring},
we shall describe $M$ as the transition matrix for a
five-step random process applied to a vertex $v \in [N]$ where
$v$~is chosen at random according to~$\vec v$.

The five-step random process starts at a vertex~$v$ of~$X$
and ends at a vertex~$w$ of~$X$; note that $v,w \in [N]$.
As is described below, the $s$-th intermediate step (for $s=1,2,3,4$)
of the random process transitions to a state
of the form $\langle u, i\rangle$, where $u\in [N]$ and $i\in[K]$.
We interpret $u$~as a vertex of~$X$ and $i$~as a vertex of~$G$.

Suppose the input~$v$ to the random
process is chosen according to
a fixed probability distribution~$\vec v$ on~$[N]$.
For $s=1,2,3,4$, let $\tensor w_s$ be the tensor that
gives the probability distribution on the
result $\langle u,i \rangle$ of the $s$-th step of
the random process;
that is, $(\tensor w_s)_{u,i}$ is the
probability of being in the state $\langle u, i \rangle$ after
the $s$-th step. Of course, $\tensor w_s = M_s \vec v$ for
some $NK\times N$ matrix~$M_s$.

\begin{itemize}
\item[\rm 1.] Choose a random vertex $v\in [N]$ of~$G$ according
to the probability distribution~$\vec v$.
Choose $i \in K$ at uniformly at random
and go to the state represented the pair $ \langle v, i\rangle$.
This corresponds to picking a random outgoing edge~$i$ of~$v$.
Note $i$ is also a vertex of~$G$.

The transition matrix for this
step is~$L$; namely, this step is represented
by the lifting operation $\vec v \mapsto L \vec v$, so
$\tensor w_1$ equals the $NK$-vector~$L \vec v$.

\item[\rm 2.] Transition deterministically to the
state represented by $\langle v[i], i \rangle$.
This corresponds to
following the edge labeled~$i$ in~$X$.

The transition matrix for this step is the
$NK\times NK$-matrix~$\tilde A$ defined with
$\tilde A_{(u,i),(u^\prime,i)} = 1$ if there is an edge
in~$X$ from $u$ to~$u^\prime$ labeled~$i$ and with
all other entries of $\tilde A$ equal to zero.
Since $X$ is $K$-regular and properly labeled,
$\tilde A$ is a permutation matrix. (This corresponds to
the fact that this is a deterministic, reversible step.)
In fact, when $\tilde A$ is viewed as acting on a $NK$ matrix,
multiplication by~$\tilde A$ permutes entries
within each column of the $NK$ matrix.
Then $\tensor w_2 = \tilde A \tensor w_1$.

\item[\rm 3.] Choose a random neighbor~$k$ of~$i$ in~$G$, and transition to
the state given by $\langle v[i], k\rangle$.
In other words, choose a random $j\in [D]$
and transition to $\langle v[i], i[j]\rangle$.

Since the index~$i$ was chosen uniformly at random,
$\tensor w_2$ has the form $\vec z \otimes \vecone$ for some $\vec z$.
Therefore,
the transition matrix for this step is equal to $\tilde B := I_N \otimes B$,
where $I_N$~is the $N\times N$ identity matrix
and $B$~is the $K\times K$ transition matrix for~$G$.
We have $\tensor w_3 = \tilde B \tensor w_2$.

\item[\rm 4.] Deterministically transition to the pair $\langle v[i][k], k\rangle$.
This corresponds to following the edge in~$X$ labeled~$k$ from the vertex $v[i]$
to reach the vertex $v[i][k]$.

The transition matrix for this step is the same
matrix~$\tilde A$ that was used for step~2.
We have $\tensor w_4 = \tilde A \tensor w_3$.

\item[\rm 5.] Deterministically go to state $v[i][k]$ (discarding
the second component of the state).

The transition matrix for this step is the projection operator~$P$.
We let $\vec w$ be the $N$-vector
$\vec w = P \tensor w_4$; this makes $\vec w$ the
probability distribution on the vertices of~$[N]$ at the end
of the five step random process.
\end{itemize}
Putting this all together shows that the transition matrix~$M$
for $X \circleS G$ is equal to
\begin{equation}\label{eq:RVexpandPf}
M ~=~ P \tilde A (I_N\otimes B) \tilde A L .
\end{equation}
By Theorem~\ref{thm:JCdecomp}, $B$~is equal to
$(1-\mu) J_K + \mu C$ where $\|C\|\le 1$. Thus,
\[
M ~=~ (1-\mu) P \tilde A (I_N\otimes J_K)\tilde A L
      + \mu P \tilde A (I_N\otimes C)\tilde A L .
\]
We have $I_N\otimes J_K$ is equal to $LP$.
And $P\tilde A L$ is equal to~$A$, the transition matrix for~$X$.
Therefore,
\begin{eqnarray}
\nonumber
M &=& (1-\mu) P \tilde A LP \tilde A L
      + \mu P \tilde A (I_N\otimes C)\tilde A L  \\
\nonumber
  &=& (1-\mu) (P \tilde A L)^2 + \mu P \tilde A (I_N\otimes C) \tilde A L \\
\label{eq:Mexpand}
  &=& (1-\mu) A^2 + \mu P \tilde A (I_N\otimes C)\tilde A L.
\end{eqnarray}
Recall that $\|P\| = \sqrt K$, and
$\|L\| = 1/\sqrt K$. Also, $\|I_N \otimes C\| \le 1$ since $\|C\| \le 1$.
Finally, $\|\tilde A\| = 1$ since $\tilde A$ is a permutation
matrix. It follows that
$\|P \tilde A (I_N\otimes\penalty10000 C)\tilde A L\| \le 1$.

We can now prove part~(b) of the theorem.
Suppose $M$ has mixing ratio $>\penalty10000 f(\mu,\lambda)$, so there is
a $N$-vector~$\vec v$ such that $\vec v \perp \vecone$ and
\[
\|M\vec v\| > (\mu + (1-\mu)\lambda^2)\|\vec v\| .
\]
Therefore, by~(\ref{eq:Mexpand}) and
since $\|P \tilde A (I_N\otimes C)\tilde A L\| \le 1$,
\begin{equation}\label{eq:RVexpandPf2}
   (1-\mu) \|A^2(\vec v)\| + \mu \|\vec v\| > (\mu+(1-\mu)\lambda^2) \|\vec v \|,
\end{equation}
whence $\|A^2 \vec v \| > \lambda^2 \|\vec v\|$.  Thus
$\|A \vec u\| > \lambda \|\vec u\|$ for $\vec u$~equal to (at least) one
of $\vec v$ and $A\vec v$. In either case, $\vec u \perp \vecone$, either since
$\vec v \perp \vecone$ or, if $\vec u=A\vec v$, by
Theorem~\ref{thm:adjacencyTrivial}(c).
\end{proof}

The proof that $\L=\SL$ will iterate
the construction of Theorem~\ref{thm:RVexpand}. Namely,
it starts with a graph~$X$ with adjacency matrix~$M_X$
that has mixing ratio at
most $\lambda_X = 1 - 2/(dn)$ and applies derandomized squaring
$k = O(\log n)$ times to form a graph
\[
Z ~=~ (( \cdots ( X \circledS G_1 ) \circledS G_2 ) \cdots ) \circledS G_k.
\]
It is then argued that if the $G_i$'s have appropriate mixing ratios,
then the adjacency matrix~$M_Z$ of~$Z$
has small mixing ratio~$\lambda_Y$,
and hence $Z$ has good expansion. 
Specifically, it is proved
if there is an $N$-vector $v \perp \vec 1$ so that
$\|M_Z \vec v\| > \lambda_Z \|\vec v\|$,
then there is an $N$-vector $\vec u\perp \vec 1$
so that $\| M_X \vec u \| > \lambda_X \|\vec u\|$.

The proof method of Theorem~\ref{thm:RVexpand}
is sufficiently constructive that it will allow
the theory~$\VL$ to prove expansion properties for iterated
applications of derandomized squaring. This is shown
in Section~\ref{sec:FormalizeInVL}. First, however, we give the rest
of the proof that $\L = \SL$.

\subsection{Iterating derandomized squaring and proving \texorpdfstring{$\L = \SL$}{L=SL}}\label{sec:RVproof}
Rozenman-Vadhan's proof that $\L=\SL$ proceeds in three stages.
The first (easy) stage starts with an undirected graph~$Y$, not
necessarily regular, and produces a
directed graph~$X$ which is 4-regular, has a loop at each vertex
and is properly labeled, such that the reachability problem for~$Y$ is
easily reducible to the reachability problem for~$X$. The construction replaces
each degree~$d$ node~$v$ in~$Y$ with $d$-many nodes denoted $\langle v,i \rangle$
for $0\le i <d$.  The four outgoing edges for the node $\langle v,i\rangle$ in~$X$
and their labels are:
\begin{itemize}
\setlength{\itemsep}{0pt}
\setlength{\parsep}{0pt}
\item From $\langle v, i \rangle$ to $\langle v, i{-}1 \bmod d \rangle$ with label~0.
\item From $\langle v, i \rangle$ to $\langle v, i{+}1 \bmod d \rangle$ with label~1.
\item From $\langle v, i \rangle$ to $\langle w, j \rangle$ with label~2,
where $w$~is the $i$-th neighbor
of~$v$ and $v$~is the~$j$-th neighbor of~$w$.
\item From $\langle v, i \rangle$ to itself (a self-loop) with label~3.
\end{itemize}
Let $N$ be the number of vertices in~$X$.

The second stage starts with the 4-regular directed graph~$X$
and repeatedly augments it by derandomized squaring with graphs~$G_i$.
The $G_i$'s are defined in terms of graphs~$H_i$ which
have the following properties, for some constant $Q=4^q$:
\begin{itemize}
\item $H_i$ is a consistently labeled
  $(Q^i, Q, 1/100)$-graph.
\item The edge relation in~$H_i$ is computable uniformly in
simultaneous logspace and polylogarithmic time in the size of~$H_i$.
That is, the
function \hbox{$(v,x) \mapsto v[x]$,} for $v$~a vertex and $x$~an edge label
is computable in simultaneous space $O(i)$ and time $\Tpoly(i)$.  (Note
that $O(i)$ many bits are needed to specify $v$ and~$x$.)
\end{itemize}

It is well-known that such graphs~$H_i$ exist for some suitably large (constant)
value for~$Q$. 
In fact, it follows from
Theorem~32 of~\cite{BKKK:Expanders} that
$\VNC^1$ proves the existence of such graphs.

Set 
\[
m_0 = \lceil 100 \log N \rceil.
\]
For $i\le m_0$, let $G_i = H_i$.
For $i>m_0$, let 
\[
G_i = (H_{m_0 + 2^{i-m_0} -1})^{2^{i-m_0}}.
\]
Then define $X_1 = X^q$ so that $X_1$ is $Q$-regular (since $X$ is 4-regular).
For $i\ge 1$, let
\begin{equation}\label{eq:XiDefn}
X_{i+1} ~=~ X_i \circleS G_i.
\end{equation}
Figure~\ref{fig:XGdegrees} shows the
sizes and degrees of the graphs $X_i$ and~$G_i$
up to 
\[
m_1 = m_0 + O(\log\log N).  
\]
For $i\le m_0$,
$G_i$~has constant degree~$Q$.
Above~$m_0$, $G_{m_0+i}$~has
degree $Q^{2^i}$.  It takes $2^i \cdot \lceil \log Q\rceil$
bits to specify an
edge of~$Q_{m_0+i}$.  The edge relation
for~$G_{m_0 + i}$ is computable in space
$O(2^i + \log m)$: namely, it is computed by
traversing $2^i$ many edges of~$H_{m_0 + 2^i-1}$.
The edge relation of~$H_{m_0 + 2^{i-m_0} -1}$
is computable in space~$O(m_0 + 2^i)$ and
only extra $i$~many bits are needed to keep track
of the number of edges traversed.

\begin{figure}
\centering
\begin{tabular}{c|c|c|c|c|c}
     & Number of & & Number of & & \\
     & vertices & Degree & vertices & Degree & Upper bound  \\
$i$ & of $X_i$ & of $X_i$ & of $G_i$ & of $G_i$ & on $\lambda(G_i)$ \\
\hline
1 & $N$ & $Q = 4^q$ & $Q$  & $Q$ & $1/100$ \\
2 & $N$ & $Q^2$ & $Q^2$  & $Q$ & $1/100$  \\
3 & $N$ & $Q^3$ & $Q^3$  & $Q$ & $1/100$  \\
$\vdots$ & $\vdots$ & $\vdots$ & $\vdots$ & $\vdots$  & \vdots \\
$m_0$ & $N$ & $Q^{m_0}$ & $Q^{m_0}$  & $Q$ & $1/100$  \\
$m_0+1$ & $N$ & $Q^{m_0+1}$ & $Q^{m_0+1}$  & $Q^2$ & $(1/100)^2$  \\
$m_0+2$ & $N$ & $Q^{m_0+3}$ & $Q^{m_0+3}$  & $Q^4$ & $(1/100)^4$  \\
$m_0+3$ & $N$ & $Q^{m_0+7}$ & $Q^{m_0+7}$  & $Q^8$ & $(1/100)^8$  \\
$\vdots$ & $\vdots$ & $\vdots$ & $\vdots$ & $\vdots$ & $\vdots$  \\
$m_0+\ell$ & $N$ & $Q^{m_0+2^\ell-1}$ & $Q^{m_0+2^\ell-1}$
       & $Q^{2^\ell}$ & $(1/100)^{2^\ell}$
\end{tabular}
\caption{The number of vertices and
the $k$-regularity degree for
the graphs $X_i$ and~$G_i$, and the mixing ratio~$\lambda(G_i)$.
The process stops once
$m_1 = m_0+ \penalty1000 \ell = \log\log N +\penalty 10000 O(1)$.}
\label{fig:XGdegrees}
\end{figure}

Suppose $Y$, and hence~$X$, is connected. (Or if not, restrict attention to
a connected subset of~$X$.)
Let $\lambda(X_i)$ denote the mixing ratio of~$X_i$. We claim
that $\lambda(X_{m_0}) < 3/4$ and that
$\lambda(X_{m_1})<1/N^{1.5}$, where $m_1 = m_0 + \ell$ with
$\ell = \log \log N + O(1)$. These claims will be proved
by induction using:
\begin{proposition}\label{prop:BoundsOnF}
Let $f(\lambda, \mu)$ be as defined in~{\rm (\ref{eq:FunctionF})}. Then,
for $\mu, \lambda \in (0,1)$,
\begin{enumerate}
\setlength{\itemsep}{0pt}
\setlength{\itemsep}{0pt}
\item[\rm a.] $1-f(1-\gamma,\mu) \ge (3/2) \gamma$
if $\gamma \le 1/4$ and $\mu \le 1/100$.
\item[\rm b.] $f(\lambda,\mu) < \lambda^2 + \mu$.
\item[\rm c.] $f(\lambda,\mu)$ is monotonically increasing
in both $\lambda$ and~$\mu$.
\end{enumerate}
\end{proposition}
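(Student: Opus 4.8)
The plan is to prove all three parts by direct algebraic manipulation of the explicit formula $f(\lambda,\mu)=\mu+(1-\mu)\lambda^2$; no inequality here is deep, so the whole proposition should follow from elementary estimates, and the only care needed is in tracking the constants in part~(a).

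For part~(a), I would first expand $f(1-\gamma,\mu)=\mu+(1-\mu)(1-2\gamma+\gamma^2)$ and simplify the complementary quantity to the clean product form
\[
1-f(1-\gamma,\mu) ~=~ (1-\mu)\,(2\gamma-\gamma^2) ~=~ (1-\mu)\,\gamma\,(2-\gamma).
\]
Dividing through by $\gamma>0$, the desired inequality $1-f(1-\gamma,\mu)\ge (3/2)\gamma$ is then equivalent to $(1-\mu)(2-\gamma)\ge 3/2$. Using the hypotheses $\mu\le 1/100$ and $\gamma\le 1/4$, we get $(1-\mu)(2-\gamma)\ge (99/100)(7/4)=693/400>3/2$, which finishes the case.

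For part~(b), I would just observe $f(\lambda,\mu)=\mu+\lambda^2-\mu\lambda^2$ and note $\mu\lambda^2>0$ since $\mu,\lambda\in(0,1)$, so $f(\lambda,\mu)<\mu+\lambda^2$. For part~(c), I would rewrite $f(\lambda,\mu)=\lambda^2+\mu(1-\lambda^2)$: the coefficient $1-\lambda^2$ of $\mu$ is positive for $\lambda\in(0,1)$, so $f$ is increasing in $\mu$; and for fixed $\mu\in(0,1)$ the coefficient $1-\mu$ of $\lambda^2$ is positive and $\lambda\mapsto\lambda^2$ is increasing on $(0,1)$, so $f$ is increasing in $\lambda$. (Equivalently one can cite $\partial f/\partial\lambda=2(1-\mu)\lambda>0$ and $\partial f/\partial\mu=1-\lambda^2>0$.)

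I do not expect any real obstacle here; the statement is purely a collection of arithmetic facts about a quadratic, and the main thing to be careful about is keeping the numeric slack in part~(a) transparent, since that $3/2$ factor is exactly what drives the geometric decay of $1-\lambda(X_i)$ in the inductive argument that follows. Everything is manifestly finitistic and will go through verbatim when this step is later formalized in $\VL$.
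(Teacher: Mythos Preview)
Your proof is correct and follows essentially the same route as the paper: for part~(a) you derive the factorization $1-f(1-\gamma,\mu)=(1-\mu)\gamma(2-\gamma)$ and then check $(1-\mu)(2-\gamma)\ge 3/2$ from the numeric hypotheses, while parts~(b) and~(c) are, as in the paper, immediate from the definition. Your treatment is slightly more explicit in spelling out the constants and the monotonicity argument, but there is no substantive difference.
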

The ``spectral gap'' of $X_i$ is defined to equal $1-\lambda(X_i)$.
Part~a.\ of the proposition will be used to prove that the
spectral gaps of the $X_i$'s are increasing by a factor of~$3/2$
for small values of~$i$, namely for $i<m_0$.
Part~b.\ will be used for $X_i$'s with $i>m_0$.
\begin{proof}
For part~a., from $f(\lambda, \mu) = 1-(1-\lambda^2)(1-\mu)$,
we have
\[
1-f(1-\gamma,\mu) ~=~ (1-(1-\gamma)^2)(1-\mu)
  ~=~ \gamma (2 -\gamma) (1-\mu),
\]
and
\[
\hbox{$(2 -\gamma) (1-\mu) > 3/2$ ~if~ $\gamma \le 1/4$ and $\mu \le 1/100$.}
\]

Parts b.\ and c.\ are immediate from the definition of $f(\lambda,\mu)$.
\end{proof}
Note that the bounds in Proposition~\ref{prop:BoundsOnF} are not tight.

\begin{claim}\label{claim:lambdaX}
Assume $X$ is connected. Let $\ell = 10 + \log\log N$ and $m_1 = m_0 + \ell$.
Then
$\lambda(X_{m_0}) < 3/4$ and $\lambda(X_{m_1}) < 1/N^2$.
\end{claim}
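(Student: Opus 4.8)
The plan is to track the mixing ratio $\lambda(X_i)$ by induction on $i$, in two phases. The engine is the one-step bound $\lambda(X_{i+1}) \le f(\lambda(X_i),\mu_i)$ supplied by Theorem~\ref{thm:RVexpand}(a), where $\mu_i$ is the mixing ratio of the auxiliary graph $G_i$ used in $X_{i+1} = X_i \circleS G_i$ (by Figure~\ref{fig:XGdegrees} the sizes and degrees match up so that Theorem~\ref{thm:RVexpand} applies at each step). In addition I use the elementary estimates of Proposition~\ref{prop:BoundsOnF}, the fact that powering raises the mixing ratio to the matching power, i.e., $\lambda(X^q) \le \lambda(X)^q$ (immediate from Definition~\ref{def:mixingratio} and Theorem~\ref{thm:adjacencyTrivial}(c), since $M\vec v \perp \vecone$ whenever $\vec v \perp \vecone$), and Theorem~\ref{thm:connectedTri} for the starting estimate. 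We may assume $N \ge 2$, since for $N = 1$ the only vector orthogonal to $\vecone$ is $\veczero$ and all mixing ratios equal $0$. (All logarithms are base $2$.)

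\emph{Phase 1: reaching mixing ratio below $3/4$.} As $X$ is connected, $4$-regular, directed and has a self-loop at each vertex, Theorem~\ref{thm:connectedTri} gives $\lambda(X) \le 1 - 1/(256N^2)$, so writing $\gamma_i := 1 - \lambda(X_i)$ we have $\gamma_1 = 1 - \lambda(X^q) \ge 1 - \lambda(X) \ge 1/(256N^2)$. I argue by contradiction: suppose $\lambda(X_{m_0}) \ge 3/4$. First, $\lambda(X_i) \ge 3/4$ must then hold for every $i \le m_0$; for if $i_0 < m_0$ were least with $\lambda(X_{i_0}) < 3/4$, then monotonicity of $f$ (Proposition~\ref{prop:BoundsOnF}(c)) and $\mu_{i_0} \le 1/100$ would give $\lambda(X_{i_0+1}) \le f(3/4, 1/100) = 907/1600 < 3/4$, and inductively $\lambda(X_{m_0}) < 3/4$, contradicting the supposition. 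Now for each $1 \le i < m_0$ we have $\gamma_i \le 1/4$ and $\mu_i \le 1/100$, so
\[
\gamma_{i+1} ~=~ 1 - \lambda(X_{i+1}) ~\ge~ 1 - f(1-\gamma_i,\mu_i) ~\ge~ \tfrac32\gamma_i
\]
by Proposition~\ref{prop:BoundsOnF}(a). Iterating, $\gamma_{m_0} \ge (\tfrac32)^{m_0-1}\gamma_1 \ge (\tfrac32)^{m_0-1}/(256N^2)$; since $m_0 \ge 100\log N$ and $100\log_2(3/2) > 58$, this is at least $N^{58}/(384 N^2) = N^{56}/384 > 1$ for $N \ge 2$, contradicting $\gamma_{m_0} \le 1$. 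Hence $\lambda(X_{m_0}) < 3/4$.

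\emph{Phase 2: doubly-exponential decay.} Write $\lambda_j := \lambda(X_{m_0+j})$, so $\lambda_0 < 3/4$. For $j \ge 1$ the auxiliary graph is $G_{m_0+j} = (H_{m_0+2^j-1})^{2^j}$, whose mixing ratio is at most $(1/100)^{2^j}$ by powering, and $\mu_{m_0} = 1/100$; hence $\mu_{m_0+j} \le (1/100)^{2^j}$ for every $j \ge 0$. Using $f(\lambda,\mu) = \mu + (1-\mu)\lambda^2 \le \lambda^2 + \mu$ together with Theorem~\ref{thm:RVexpand}(a) gives the recursion $\lambda_{j+1} \le \lambda_j^2 + (1/100)^{2^j}$ for all $j \ge 0$. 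Running it through the first few steps ($\lambda_1 < (3/4)^2 + \tfrac1{100} < 0.58$, then $\lambda_2 < 0.58^2 + 10^{-4} < 0.34$, then $\lambda_3 < 0.34^2 + 10^{-8} < 1/6$) gives $\lambda_{j_0} < 1/6$ for the absolute constant $j_0 = 3$. I then claim $\lambda_j \le \tfrac12(1/3)^{2^{j-j_0}}$ for all $j \ge j_0$, by induction: the base case is $\lambda_{j_0} < 1/6 = \tfrac12(1/3)$, and for the inductive step
\[
\lambda_{j+1} ~\le~ \lambda_j^2 + (1/100)^{2^j} ~\le~ \tfrac14(1/3)^{2^{j+1-j_0}} + (1/100)^{2^j} ~\le~ \tfrac12(1/3)^{2^{j+1-j_0}},
\]
the last inequality holding because $(1/100)^{2^j} \le (1/100)^{2^{j+1-j_0}}$ (as $j+1-j_0 \le j$) and the latter is much smaller than $\tfrac14(1/3)^{2^{j+1-j_0}}$ (as $100 \gg 3$). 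Finally, at $j = \ell = 10 + \log\log N$ we have $j - j_0 = 7 + \log\log N$ and thus $2^{j-j_0} = 128\log N$, so
\[
\lambda(X_{m_1}) ~=~ \lambda_\ell ~\le~ \tfrac12(1/3)^{128\log N} ~=~ \tfrac12 N^{-128\log_2 3} ~<~ 1/N^2 .
\]

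All the arithmetic here is elementary; the one delicate point is the additive error term $(1/100)^{2^j}$ in the Phase-2 recursion. Because it never vanishes, the inductive invariant must carry a constant factor of slack (the $\tfrac12$), and absorbing the error uses crucially that the $\mu_i$ shrink doubly-exponentially, not merely that $\mu_i \le 1/100$; the generous constants in $m_0$ and in $\ell$ then leave ample room --- indeed the exponents that emerge ($N^{56}$ in Phase~1, $N^{-128\log_2 3}$ in Phase~2) vastly exceed what is needed, which is consistent with the remark that the bounds of Proposition~\ref{prop:BoundsOnF} are not tight. The degenerate cases $N = 1$ and $\lambda_j = 0$ (a perfect expander) cause no difficulty; in the latter the same argument gives $\lambda_{j+1} \le \mu_{m_0+j} \le (1/100)^{2^j}$, which already satisfies the invariant.
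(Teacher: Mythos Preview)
Your proof is correct and follows essentially the same two-phase strategy as the paper: in Phase~1 you use Proposition~\ref{prop:BoundsOnF}(a) to grow the spectral gap by a factor $3/2$ per step until it exceeds~$1/4$, and in Phase~2 you use the bound $f(\lambda,\mu)\le \lambda^2+\mu$ to get doubly-exponential decay. The only noteworthy difference is bookkeeping in Phase~2: the paper carries the single invariant $\lambda(X_{m_0+i})\le (64/65)(7/8)^{2^i}$ from $i=0$ onward (the factor $64/65$ absorbing the additive error uniformly), whereas you run three steps numerically to reach $\lambda_3<1/6$ and then switch to the invariant $\tfrac12(1/3)^{2^{j-3}}$; both devices serve the same purpose of leaving multiplicative slack to swallow the $(1/100)^{2^j}$ term, and both finish comfortably by the choice of~$\ell$.
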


\begin{proof}
To prove the first part of claim, we have
$\lambda(X_1) \le 1 - 1/(256 N^2)$ by Theorem~\ref{thm:connectedTri}
using the fact that $X$ has degree~4 and has a self-loop at every vertex.
Thus by Proposition~\ref{prop:BoundsOnF},
\[
1 - \lambda(X_i) \ge \max\{ 1/4, 1{-}\penalty10000 (3/2)^{m_0}/(256 N^2) \}.
\]
Since $m_0 = 100 \log N$, the first inequality follows. (The
constant 100 is not optimal.)

The second part of the claim
is proved  (following~\cite{RozenmanVadhan:DerandSquaring})
by letting $\lambda_i = (64/65)\cdot\penalty10000 (7/8)^{2^i}$ and showing
that $\lambda(X_{m_0+i}) \le \lambda_i$ by induction on~$i$.
The base case of $i=0$ follows from the first inequality since
$(64/65)(7/8) > 3/4$.
Let $\mu_i = (1/100)^{2^i}$.
For the induction step, we have 
\[
X_{m_0+i+1} = X_{m_0+i} \circleS G_{m_0+i}
\]
where $\lambda(G_{m_0+i}) \le (1/100)^{2^i}$.
Thus
\begin{align*}
\lambda( X_{m_0+i+1}) & \le  \lambda_i^2 + (1/100)^{2^i} \\
& \le  \lambda_i^2 \cdot \frac {65}{64}
     \tag{since $\displaystyle (1/100)^{2^i} \le \lambda_i^2/64$ } \\
& \le  \Bigl(\frac{64}{65}\Bigr)^2 \cdot
           \Bigl( \frac78 \Bigr)^{2^{i+1}} \cdot \frac{65}{64} 
              \tag{by the induction hypothesis}\\
& =  \lambda_{i+1}.
\end{align*}
This proves the induction step.  To finish off the claim,
it suffices to show $1/N^2 > (7/8)^{2^\ell}$. Taking
logarithms and by $\ell = 10 + \log \log N$, it suffices to show
\begin{align*}
    2 \log N &< \log (8/7) \cdot 2^{10} \cdot 2^{\log \log N}\\
    &= 2^{10}\cdot\log(8/7)\log N .
\end{align*}
Simple computation shows this holds.
\end{proof}


The directed graph $X_{m_1}$ has such nice expansion properties that it lets us solve
the reachability problem for~$X$ (hence for~$Y$) essentially trivially.  Namely,
two vertices $x$ and~$y$ in~$X$ are connected if and only there is edge between
from $x$ and~$y$ in~$X_{m_1}$.
Since $X_{m_1}$ has polynomial degree, all outgoing edges from~$x$
can be checked in logarithmic space. Therefore, to finish the proof that
$\L = \SL$ we need to establish two facts: first,
Theorem~\ref{thm:expandImpliesConnected} shows that,
since $\lambda(X_{m_1})$ is small enough, to determine whether $y$ is
reachable from~$x$ in~$X$, it is sufficient to check for an edge from
$x$ to~$y$. Second, Lemma~\ref{lem:LogSpaceAlgForEdges} shows that
there is a log space algorithm to determine the $i$-th edge outgoing from~$x$.

\begin{theorem}\label{thm:expandImpliesConnected}
Suppose $G$ is a directed graph with $N$ vertices and $\lambda(G) <\penalty10000 1/N^{2}$.
Then, for every pair of vertices $u$ and~$v$ in~$G$, there is an edge from $u$ to~$v$.
\end{theorem}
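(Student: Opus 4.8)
The plan is to test the expansion hypothesis against the ``point mass'' distribution at $u$ and simply read off the probability of landing at $v$ after a single random step. Fix vertices $u$ and $v$, let $M$ be the (normalized) adjacency matrix of the $d$-regular graph $G$, and let $\vec e_u$ be the $N$-vector that is $1$ in coordinate $u$ and $0$ elsewhere. Since $M_{v,u}$ equals the number of edges from $u$ to $v$ divided by $d$, and $M_{v,u} = (M\vec e_u)_v$, it suffices to show $(M\vec e_u)_v > 0$.

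First I would write $\vec e_u = \vecone + \vec w$ with $\vec w := \vec e_u - \vecone$. Because $\vec e_u$ and $\vecone$ are both probability distributions, $\sum_i w_i = 0$, hence $\langle \vec w, \vecone\rangle = 0$, i.e. $\vec w \perp \vecone$; a direct computation gives $\|\vec w\|^2 = (1 - 1/N)^2 + (N-1)/N^2 = 1 - 1/N < 1$. Using $M\vecone = \vecone$ (Theorem~\ref{thm:adjacencyTrivial}(b)) and the mixing-ratio hypothesis applied to $\vec w$,
\[
M\vec e_u \;=\; \vecone + M\vec w, \qquad \|M\vec w\| \;\le\; \lambda(G)\,\|\vec w\| \;<\; \lambda(G) \;<\; \frac{1}{N^2}.
\]
Now the $v$-th coordinate of $M\vec e_u$ is $1/N + (M\vec w)_v$, and $|(M\vec w)_v| \le \|M\vec w\| < 1/N^2$, so
\[
M_{v,u} \;=\; (M\vec e_u)_v \;>\; \frac1N - \frac1{N^2} \;=\; \frac{N-1}{N^2} \;\ge\; 0,
\]
which is strictly positive once $N \ge 2$; for $N = 1$ the statement is trivial since the unique vertex carries a self-loop. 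Hence there is an edge from $u$ to $v$, as claimed.

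There is no serious obstacle here: the argument is a two-line coordinate computation once the right test vector is chosen. The only points requiring care are (i) that $\vec w \perp \vecone$ uses the convention $\vecone = \langle 1/N,\dots,1/N\rangle$ together with $\sum_i (\vec e_u)_i = \sum_i (\vecone)_i = 1$; (ii) that the slack $1/N^2 < 1/N$ is exactly what makes the bound nontrivial; and (iii) that the proof should be recorded in a sufficiently elementary form — using only $M\vecone = \vecone$, the definition of the mixing ratio, and elementary bounds $|(M\vec w)_v| \le \|M\vec w\|$ and $\|\vec w\| < 1$ — so that it can later be replayed verbatim inside $\VL$, which can name the vector $\vec e_u$ and carry out these finite sums and inequalities.
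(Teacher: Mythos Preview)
Your proof is correct and is essentially the same as the paper's: both decompose $\vec e_u$ as $\vecone + (\vec e_u - \vecone)$, use $M\vecone = \vecone$ and the mixing-ratio bound on the orthogonal part to show each coordinate of $M\vec e_u$ exceeds $1/N - 1/N^2 > 0$. Your write-up is in fact cleaner than the paper's, which contains some minor slips in the intermediate numerics (e.g.\ invoking $\lambda(M)\le 1/N^{1.5}$ and the inequality $Ns^2 \le \|M(e_u-\vec v)\|^2$), though its conclusion is the same.
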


\begin{proof}
Let $M$ be the adjacency matrix of~$G$. Let $u$ be a vertex of~$G$
and let $e_u$ be the vector with a~1 in the $u$~position and all other entries~0.
That is, $e_u$~is the probability distribution assigning probability~1 to
being at vertex~$u$.  We must show that all entries
of the vector $M {e_u}$ are non-zero. In fact, we will show
that all entries of~$M_{e_u}$ are $\ge 1/N -1/N^2$.  Let $\vec v$ be the vector
with all entries equal to~$1/N$. Note that $M \vec v = \vec v$. It suffices
to show that all the entries in the vector $Me_u -\vec v = M(e_u -\vec v)$ have absolute
value $\le 1/N^2$.

The squared magnitude $\|e_u-\vec v\|^2$ is equal to
$(N-1)\cdot (1/N^2) + (1-1/N)^2 = (N-1)/N < 1$. Since
$\lambda(M) \le 1/N^{1.5}$, this means 
\[
\|M(e_u-\vec v)\|^2\le 1/N^3.
\]
Let $s$ be the entry in $M(e_u-\vec v)$ with
maximum absolute value. We have $Ns^2 \le \|M(e_u-\penalty10000 \vec v)\|^2$,
whence $s^2 < 1/N^4$. Thus $|s| < 1/N^2$. This proves the theorem.
\end{proof}

Claim~\ref{claim:lambdaX} and Theorem~\ref{thm:expandImpliesConnected}
imply that, in order to check whether $x$ and~$y$ are in the same
connected component of~$X$, it suffices to check whether there
is an edge from $x$ to~$y$ in~$X_{m_1}$. We now describe how to
traverse edges in~$X_{m_1}$.

The directed graph~$X_{m_1}$ was defined from~$X_1$ by repeatedly
applying derandomized squaring with the graphs~$G_i$, see~(\ref{eq:XiDefn}).
From the definition of derandomized squaring,
$X_m$~has in-degree and out-degree as given in Figure~\ref{fig:XGdegrees}.
From a starting vertex in~$X_m$, the
outgoing edges are indexed by $m$-tuples
\begin{equation}\label{eq:XmEdge}
w ~=~ \langle z, a_1, a_2, \ldots, a_{m-1}, a_m \rangle,
\end{equation}
where $z$~is an index for an edge of~$X_1$
and each $a_i$~is an index for an edge of~$G_i$.
The edge index~$w$ can be viewed as being obtained
by appending $a_m$ to~$w'$, where $w' = \langle z, a_1, a_2, \ldots, a_{m-1} \rangle$
is an index for an edge in~$X_{m-1}$ and $a_m$~is
an index for an edge in~$G_m$.
The value~$z$ and the values of~$a_i$ for $i \le m_0$ are $< Q = 4^q$
and are specified with $2q$~bits each.
For $i = m_0+j$,
$a_j <\penalty10000 Q^{2^j} =\penalty10000 4^{q 2^j}$ and is specified with
$ q 2^{j+1}$ many bits.  Since $m_0 = O(\log N)$ and
$m \le m_1 = m_0 + \log \log N + O(1)$, the edge~$w$ is specified
with $O(\log N)$ bits, in keeping with the fact that
$X_{m_1}$ has polynomial degree.

Suppose $x$ is a vertex in~$X$ and $w$~is an edge index in~$X_m$
as in~(\ref{eq:XmEdge}), We wish to find the vertex~$y$ in~$X$ such
that the $w$-th outgoing edge from~$x$ in~$X_m$ goes to
vertex~$y$.  From the definition of derandomized squaring
this is done by:
\begin{enumerate}
    \item Letting $w' = \langle z, a_1, a_2, \ldots, a_{m-1} \rangle$
and following the $w'$-th outgoing edge from~$x$ in~$X_{m-1}$
to reach a vertex~$u$ in~$X$,
\item Viewing $w'$ as a vertex in~$G_m$
and following the $a_m$-th outgoing edge from~$w'$
in~$G_m$ to reach a vertex~$w''$ in~$G_m$, and
\item Viewing $w''$ as an edge index in~$X_{m-1}$
and following the $w''$-th outgoing edge from $u$
to~$y$ in~$X_{m-1}$.  This is the desired vertex $y$
reached by following the $w$-th outgoing edge in~$X_m$.
\end{enumerate}

\begin{algorithm}\label{alg:XmEdge}
This can be summarized in pseudocode as follows:
\begin{tabbing}
\underline{Input:} \=$x$ a vertex in $X$, and \\
 \>$w = \langle z, a_1,\ldots, a_m\rangle$ an edge index for~$X_m$. \\
\underline{Output:} $y = x[w]$ in $X_m$, the vertex reached via the $w$-th outgoing edge in $X_m$. \\
\underline{Procedure:} \\
~~~\=Let $u=x[w']$ in~$X_{m-1}$, where $w' = \langle z, a_1,\ldots, a_{m-1}\rangle$. \\
\>Let $w'' = w'[a_m]$ in~$G_m$. \\
\>Let $y = u[w'']$ in~$X_{m-1}$.
\end{tabbing}
\end{algorithm}
Thus an edge in~$X_m$ is traversed by traversing an edge in~$X_{m-1}$,
then an edge in~$G_m$, and then another edge in~$X_{m-1}$. This can be implemented
as a recursive procedure. Since the value of $w'$ can be overwritten by the value
of~$w''$, the recursive procedure uses space $O(\log N)$ for $m\le m_1$.

This establishes:

\begin{lemma}[\cite{RozenmanVadhan:DerandSquaring}]\label{lem:LogSpaceAlgForEdges}
There is a logspace algorithm for traversing an edge in~$X_m$.
\end{lemma}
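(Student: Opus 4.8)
The plan is to make the recursive description in Algorithm~\ref{alg:XmEdge} precise enough to bound its space usage by $O(\log N)$, uniformly in $m \le m_1$. Recall that the recursion for traversing an edge in $X_m$ makes two recursive calls to the edge-traversal routine for $X_{m-1}$ (steps 1 and 3), with one call to the edge-traversal routine for $G_m$ sandwiched in between (step 2). A naive implementation of this would use a recursion stack of depth $m$, each frame storing an edge index; since an edge index for $X_m$ has $O(\log N)$ bits, this would give $O(m \log N) = O(\log^2 N)$ space, which is not good enough. The key observation — already flagged in the paragraph preceding the lemma statement — is that the two recursive calls to $X_{m-1}$ do not need to be nested: the first call to $X_{m-1}$ produces the intermediate vertex $u$ and can then be completely forgotten, and the working storage holding $w'$ can be \emph{overwritten in place} by $w''$ before the second call. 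So the recursion is really tail-recursive after the middle step, and can be flattened into an iterative loop.

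First I would unwind the recursion explicitly. Traversing the $w$-th edge of $X_m$, where $w = \langle z, a_1, \ldots, a_m\rangle$, amounts to: set the ``current edge index'' to $w' = \langle z, a_1, \ldots, a_{m-1}\rangle$ and the ``current vertex'' to $x$; recursively traverse in $X_{m-1}$ to get $u$; compute $w'' = w'[a_m]$ in $G_m$; recursively traverse the $w''$-th edge of $X_{m-1}$ from $u$. Iterating this down through $X_{m-1}, X_{m-2}, \ldots, X_1$, the computation becomes a sequence of phases. In each phase at level $j$ (going from $m$ down to $2$) we have a current vertex in $X$ and a current edge index, and we must first descend to level $1$ (repeatedly dropping the last coordinate) to resolve the ``first half'' of the level-$j$ step, obtaining the intermediate vertex $u$; then apply the $G_j$-step to the stored edge index to get $w''$; then the ``second half'' is again a level-$(j-1)$ traversal of $u$ along $w''$. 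The crucial point is that at any moment during the computation, the only data that must be retained are: the current level $j$ (needing $\log m = O(\log\log N)$ bits), the current vertex in $X$ (needing $\log N$ bits), the current edge index (needing $O(\log N)$ bits, since it is at worst an edge index for $X_{m_1}$), and a logarithmic-size stack of ``which half we are in'' markers — but this last stack can itself be encoded compactly because the structure is a simple ``descend, middle-step, descend again'' pattern that can be driven by a single counter rather than an explicit stack.

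The key steps, in order, are: (1) State precisely the iterative reformulation of Algorithm~\ref{alg:XmEdge}, maintaining variables $(j, \text{vertex}, \text{edge-index})$ plus a phase counter. (2) Verify correctness by induction on $m$: the iterative procedure computes exactly the value $x[w]$ as given by the recursive definition of derandomized squaring; this uses Theorem~\ref{thm:circleSWellDefn} to know the graphs $X_m$ are well-defined and the edge-index conventions from~(\ref{eq:XmEdge}) are consistent. (3) Bound the space: the edge index for $X_m$ with $m \le m_1 = m_0 + \log\log N + O(1)$ occupies $\sum_{i\le m_0} 2q + \sum_{j\ge 1} q 2^{j+1}$ bits $= O(\log N) + O(q \cdot 2^{\log\log N}) = O(\log N)$ bits; a vertex of $X$ takes $\log N$ bits; the level counter and phase counter take $O(\log\log N)$ bits; and each call to the $G_j$-edge-relation runs in space $O(2^{j-m_0} + \log m_0) = O(\log N)$ by the stated logspace-and-polylog-time constructibility of the $H_i$'s, which we may invoke for free (it is part of the hypothesis on the $G_i$'s, and is reused, not accumulated, so it does not add to the running total). (4) Conclude that the total workspace is $O(\log N)$, establishing the lemma.

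The main obstacle is step~(1)–(3): getting the flattening of the recursion right so that the ``in-place overwrite'' of $w'$ by $w''$ is actually legitimate, i.e., that no information needed later is destroyed by it, and then carefully accounting that the recursion stack really collapses to a single logarithmic-size counter rather than an $O(\log\log N)$-deep stack of $O(\log N)$-bit edge indices (the latter would already be fine for $O(\log N)$ space only if $\log\log N \cdot \log N = O(\log N)$, which is false, so the collapse is genuinely needed). Everything else — correctness of the derandomized-squaring edge computation and the space bound for each individual $G_j$-step — is either immediate from the definitions or already granted by the hypotheses on the $H_i$'s, so I expect the bookkeeping for the space-efficient recursion unwinding to be the only part requiring real care.
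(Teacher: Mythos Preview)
Your proposal is correct and follows essentially the same approach as the paper: both rely on the Rozenman--Vadhan observation that in Algorithm~\ref{alg:XmEdge} the storage for $w'$ can be reused for $w''$, so the recursion carries only one $O(\log N)$-bit edge index plus $O(m)$ bits of control information. The paper's own proof is just the single sentence immediately preceding the lemma (``Since the value of $w'$ can be overwritten by the value of $w''$, the recursive procedure uses space $O(\log N)$ for $m\le m_1$''), so your write-up is considerably more detailed---and more explicit about the bookkeeping pitfalls---but it is expanding the same argument rather than taking a different route.
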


\noindent
The above constructions immediately give:

\begin{theorem}[\cite{Reingold:UndirectedConn_STOC,Reingold:UndirectedConn}]
$\L = \SL$.
\end{theorem}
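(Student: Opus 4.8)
The plan is simply to assemble the pieces established above. Recall from \Cref{sec:Intro} that USTCON (st-connectivity for undirected graphs) is complete for $\SL$ under logspace many-one reductions, and that $\L\subseteq\SL$ holds trivially. Hence it suffices to give a deterministic logspace algorithm that decides USTCON; this yields $\SL\subseteq\L$ and therefore $\L=\SL$.

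Given an undirected graph $Y$ and vertices $s,t$, the first step is the easy first-stage reduction of \Cref{sec:RVproof}: from $Y$ we build in logspace the $4$-regular, properly labeled directed graph $X$ with a self-loop at every vertex, on $N$ vertices, together with the vertices $x,y$ of $X$ corresponding to $s,t$ (say $x=\langle s,0\rangle$ and $y=\langle t,0\rangle$). Because $X$ is obtained from an undirected graph, every directed edge of $X$ has a reverse directed edge, so reachability in $X$ is symmetric; and by construction $t$ is reachable from $s$ in $Y$ iff $y$ is reachable from $x$ in $X$, iff $x$ and $y$ lie in the same connected component of $X$.

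Next, form $X_{m_1}$ by setting $X_1=X^q$ and iterating $X_{i+1}=X_i\circleS G_i$ with the auxiliary graphs $G_i$ of \Cref{sec:RVproof}, for $i$ up to $m_1=m_0+\ell$ where $m_0=\lceil 100\log N\rceil$ and $\ell=10+\log\log N$. Derandomized squaring only places an edge between vertices already joined by a directed path, so it never merges connected components of $X$; consequently, for vertices in a common component $C$ of $X$ the restriction of $X_{m_1}$ to $C$ is exactly the graph obtained by running the same construction on $C$, which by \Cref{claim:lambdaX} has mixing ratio $<1/N^2$. By \Cref{thm:expandImpliesConnected}, such a graph contains an edge between every ordered pair of its vertices; and since no inter-component edge is ever created, $X_{m_1}$ has no edge from $x$ to $y$ when $x,y$ lie in distinct components of $X$. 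Therefore $y$ is reachable from $x$ in $X$ if and only if $X_{m_1}$ contains a directed edge from $x$ to $y$.

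Finally, $X_{m_1}$ has polynomial degree --- its edge indices are $O(\log N)$-bit strings, as discussed above --- so all outgoing edges of $x$ in $X_{m_1}$ can be enumerated using a logspace counter, and by \Cref{lem:LogSpaceAlgForEdges} each can be traversed in logspace; we accept iff some resulting endpoint equals $y$. Composed with the first-stage reduction this is a deterministic logspace decision procedure for USTCON, which completes the proof. There is no real obstacle left at this stage; the only point requiring care is the treatment of disconnected $X$, namely observing that derandomized squaring respects connected components, so that \Cref{claim:lambdaX} (stated for connected $X$) applies component-by-component and no spurious inter-component edges appear.
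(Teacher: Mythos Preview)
Your proposal is correct and follows exactly the approach the paper takes: assemble the first-stage reduction, \Cref{claim:lambdaX}, \Cref{thm:expandImpliesConnected}, and \Cref{lem:LogSpaceAlgForEdges} into a logspace procedure for USTCON. The paper's own proof is a single line (``The above constructions immediately give''), and you have simply spelled out the assembly, including the component-wise argument that the paper only gestures at with ``Or if not, restrict attention to a connected subset of~$X$''; your observation that derandomized squaring respects connected components (so \Cref{claim:lambdaX} applies to each component with the same parameters, yielding mixing ratio $<1/N^2\le 1/|C|^2$) is the right way to make this precise.
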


\section{Formalization in \texorpdfstring{$\VL$}{VL}}\label{sec:FormalizeInVL}

\subsection{Preliminaries for bounded arithmetic}
For space reasons, we give only a high-level descriptions of the theories $\VNC^1$,
$\VL$ and $\VSL$. For more information on bounded arithmetic, the reader
should consult \cite{Buss:bookBA} and
Kraj\'\i\v cek~\cite{Krajicek:book} for a broad introduction,
and Cook-Nguyen~\cite{CookNguyen:book} for the definitions
and full development of $\Vzero$, $\VNC^1$, $\VL$ and $\VSL$.

There is a long history of formalizing results about low-level computational
complexity in bounded arithmetic. In fact, the original definitions of
$\IDelta_0$, $\Stheory^i_2$ and $\Ttheory^i_2$ were all motivated by the goal
of designing arithmetic theories corresponding to reasoning with
low-complexity concepts~\cite{Parikh:feasibility,Buss:bookBA}. 
As discussed in the introduction,
a variety of complexity upper and lower bounds have
been established within bounded arithmetic theories, mostly in
subtheories of~$\Ttheory_2$ (which is a theory that can reason
about concepts expressible in the polynomial hierarchy).
The most important such result for the present paper is that
the existence of expander graphs is provable in
$\VNC^1$~\cite{BKKK:Expanders}. The present paper
works primarily in~$\VL$, but also in its subtheory $\VNC^1$.

The theories $\VNC^1$, $\VL$, and $\VSL$ are extensions of~$\Vzero$.
The theory~$\Vzero$ has logical strength that corresponds
to the complexity class~$\AC^0$.
It is a second-order (i.e,, a two-sorted) theory of arithmetic, with two
sorts of (nonnegative natural)
numbers (first-order objects) and of strings (second-order objects).
Strings can be viewed either as members of $\{0,1\}^*$ or as finite sets
of numbers.
The notation $X(i)$, where $X$ is a
string and $i\geq 0 $ is a natural number,
means the Boolean value of the $i^{th}$ entry in string~$X$.
Sometimes ``$i{\in}X$'' is written instead of ``$X(i)$''.
The constants $0$ and $1$ are number terms, and addition and multiplication
are number functions.
Another term of type number is string length~$|X|$, defined to equal
one plus
the value of the largest element in~$X$ when viewed as a set.
The addition and multiplication functions map pairs of
numbers to numbers.
The string length function~$|X|$ maps strings to numbers.
The equality relation is defined both for numbers and strings.
The axioms of~$\Vzero$ consist of a finite set
of ``BASIC'' open axioms defining simple properties of the
constants, relation symbols, and function symbols,
plus the extensionality axiom
and the $\Sigma^B_0$-Comprehension axioms
\[
\Sigma^B_0{\hbox{\rm -COMP:}} ~~~~
\exists X{\le}y\, \forall z{<}y \, (X(z)\liff \varphi(z))
\]
for any formula $\varphi$ in~$\Sigma^B_0$ not containing $X$ as
a free variable, but possibly containing free variables
other than~$z$.  A $\Sigma^B_0$ formula is one
in which all quantifiers are bounded and
which contains no
second-order quantifiers.  The notation $(\exists X{\le}y)\psi$
means the same as
$\exists X (|X|{\le}y \wedge \psi)$.

For connecting bounded arithmetic to computational complexity,
numbers are viewed as ranging over small values, namely the lengths
of strings.  Strings are viewed as inputs to algorithms.
A $\Sigma^B_0$-formula $\varphi$ is a formula in which
all quantifiers have the form $\forall x{\le}t$ or $\exists x{\le}t$;
namely, all quantifiers are bounded and quantify numbers. Second-order
variables are allowed in a $\Sigma^B_0$ formula, but not second-order
quantifiers. It is well-known that the $\Sigma^B_0$-formulas $\varphi(X)$,
with only~$X$ free,
can express exactly properties that are in $\AC^0$: for this, the
input~$X$ is viewed as a binary string of length~$|X|$.
The $\Sigma^B_1$-formulas can be defined as the smallest class
of formulas containing $\Sigma^B_0$ and closed under application of
existential second-order (bounded) quantification. For instance, if $\varphi(X,Y)$
is a $\Sigma^B_0$-formula, then $(\exists Y{\le} t) \varphi(X,Y)$ is a
$\Sigma^B_1$-formula.\footnote{Since all first-order quantifiers
are a bounded, it makes no essential difference for the following
discussion whether the second-order quantifiers in $\Sigma^B_1$-formulas
are required to be bounded.}

The logical strengths of $\VNC^1$, $\VL$ and $\VSL$ can be characterized as
follows.
Let $T$ denote one of the theories $\VNC^1$, $\VL$ or $\VSL$,
and let $\mathcal C$ denote the corresponding complexity class,
uniform~$\NC^1$ (alternating log time), log space $\L$,
or symmetric logspace $\SL$.  A formula $\varphi(X)$ is $\Delta^B_1$ definable
w.r.t.~$T$ provided that $T$ proves both $\varphi(X)$ and $\lnot \varphi(X)$
are equivalent to $\Sigma^B_1$-formulas.  Then the $\Delta^B_1$-formulas
express precisely the predicates in the complexity class~$\mathcal C$.
A string-valued function $F(X)$ is $\Sigma^B_1$-defined by~$T$
provided there is a $\Sigma^B_1$-formula $\varphi(X,Y)$ which expresses
the condition $Y=F(X)$ such that $T$~proves $\forall X\, \exists Y\, \varphi(X,Y)$.
Then, the $\Sigma^B_1$-definable functions of~$T$ are precisely the
functions whose bit-graphs are in the complexity class~$\mathcal C$.

The theories $\VNC^1$, $\VL$ and $\VSL$ are axiomatized in a way
that ensures these just-stated properties hold. The definition of
$\VNC^1$ can be found in
Cook-Morioka~\cite{CookMorioka:NCone} or Cook-Nguyen~\cite{CookNguyen:book}.
The axioms for $\VL$ and $\VSL$ assert the existence of paths
in graphs.  A path can be viewed as a sequence of numbers: a sequence
is encoded by a string~$Z$ with $Z(i,y)$ intended to denote that $y$ is
the $i$-th element of a sequence. Since the $i$-th element is unique,
the least~$y$ such that $Z(i,y)$ is used. The notation $(Z)^i$ is
used for this: specifically,
\[
y=(Z)^i ~\Leftrightarrow~
     (Z(i,y) \lor y = |Z|) \land (\forall y'{<}y)\lnot Z(i,y).
\]
Here $Z(i,y)$ means $Z(\langle i,y \rangle)$ for a suitable,
logspace computable pairing function
$\langle \cdot, \cdot \rangle$ on numbers.

A string~$X$ represents the edge relation of a directed graph on
the vertices $[a+1] = \{0,1,\ldots,a\}$, by letting $X(y,z)$
denote that there is an edge from $y$ to~$z$, where $y,z \le a$.

A path of length~$b+1$ in the graph starting at vertex~0
is represented by a string~$Z$ so that
\[
\deltaPath(a, b, X, Z) ~\leftrightarrow~
(\forall i{\le} b) (\, (Z)^i \le a \,)
 \land (\forall i{<} b) X(\, (Z)^i, (Z)^{i+1}\,) \land (Z)^0 = 0.
\]
Also, let $\Unique(a,X)$ be $(\forall x{\le}a)(\exists ! y{\le}a)X(x,y)$
expressing that all vertices have a unique outgoing edge.  Then,
$\VL$ is axiomatized as $\Vzero$ plus the axiom
\[
\Unique(a, X) ~\limplies~ (\exists Z)\deltaPath(a,a,X,Z).
\]
This asserts the existence of a path of length~$a$ in the
directed graph with edge relation~$X$, starting from the
vertex~0.\footnote{We follow
\cite{CookNguyen:book} in including the condition
$\Unique$, but all that is really needed is that every vertex
has degree at least one. There is also no loss of axiomatic strength
in requiring the first two parameters of $\deltaPath$ both equal to~$a$. }

The theory $\VSL$ is axiomatized in terms of connectivity
and paths in {\em undirected} graphs. Accordingly, define
\[
\Symm(a,X) ~\Leftrightarrow~(\forall x{\le} a)X(a,a) \land
    (\forall x{\le}a)(\forall y {\le} a)(X(x,y)\liff X(y,x))
\]
to express that the graph with edge relation~$X$ is
symmetric and has self-loops. A connectivity predicate $C(\cdot)$
is a string that identifies vertices connected to the vertex~$0$.
It is paired with a set of paths from~$0$ to each vertex connected to~$0$.
Namely, letting $Z$ now take triples as input,
define $Z^{[u]}$ to be the binary predicate
so that $Z^{[u]}(i,y)\Leftrightarrow Z(u,i,y)$.
In other words, the string $Z^{[u]}$ is the $u$-th slice of the string~$Z$.
Undirected connectivity is thus witnessed by:
\begin{eqnarray} \label{eq:deltauconn} 
\lefteqn{\deltaUCONN(a, X, C, Z) ~\Leftrightarrow } \notag \\
& & \qquad\qquad  C(0) \land (\forall x {\le}a)(\forall y {\le}a)(C(x) \land X(x,y) \limplies C(y)) \notag \\
& &
    \qquad\qquad \land ~
    (\forall x{\le}a)(C(x) \limplies \deltaPath(a,a, X, Z^{[x]})
        \land (Z^{[0]})^a = 0  \land (Z^{[x]})^a = x ).
\end{eqnarray}
This expresses that the connective predicate is closed under the edge
relation, and that every vertex~$x$ connected to~$0$ has a path from~$0$.
The theory~$\VSL$ is axiomatized by~$\Vzero$ plus the axiom
\[
\Symm(a,X) ~\limplies~ (\exists C)(\exists Z)\deltaUCONN(a,X,C,Z) .
\]


\subsection{Formalization in \texorpdfstring{$\VNC^1$}{VNC1} and \texorpdfstring{$\VL$}{VL}}

$\VNC^1$ is a subtheory of $\VL$,
which in turn is a subtheory
of~$\VSL$ \cite{CookNguyen:book,Kolokolova:thesis}.
As was discussed in
\cite[Sec.\,6.4]{BKKK:Expanders}, $\VNC^1$ has sufficient logic
strength to:
\begin{itemize}
\setlength{\itemsep}{0pt}
\item[\rm (i)] Count the number of members of polynomial size sets; that is, count the
number of elements in a set coded by a string. In particular,
the in-degree and out-degree of a vertex in a graph is definable, concepts
such as the $i$-th outgoing or incoming edge are definable.
\item[\rm (ii)] Reason about integers and rational numbers as represented
by numbers or pairs of numbers.
\item[\rm (iii)] Use strings to encode vectors of integers and vectors
of rational numbers.
\item[\rm (iv)] Define the summation of vectors
of integers and vectors of rational numbers with common denominator. 
Prove that summations satisfy such as commutativity, associativity,
and distributivity properties,
and invariance under reorderings of terms.

A summation is coded by a string that lists all of
its terms; thus the summation
has length equal to a number (a first-order object).
\item[\rm (v)] Prove the Cauchy-Schwarz theorem. Define the square norm~$\|\vec u\|^2$
of a vector~$\vec u$. Define the projection of~$\vec u$ onto a non-zero vector~$\vec v$
via the formula $(\langle u,v \rangle/\|v\|^2)\vec v$. (Still subject to the proviso
that entries in vectors are rationals with a common denominator.)
\item[\rm (vi)]  Use strings to encode the edge relation of a (multi)graph,
and define the rotation map of a graph.
Define edge expansion (see below for more details).
\item[\rm (vii)] Carry out straightforward operations on graphs, such as
adding a self-loop to each vertex, or forming graph powering,
tensor products of graphs, and replacement product of graphs.
\end{itemize}
To this list, we can also add, that the theory $\Vzero$ (and hence $\VNC^1$ and~$\VL$) can
\begin{itemize}
\item[\rm (viii)] Define small (logarithmic) powers of constants.
\end{itemize}
Proving the last point is not entirely trivial: what it means is that,
for $p\in \mathbb N$, the function~$f_p(m) = p^{\lceil \log m \rceil}$
is definable in~$\Vzero$. This is well-known for $p=2$.
By convention, the logarithm is base two,
and $f_2(m)$ is the least power of~$2$
greater than or equal to~$m$. This is well-known to be definable and
provably total in~$\Vzero$. In addition $\lceil \log m \rceil $ is
equal to $|m-1|$ where $|\cdot|$ is the usual length function
of bounded arithmetic.  More generally, for any fixed integer $p\ge 2$,
the function~$f_p$ is definable in~$\Vzero$. This uses standard techniques,
e.g., see Cook-Nguyen~\cite{CookNguyen:book}.

In view of (vii) above, it is clear that $\VNC^1$, and hence $\VL$, can
define the derandomized squaring product of two graphs. Namely, given
graphs $X$ and $G$ coded by strings, $\VNC^1$~can prove the
existence a string~$Z$ encoding the graph $X \circleS G$, and prove its
straightforward properties; e.g., defining the rotation map of~$Z$ in terms
of the rotation maps of $X$ and~$G$.

$\VNC^1$ can use strings to encode matrices of integers and rationals.
For a matrix~$M$ and vector~$\vec v$ whose entries are rationals
with a common denominator, $\VNC^1$~can define the product $M\vec v$
as well. $\VNC^1$~proves straightforward properties of matrix
products, e.g., associativity and distributivity. $\VNC^1$~can similarly
use strings to encode tensors, and prove straightforward properties
of tensors.

One place where $\VNC^1$ has difficulties is with defining the
norm~$\|\vec v\|$ of~$\vec v$, even if the entries of~$\vec v$
have common denominator. This is because $\VNC^1$~has difficulties working
with square roots, as the square root of a rational number may not be
a rational. However, $\VNC^1$~{\em can} define $\|\vec v\|^2$ via
summation. For instance, $\VNC^1$~can express the property
$\|\vec v\| = \alpha$ as $\|\vec v\|^2 = \alpha^2$
and the property
$\|\vec v\| \le \alpha$ as $\|\vec v\|^2 \le \alpha^2$.
In this
setting, $\vec v$~is a vector of rational numbers---with common denominator--and
is coded by a second-order object~$V$ (a set) and
$\alpha$~is a first-order object coding a rational number.  The fact that
$\VNC^1$~can express, say, $\|\vec v\|^2 \le \alpha^2$ corresponds to the fact
that this an alternating logtime property of $v$ and~$\alpha$.

For example, $\VNC^1$ can state and prove the Cauchy-Schwarz theorem in the form 
$\langle \vec x ,\vec y \rangle^2 \le \|x\|^2 \cdot \|\vec y\|^2$~\cite{BKKK:Expanders}.

A matrix norm~$\|M\|$ is harder to express. Let a matrix~$M$ of
rational numbers with common denominator be represented by a second
order object, also denoted~$M$. The statement $\|M\| \le \alpha$ is
expressed by a $\Pi^{1,b}_1$-formula
\[
     \forall \vec v \, (\|M \vec v\|^2 \le \alpha^2 \|\vec v\|^2),
\]
where it is implicit in the notation that $\vec v$ and~$M$ contain
rational numbers with common denominators. The quantified $\vec v$ is encoded
by a second-order object~$V$.

Even though $\VNC^1$ can define matrix norms only in a roundabout way,
it can readily prove straightforward properties about vector and matrix norms.
The next lemma has a couple simple examples that
will be useful later.
\begin{lemma}\label{lem:TriangleInEqMatrixNms}
Working with vectors and matrices that involve rational numbers with common
denominators, $\VNC^1$~can prove:
\begin{itemize}
\item[\rm (a)] (Triangle inequality)
If $\|\vec u\|^2 \le a^2$ and $\|\vec v\|^2 \le b^2$,
then $\|\vec u +\vec v \|^2 \le (a+b)^2$.
\item[\rm (b)] (Matrix product norms)
If $\|M\| \le a$ and $\|N\|\le b$, then $\|MN\| \le ab$.
\end{itemize}
\end{lemma}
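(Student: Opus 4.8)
The plan is to prove both parts by reducing everything to statements about the \emph{squared} norm $\|\vec w\|^2 = \sum_i w_i^2$, since that is the quantity $\VNC^1$ can manipulate directly (as a summation over a string-coded list of terms), and to invoke the Cauchy--Schwarz inequality in the form $\langle \vec x, \vec y\rangle^2 \le \|\vec x\|^2 \cdot \|\vec y\|^2$, which was noted above to be provable in $\VNC^1$. For part~(a), I would first expand $\|\vec u + \vec v\|^2 = \|\vec u\|^2 + 2\langle \vec u, \vec v\rangle + \|\vec v\|^2$ using the provable properties of summations (commutativity, distributivity, invariance under reordering from item~(iv)). Then I would bound $\langle \vec u, \vec v\rangle \le |\langle \vec u, \vec v\rangle|$ and use Cauchy--Schwarz together with the hypotheses to get $\langle \vec u, \vec v\rangle^2 \le \|\vec u\|^2\|\vec v\|^2 \le a^2 b^2$, hence $\langle \vec u, \vec v\rangle \le ab$ (here one must be slightly careful: $\VNC^1$ reasons about rationals, and $\xi^2 \le a^2b^2$ with $a,b\ge 0$ gives $\xi \le ab$ by an elementary argument about rationals, using item~(ii)). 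Combining, $\|\vec u+\vec v\|^2 \le a^2 + 2ab + b^2 = (a+b)^2$.

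For part~(b), the statement $\|M\| \le a$ unpacks by definition (as described just before the lemma) to the $\Pi^{1,b}_1$ formula $\forall \vec v\,(\|M\vec v\|^2 \le a^2\|\vec v\|^2)$, and similarly $\|N\| \le b$ means $\forall \vec v\,(\|N\vec v\|^2 \le b^2\|\vec v\|^2)$. To show $\|MN\| \le ab$, I would fix an arbitrary rational vector $\vec v$ (with the implicit common-denominator convention), apply the $N$-bound to $\vec v$ to get $\|N\vec v\|^2 \le b^2\|\vec v\|^2$, then apply the $M$-bound to the vector $\vec w := N\vec v$ to get $\|M(N\vec v)\|^2 \le a^2\|N\vec v\|^2$, and chain the two inequalities to obtain $\|M(N\vec v)\|^2 \le a^2 b^2 \|\vec v\|^2 = (ab)^2\|\vec v\|^2$. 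Since $\VNC^1$ proves associativity of matrix products (stated above), $M(N\vec v) = (MN)\vec v$, so $\|(MN)\vec v\|^2 \le (ab)^2\|\vec v\|^2$ for all $\vec v$, which is exactly $\|MN\| \le ab$.

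The only genuine subtleties — and the places I would be most careful in the write-up — are bookkeeping ones rather than conceptual: (1) that the vector $N\vec v$ still has rational entries with a common denominator, so that it is a legitimate value to substitute into the $M$-norm statement, which follows from the fact (noted above) that $\VNC^1$ defines matrix-vector products and preserves the common-denominator representation; and (2) that the elementary inference ``$0 \le \xi$, $\xi^2 \le c^2$, $c \ge 0$ $\Rightarrow$ $\xi \le c$'' and the sign handling for $\langle \vec u, \vec v\rangle$ (which may be negative) are carried out within $\VNC^1$'s arithmetic for rationals. Both are routine given items~(ii)--(v) and the Cauchy--Schwarz lemma, so there is no real obstacle; the lemma is deliberately a warm-up illustrating that $\VNC^1$'s indirect treatment of norms suffices for the norm manipulations needed later in the formalization.
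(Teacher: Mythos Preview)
Your proposal is correct and follows essentially the same approach as the paper: for~(a), expand $\|\vec u+\vec v\|^2$, apply Cauchy--Schwarz to bound the cross term, and combine; for~(b), instantiate the two norm bounds in sequence on $\vec v$ and $N\vec v$ and use associativity $M(N\vec v)=(MN)\vec v$. The only cosmetic difference is that the paper phrases~(a) as a proof by contradiction (squaring $\|\vec u+\vec v\|^2-\|\vec u\|^2-\|\vec v\|^2$ to avoid ever extracting a ``square root''), whereas you invoke the equivalent elementary rational-arithmetic step $\xi^2\le c^2,\ c\ge 0 \Rightarrow \xi\le c$ directly; both are straightforward in $\VNC^1$.
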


\begin{proof}
For~(a), the following argument can be carried out in~$\VNC^1$.
We have
\[
\| \vec u + \vec v\|^2 ~=~ \langle \vec u+\vec v, \vec u + \vec v \rangle
  ~=~ \|\vec u\|^2 + \|\vec v\|^2 + 2\langle \vec u, \vec v \rangle.
\]
Therefore, by the Cauchy-Schwarz inequality,
\[
(\, \| \vec u + \vec v\|^2 - \|\vec u\|^2 - \|\vec v\|^2 \, )^2
   ~=~ 4 \langle \vec u, \vec v \rangle^2 ~\le~
        4 (\|\vec u \|^2 \cdot \|\vec v\|^2) .
\]
Suppose that the triangle inequality fails. Then we this gives
\[
( (a+b)^2 - a^2 - b^2 )^2 ~>~ 4 a^2 b^2 .
\]
Simplifying gives $ 4 a^2 b^2 > 4 a^2b^2$, and this is a contradiction.

The proof of~(b) is entirely trivial. First of all,
for an arbitrary vector~$\vec v$, the bounds on the norms give
$\| M(N\vec v) \|^2 \le a^2\cdot b^2 \cdot \|\vec v\|^2$.
The fact that $M(N\vec v) =(MN)\vec v$ follows immediately since
$\VNC^1$ can prove simple facts about summations including distributivity and
the invariance of summations under reordering of terms.
\end{proof}

The property of $M$ being the adjacency matrix for a uniform
graph~$G$ is straightforward to express in~$\VNC^1$ given that
rotation map (or, equivalently, the edge relation) of~$G$ is encoded
by a second-order object.  For such a matrix, the property of the
mixing ratio being~$\le \alpha$ is definable by
\begin{equation}\label{eq:mixingRatioVNCone}
     \forall \vec v \Bigl[ \vec v \perp \vecone \limplies \|M \vec v\|^2 \le \alpha^2 \|\vec v\|^2 \Bigr].
\end{equation}
Again $\vec v$ is a vector of rational numbers with common denominator and
is encoded by a second-order object~$V$. Let $\MixRat(\vec v, M, \alpha)$ denote
the subformula of~(\ref{eq:mixingRatioVNCone}) enclosed in square brackets, and
let the entire formula~(\ref{eq:mixingRatioVNCone}) be denoted
$\MixRat(G,\alpha)$  (where we are conflating the graph~$G$ with
its adjacency matrix~$M$).

An upper bound on the edge expansion is also expressed by a $\Pi^{1,b}_1$-formula
in~$\VNC^1$. Referring back to Definition~\ref{def:edgeExpansion}, the
property that a $d$-regular graph~$G$ has edge expansion~$\ge \alpha$ is expressed by
\begin{equation}\label{eq:edgeExpandVNCone}
\forall U\, \Bigl[ U\subset V(G) \;\land\; 0<|U|\le |V(G)|/2 \;\;\limplies\;\;
     \frac {|E(U, V(G)\setminus U)|}{d\cdot |U|} \ge \alpha \Bigr],
\end{equation}
where ``$V(G)$'' denotes the set of vertices of~$G$,
and $|\cdot|$ denotes set cardinality.  The part of the
formula in square brackets is expressible by
a $\Sigma^{1,b}_0$-formula and computable in
alternating log time, but the second-order
quantifier~``$\forall U$'' ranging over subsets of vertices
of~$G$ makes (\ref{eq:edgeExpandVNCone}) a $\Pi^{1,b}_1$-property.

Let $\EdgeExp( U, G, \alpha)$
respectively denote
the subformula of~(\ref{eq:edgeExpandVNCone})
that is enclosed in square brackets. The entire
formula~(\ref{eq:edgeExpandVNCone}), namely,
$\forall U \, \EdgeExp(U, G,\alpha)$ is denoted
$\EdgeExp(G,\alpha)$.

\smallskip

Since $\VNC^1$ can define both mixing ratio and edge expansion,
it can also {\em state} (a version of) the Cheeger Inequality. 
Unfortunately, it is an open
question whether $\VNC^1$ or $\VL$ can {\em prove} the Cheeger Inequality.

We will work with the following
form of the Cheeger lemma (for regular, undirected graphs).

\begin{theorem}[Cheeger Lemma]\label{thm:Cheeger}
Let $G$ be an undirected graph. Let $\epsilon$ be the
edge expansion of~$G$, and $\lambda$~be the spectral gap
of~$G$.  Then
\[
2 \epsilon ~\ge~ \lambda ~\ge~ \frac{\epsilon^2}{2}.
\]
\end{theorem}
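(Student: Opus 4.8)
The plan is to prove the two inequalities of the Cheeger Lemma separately, since they are genuinely different in character. The easy direction is $2\epsilon \ge \lambda$; the hard direction is $\lambda \ge \epsilon^2/2$. Throughout, recall from Theorem~\ref{thm:JCdecomp} that for an $(n,d,\eta)$-graph with adjacency matrix~$M$ we may write $M = J_n + \eta D$ where $\|D\|\le 1$ and all row and column sums of~$D$ vanish; the spectral gap is $\lambda = 1-\eta$. Equivalently, $\lambda$ is the smallest value such that $\langle (I-M)\vec v,\vec v\rangle \ge \lambda\,\langle\vec v,\vec v\rangle$ for all $\vec v\perp\vecone$ (this reformulation via the Rayleigh quotient of $I-M$ is convenient, and should be derived first from Definition~\ref{def:mixingratio} together with symmetry of~$M$ — for an undirected graph $M$ is symmetric, so $\|M\vec v\|\le\eta\|\vec v\|$ on $\vecone^\perp$ is equivalent to the quadratic-form bound).

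For the upper bound $2\epsilon\ge\lambda$: take a set $U$ with $\emptyset\ne U\subsetneq V$, $|U|\le n/2$, achieving (or nearly achieving) the edge expansion~$\epsilon$. Use the test vector $\vec v = \vec 1_U - \frac{|U|}{n}\vecone$ (i.e., the indicator of~$U$ projected orthogonal to~$\vecone$), so that $\vec v\perp\vecone$. A direct computation of the Dirichlet form $\langle (I-M)\vec v,\vec v\rangle = \frac{1}{d}\sum_{\{i,j\}\in E}(v_i-v_j)^2$ — which is the standard identity for the graph Laplacian and which $\VNC^1$ can verify by manipulating summations over edges — gives $\langle(I-M)\vec v,\vec v\rangle = |E(U,\overline U)|/d$. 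Meanwhile $\langle\vec v,\vec v\rangle = |U|(1-|U|/n) \ge |U|/2$. Combining with the Rayleigh characterization of~$\lambda$ yields $\lambda \le \frac{|E(U,\overline U)|/d}{|U|/2} = 2\,\frac{|E(U,\overline U)|}{d|U|}$, and taking the minimizing~$U$ gives $\lambda\le 2\epsilon$. This direction is entirely elementary and presents no obstacle.

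For the lower bound $\lambda \ge \epsilon^2/2$: this is the classical ``hard'' direction of Cheeger, and the main obstacle is exactly the one the paper flags — a fully general proof seems to require eigenvector/square-root manipulations not available in~$\VL$, which is why the paper settles for the Mihail version. For the present theorem (stated without any logspace restriction), I would follow the standard argument. Let $\vec v\perp\vecone$ be a vector on which the Rayleigh quotient of $I-M$ is small, say $\langle(I-M)\vec v,\vec v\rangle = \lambda\|\vec v\|^2$ (take $\vec v$ to be a bottom eigenvector of $I-M$ on $\vecone^\perp$, equivalently a top eigenvector of $M$ other than~$\vecone$). Shift and truncate: replace $\vec v$ by its positive part about a suitable threshold so that the support~$S$ of the resulting nonnegative vector~$\vec f$ satisfies $|S|\le n/2$, while $\langle(I-M)\vec f,\vec f\rangle/\|\vec f\|^2 \le \lambda$ (the truncation only decreases the Rayleigh quotient — this is the step that uses that $\vec v$ is an eigenvector, or at least the variational inequality carefully). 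Then run the ``Cauchy–Schwarz trick'': order the support of~$\vec f$ so $f_1\ge f_2\ge\cdots$, and estimate
\[
\sum_{\{i,j\}\in E}|f_i^2 - f_j^2| \;=\; \sum_{\{i,j\}\in E}|f_i-f_j|\,(f_i+f_j) \;\le\; \Bigl(\sum_{\{i,j\}\in E}(f_i-f_j)^2\Bigr)^{1/2}\Bigl(\sum_{\{i,j\}\in E}(f_i+f_j)^2\Bigr)^{1/2}
\]
by Cauchy–Schwarz (Lemma~\ref{lem:Sedrakyan} / the Cauchy–Schwarz inequality already available). Bound the first factor by $(d\lambda)^{1/2}\|\vec f\|$ using the Dirichlet identity, and the second factor by $(2d)^{1/2}\|\vec f\|$ since $(f_i+f_j)^2\le 2(f_i^2+f_j^2)$ and the graph is $d$-regular. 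For the left-hand side, telescoping over threshold levels together with the edge-expansion hypothesis applied to each ``superlevel set'' $\{i: f_i\ge t\}$ (each of which has size $\le|S|\le n/2$) gives a lower bound of the form $\epsilon\cdot d\cdot\|\vec f\|^2$. Chaining these inequalities yields $\epsilon\, d\,\|\vec f\|^2 \le (d\lambda)^{1/2}(2d)^{1/2}\|\vec f\|^2 = d\sqrt{2\lambda}\,\|\vec f\|^2$, hence $\epsilon\le\sqrt{2\lambda}$, i.e., $\lambda\ge\epsilon^2/2$, as claimed.

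I expect the level-set/telescoping bookkeeping in the hard direction to be the fussiest part to write cleanly, and — as the paper emphasizes — the reliance on an eigenvector of~$M$ and on reasoning with its (possibly irrational) entries is precisely what blocks formalizing this argument in~$\VNC^1$ or~$\VL$ in general; that is why the sequel replaces Theorem~\ref{thm:Cheeger} by the Cheeger–Mihail variant for graphs with many self-loops, whose proof avoids eigenvectors.
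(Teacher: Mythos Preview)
The paper does not give a self-contained proof of Theorem~\ref{thm:Cheeger}; it is stated as a classical result. The only direction the paper actually proves is the easy one, $2\epsilon\ge\lambda$, and only in the constructive/$\VNC^1$ form of Theorem~\ref{thm:CheegerVNCone}(a), in Section~\ref{sec:Cheeger}. Your argument for that direction is essentially the paper's: both take the centered indicator $\vec v = \chi_U - (|U|/N)\vecone$ of a set~$U$ of small expansion, compute the Dirichlet form $\sum_{i\sim j}(v_i-v_j)^2 = |E(U,\overline U)|$ and $\|\vec v\|^2 = |U|(1-|U|/N)\ge |U|/2$, and conclude. The one cosmetic difference is that you invoke the Rayleigh-quotient characterization of~$\lambda$ directly, whereas the paper --- since it is arguing inside~$\VNC^1$ and wants to avoid eigenvalue language --- instead shows $\langle \vec v, M\vec v\rangle/\|\vec v\|^2 \ge 1-2\alpha$ and then uses the projection inequality $\|M\vec v\|^2 \ge \langle \vec v, M\vec v\rangle^2/\|\vec v\|^2$ to deduce a bound on the mixing ratio $\|M\vec v\|/\|\vec v\|$ rather than on the second eigenvalue.

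For the hard direction $\lambda\ge\epsilon^2/2$, the paper gives no proof; it explicitly leaves open whether that direction is provable in $\VNC^1$ or~$\VL$ and substitutes the Cheeger--Mihail version (Theorem~\ref{thm:CheegerMihailVNCone}) for all formalization purposes. Your sketch of the classical eigenvector/level-set/Cauchy--Schwarz argument is the standard one and is correct as an outline, and your closing remarks about why it resists formalization in~$\VL$ are precisely the point the paper makes.
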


The Cheeger Lemma cannot be directly stated in this form
when working in the theory~$\VNC^1$. Instead, 
$\VNC^1$~can work only with bounds on
edge expansion and the mixing ratio, since it is not known whether
the values for edge expansion~$\epsilon$ and the spectral gap~$\lambda$ or 
the mixing ratio $1-\lambda$
are computable in alternating log time.
We can restate the Cheeger lemma for (possible) provability in
$\VNC^1$ as follows.
\begin{theorem}[Cheeger Lemma --- For formalization in~$\VNC^1$; however only part~(a) is
known to be provable in~$\VNC^1$, see Section~\ref{sec:Cheeger}]
\label{thm:CheegerVNCone}
Let $G$ be a d-regular undirected graph, $M$ be the adjacency matrix
for~$G$, and $\alpha >0$ a (rational) number. Then, with $U$~ranging over
sets of vertices of~$G$, and $V$~ranging over vectors of rationals
with common denominator:
\begin{description}
\item[\rm (a)] (Formalizing $\lambda \le 2\epsilon$.)
\[
\exists U\, \lnot \EdgeExp(U, G, \alpha)
    ~\limplies~ \exists V\, \lnot \MixRat(V, M, 1{-}2\alpha) .
\]
\item[\rm (b)] (Formalizing $\epsilon^2/2 \le \lambda$.)
\[
\exists V\, \lnot \MixRat(V, M, 1{-}\alpha^2/2)
    ~\limplies~ \exists U\, \lnot \EdgeExp(U, G, \alpha) .
\]
\end{description}
\end{theorem}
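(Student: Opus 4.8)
The plan is to prove part~(a) by writing down an explicit witnessing vector built from the violating set, and to explain why part~(b) is the genuinely difficult direction. For part~(a) I restrict to $\alpha < 1/2$, the range relevant here. Suppose $U$ witnesses $\lnot\EdgeExp(U,G,\alpha)$, so $\emptyset\ne U\subsetneq V(G)$, $k:=|U|$ satisfies $0<k\le n/2$, and $|E(U,\overline U)| < \alpha\, d\, k$. Take the ``centered indicator'' vector~$\vec v$ with $v_i = n-k$ for $i\in U$ and $v_i = -k$ for $i\notin U$; it has integer entries, is $\Sigma^B_0$-definable from~$U$, satisfies $\langle\vec v,\vecone\rangle = 0$ (so $\vec v\perp\vecone$), and $\|\vec v\|^2 = k\,n\,(n-k)$ by a direct summation. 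Since $G$ is undirected its adjacency matrix~$M$ is symmetric, and regrouping the sum $\langle M\vec v,\vec v\rangle = \sum_{i,j} M_{i,j} v_i v_j$ over the edges of~$G$ gives the Dirichlet-form identity
\[
\|\vec v\|^2 - \langle M\vec v,\vec v\rangle ~=~ \frac1d \sum_{\{i,j\}\in E,\ i\ne j}(v_i - v_j)^2 ~=~ \frac{n^2}{d}\,|E(U,\overline U)|,
\]
because $(v_i-v_j)^2$ equals $n^2$ on a cross edge and~$0$ otherwise while the self-loop terms cancel. Using $n-k\ge n/2$, the Rayleigh quotient of $I-M$ at~$\vec v$ is at most $2\,|E(U,\overline U)|/(d\,k) < 2\alpha$, so $\langle M\vec v,\vec v\rangle > (1-2\alpha)\|\vec v\|^2 \ge 0$; squaring this and combining with the Cauchy-Schwarz inequality $\langle M\vec v,\vec v\rangle^2 \le \|M\vec v\|^2\|\vec v\|^2$ yields $\|M\vec v\|^2 > (1-2\alpha)^2\|\vec v\|^2$, that is, $\lnot\MixRat(\vec v,M,1{-}2\alpha)$, which proves~(a).

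The second step is to check that this argument is available to~$\VNC^1$. The vector~$\vec v$ is a $\Sigma^B_0$-definable string once the numbers $k=|U|$ and~$n$ are in hand (counting is available, item~(i)); the product $M\vec v$ and the inner product $\langle M\vec v,\vec v\rangle$ are objects of rationals with a common denominator (namely~$d$) defined by summations (item~(iv)). The Dirichlet-form identity is obtained purely by regrouping a polynomially long summation and applying distributivity, which $\VNC^1$ proves (item~(iv)); crucially, no eigenvalue or eigenvector is mentioned anywhere. No square root is ever taken: $\EdgeExp$ and $\MixRat$ are already phrased in terms of $\|\cdot\|^2$, and the last step only squares the inequality $\langle M\vec v,\vec v\rangle > (1-2\alpha)\|\vec v\|^2$ (legitimate since both sides are $\ge 0$), invokes the $\VNC^1$-provable Cauchy-Schwarz inequality $\langle\vec x,\vec y\rangle^2 \le \|\vec x\|^2\|\vec y\|^2$ (item~(v)), and finishes with elementary rational arithmetic (item~(ii)). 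Hence part~(a) is provable in~$\VNC^1$, and therefore in~$\VL$.

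The main obstacle is part~(b), the ``rounding'' direction $\epsilon^2/2 \le \lambda$: from a vector $\vec v\perp\vecone$ with $\|M\vec v\|^2 > (1-\alpha^2/2)^2\|\vec v\|^2$ one must \emph{produce} a set of edge expansion below~$\alpha$. The classical proof sorts the vertices by the coordinates of~$\vec v$, considers the threshold cuts $\{\,i : v_i > t\,\}$, and averages over~$t$ --- with a Cauchy-Schwarz step bounding $\sum_{\text{edges}}|v_i^2 - v_j^2|$ --- to extract a sparse cut. It is open whether $\VNC^1$ or~$\VL$ can carry this out in general; the clearest obstruction is that passing from a lower bound on $\|M\vec v\|$ to a lower bound on the Dirichlet form $\langle (I-M)\vec v,\vec v\rangle$, and simultaneously controlling the most negative eigenvalue of~$M$, requires $M$ to behave like a positive semidefinite operator, which need not hold unless~$G$ has enough self-loops at each vertex. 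For the purposes of this paper it suffices to replace part~(b) by the self-loop-restricted form of this direction due to Mihail~\cite{Mihail:Markov}, which \emph{is} provable in~$\VNC^1$ (hence in~$\VL$) by~\cite{BKKK:Expanders}; this is the form actually used in Theorems~\ref{thm:connectedBisVNC1} and~\ref{thm:connectedTriVNC1}, and we do not establish the unrestricted part~(b).
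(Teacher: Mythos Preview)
Your proof of part~(a) is correct and follows essentially the same route as the paper's proof in Section~\ref{sec:Cheeger}: both use the centered indicator of~$U$ (yours is the integer-scaled version, which is a clean choice for formalization), compute the Dirichlet form $\|\vec v\|^2 - \langle M\vec v,\vec v\rangle$ as a multiple of $|E(U,\overline U)|$, bound the Rayleigh quotient of $I-M$ by $2\alpha$, and finish with Cauchy--Schwarz in the form $\langle M\vec v,\vec v\rangle^2 \le \|M\vec v\|^2\|\vec v\|^2$. Your treatment of part~(b) --- that it is the open direction, with the self-loop-restricted Cheeger--Mihail form substituted in the paper --- is exactly what the paper says as well.
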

Theorem~\ref{thm:CheegerVNCone} makes the computational content
of the Cheeger Lemma explicit by using implications between
existential statements. Part~(a) asserts that if there is
a set~$U$ witnessing
that the edge expansion of~$G$ is $<\alpha$, then there is
a set~$V$ that witnesses that mixing ratio is $>1-2\alpha$,
which means that the spectral gap~$\lambda$ is $< 2\alpha$.
For $\VNC^1$ to prove~(a), the mapping $U \mapsto V$ must
be computable in alternating log time, with its correctness
provable in~$\VNC^1$.

Part~(b) similarly makes constructive
the other inequality of the Cheeger Lemma. It states that if
there is a vector~$V$ that shows the mixing ratio is $>1-\alpha^2/2$
(so that the spectral gap is $< \alpha^2/2$), there is a set~$U$ witnessing
that the edge expansion is $<\alpha$.

Part~(a) of Theorem~\ref{thm:CheegerVNCone}
is provable in~$\VNC^1$; its proof is postponed until
Section~\ref{sec:Cheeger}. (Furthermore, part~(a) is not used in
the present paper.) It is open whether part~(b) of
Theorem~\ref{thm:CheegerVNCone} is provable in
$\VNC^1$ or $\VL$.  Fortunately, there is a weakened
form of the Cheeger Theorem that applies to graphs~$G$ with 
sufficiently many self-loops
that is known to be provable in~$\VNC^1$:

\begin{theorem}[Cheeger-Mihail Lemma, formalized in~$\VNC^1$,
\cite{Mihail:Markov,BKKK:Expanders}] 
\label{thm:CheegerMihailVNCone}
$\VNC^1$ can prove that if (i)~$d$, $G$, $M$, $\alpha$, $U$ and~$V$ are
as in Theorem~\ref{thm:CheegerVNCone} and (ii)~$d$~is even and every
vertex of~$G$ has at least d/2 self-loops, then:
\begin{description}
\item[\rm (b')] (Formalizing $\epsilon^2/2 \le \lambda$.)
\[
\exists V\, \lnot \MixRat(V, M, 1{-}\alpha^2/2)
    ~\limplies~ \exists U\, \lnot \EdgeExp(U, G, \alpha) .
\]
\end{description}
\end{theorem}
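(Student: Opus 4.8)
The plan is to prove the implication \emph{constructively}, in the form the $\VNC^1$-formalization demands: from a vector $\vec v$ witnessing $\lnot\MixRat(\vec v, M, 1{-}\alpha^2/2)$ — that is, $\vec v \perp \vecone$, $\vec v \ne \veczero$, and $\|M\vec v\|^2 > (1{-}\alpha^2/2)^2\|\vec v\|^2$ — I will \emph{define} a vertex set $U$ with $0 < |U| \le n/2$ and $|E(U,\overline U)|/(d|U|) < \alpha$, by a map whose correctness is provable in $\VNC^1$. The first step is where the self-loop hypothesis enters. Deleting exactly $d/2$ self-loops at every vertex yields a $(d/2)$-regular graph $G'$ whose normalized adjacency matrix is $N = 2M - I$; since $G'$ and $G$ have the same non-self-loop edges, $E_G(U,\overline U) = E_{G'}(U,\overline U)$ for every $U$, and $\|N\vec v\|^2 \le \|\vec v\|^2$ by \Cref{thm:normM}. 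Expanding $\|M\vec v\|^2 = \tfrac14\|\vec v + N\vec v\|^2 \le \tfrac12\|\vec v\|^2 + \tfrac12\langle\vec v, N\vec v\rangle$ and combining with the hypothesis (using $(1{-}\alpha^2/2)^2 \ge 1-\alpha^2$) gives $\langle\vec v, (I-N)\vec v\rangle < 2\alpha^2\|\vec v\|^2$, and the standard Dirichlet-form identity $\langle\vec v, (I-N)\vec v\rangle = \tfrac{2}{d}\sum_{\{i,j\}\in E(G),\, i\ne j}(v_i - v_j)^2$ then yields $\sum_{\{i,j\}}(v_i - v_j)^2 < d\alpha^2\|\vec v\|^2$, the sum being over non-loop edges.

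Second, I reduce to a non-negative vector of small support. Let $c$ be the $\lceil n/2\rceil$-th smallest entry of $\vec v$, so both $\{i : v_i > c\}$ and $\{i : v_i < c\}$ have size at most $n/2$; set $\vec y = \vec v - c\cdot(1,\ldots,1)$ and let $\vec y^{+}, \vec y^{-}$ be its non-negative positive and negative parts. Subtracting a constant leaves the Dirichlet sum unchanged and, since $\sum_i v_i = 0$, gives $\|\vec y\|^2 = \|\vec v\|^2 + c^2 n \ge \|\vec v\|^2$; the pointwise bound $(y^+_i - y^+_j)^2 + (y^-_i - y^-_j)^2 \le (y_i-y_j)^2$ and the exact identity $\|\vec y^{+}\|^2 + \|\vec y^{-}\|^2 = \|\vec y\|^2$ then let me pick $\vec g \in \{\vec y^{+}, \vec y^{-}\}$, non-zero, with $|\operatorname{supp}(\vec g)| \le n/2$ and $\sum_{\{i,j\}}(g_i - g_j)^2 < d\alpha^2\|\vec g\|^2$, using that one of two ratios is at most their mediant, whose denominator is exactly $\|\vec y\|^2$.

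Third, I run a deterministic sweep cut on $\vec g$. Sort the distinct positive values of $g_i^2$ as $0 = t_0 < t_1 < \cdots < t_m$, put $\Delta_k = t_k - t_{k-1}$, and for $1 \le k \le m$ set $S_k = \{i : g_i^2 \ge t_k\}$; each $S_k$ is a non-empty subset of $\operatorname{supp}(\vec g)$, so $1 \le |S_k| \le n/2$. The output $U$ is the $S_k$ minimizing $|E(S_k,\overline{S_k})|/|S_k|$. The two double-sum rearrangements $\sum_k \Delta_k |E(S_k,\overline{S_k})| = \sum_{\{i,j\}\in E(G),\, i\ne j}|g_i^2 - g_j^2|$ and $\sum_k \Delta_k |S_k| = \sum_i g_i^2 = \|\vec g\|^2$, together with the weighted mediant inequality, the Cauchy--Schwarz bound $\bigl(\sum_{\{i,j\}}|g_i^2 - g_j^2|\bigr)^2 \le \bigl(\sum_{\{i,j\}}(g_i-g_j)^2\bigr)\bigl(\sum_{\{i,j\}}(g_i+g_j)^2\bigr)$, and the estimate $\sum_{\{i,j\}}(g_i+g_j)^2 \le 2\sum_{\{i,j\}}(g_i^2+g_j^2) \le d\|\vec g\|^2$ — this last step being precisely where the $d/2$ self-loops are used, since then each vertex has at most $d/2$ incident non-loop edges — yield $|E(U,\overline U)|/|U| = \min_k |E(S_k,\overline{S_k})|/|S_k| \le \sum_{\{i,j\}}|g_i^2-g_j^2|/\|\vec g\|^2 < d\alpha$, i.e.\ $|E(U,\overline U)|/(d|U|) < \alpha$, so $\lnot\EdgeExp(U, G, \alpha)$ holds as required.

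All of this is of the right logical complexity for $\VNC^1$: the median $c$, the vector $\vec g$, the family $\{S_k\}$ and the counts $|S_k|$, $|E(S_k,\overline{S_k})|$ are definable in $\VNC^1$ (which, as recorded in item~(i) of the list from~\cite{BKKK:Expanders}, can count members of polynomial-size sets), so the map $\vec v \mapsto U$ is $\Sigma^B_1$-definable; all vectors and matrices are carried as rationals over a common denominator, and norms appear only as squared norms, so that Cauchy--Schwarz is used in the form already proved in~\cite{BKKK:Expanders}. The main obstacle is the $\VNC^1$-verification of the third step: the classical sweep-cut estimate is phrased probabilistically (integration over a uniformly random threshold), and it must be recast as the two finite double-sum identities above and then proved inside the restricted summation calculus $\VNC^1$ supports — reordering a sum indexed by (non-loop edge, interval) pairs, splitting $[0,\max_i g_i^2)$ at the breakpoints $t_k$, and chaining the resulting rational inequalities without ever extracting a square root — while producing the cut $U$ as an actual definable function rather than merely proving one exists. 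This recasting, and the proof that the definable $U$ is correct, is in essence the Cheeger--Mihail analysis of~\cite{Mihail:Markov} as formalized in~\cite{BKKK:Expanders}, which we invoke.
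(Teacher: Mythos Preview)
Your proposal is correct and takes essentially the same approach as the paper: both rest on Mihail's constructive sweep-cut argument as formalized in \cite{BKKK:Expanders}. The paper simply invokes Lemma~12 of \cite{BKKK:Expanders} as a black box (after stripping $d/2$ self-loops to form $G^-$ and then halving the resulting expansion bound), whereas you have unpacked that lemma's internals --- the Dirichlet-form bound via $N = 2M - I$, the median shift to a nonnegative small-support vector, and the level-set sweep with Cauchy--Schwarz --- but this is a difference of exposition, not of method.
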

\begin{proof}
This proof is based on a construction of Mihail~\cite{Mihail:Markov} that
was shown to be formalizable in $\VNC^1$ as Lemma~12 in~\cite{BKKK:Expanders};
see Section~6.2 of~\cite{BKKK:Expanders}. 
This allows $\VNC^1$ to formalize the following argument.
Let $\vec v \perp \vecone$ be a vector of rationals that winesses 
that the mixing ratio of~$G$ is $>1-\alpha^2/2$;
i.e., that (\ref{eq:mixingRatioVNCone}) fails.
Let $G^-$~be
obtained from~$G$ by removing $d/2$ self-loops from each vertex.  In the
notation of~\cite{BKKK:Expanders}, $G = \bigcirc G^-$.  
Applying Lemma~12 of~\cite{BKKK:Expanders},\footnote{To 
apply Lemma~12 of~\cite{BKKK:Expanders}, use $k=1$ and 
$\pi = \vec v {+} u = \vec v {+} \vec 1$
and $\epsilon = 2\alpha$ and $A=M$. The hypothesis that
$\vec v$ witnesses that $G$~has mixing ratio~$>1-\alpha^2/2$
means
\[
\|M\vec v\|^2 > (1-\alpha^2/2)^2 \|\vec v\|^2 > (1-\alpha^2/4) \|\vec v\|^2 .
\]
From this, Lemma~12 of~\cite{BKKK:Expanders} gives set~$U$ of vertices
with edge expanson~$<2 \alpha$.} 
there is a set~$U$ of vertices showing that
the edge of expansion of~$G^-$ is $< 2\alpha$. Since self-loops
do not contribute edges towards expansion and since the degree of~$G$
is twice that of~$G^-$, the edge expansion of~$G^-$ is twice that of~$G$.
In particular, since $U$ witnesses that the edge expansion of~$G^-$ is
less than~$2\alpha$, then the same~$U$ witnesses that the edge expansion of~$G$ is
less than~$\alpha$.  Part~(b') of the theorem follows.
\end{proof}

Cheeger's lemma was used in the earlier proof that $\L=\SL$ in two places:
in Theorem~\ref{thm:connectedBis} and indirectly in Theorem~\ref{thm:connectedTri}.
We shall discuss later how to use the 
Cheeger-Mihail Theorem~\ref{thm:CheegerMihailVNCone} instead.
We also need to use the Cheeger-Mihail Theorem 
in~$\VL$ to prove the existence of
$(Q^i,Q,1/100)$-graphs~$H_i$ with good mixing ratio. 
These will be obtained via the
constant degree expander graphs that can be proved to
exist in~$\VNC^1$~\cite{BKKK:Expanders}.

\subsection{Formalizing the statement \texorpdfstring{$\L = \SL$}{L=SL} in \texorpdfstring{$\VL$}{VL}}

Before we talk about proving $\L = \SL$ in~$\VL$,
we need to explain how to formalize the statement that
$\L = \SL$ in~$\VL$.  There are two natural definitions
for an undirected graph to be connected.

\begin{definition}
[``Subset-Connected''] An undirected
graph~$G$ is \emph{subset-connected} provided that for every proper
subset~$U$ of the vertices of~$G$, there is an edge from $U$ to~$\overline U$.
\end{definition}

In $\VNC^1$, subset-connectedness can be expressed as a $\Pi^{1,b}_1$-formula
\[
\forall U\, \bigl[ U\subset V(G) \;\land\; 0 < |U| < |V(G)| \;\;\limplies\;\;
     E(U, V(G)\setminus U) \not= \emptyset \bigr].
\]
This formula is denoted $\SubsetConn(G)$.

\begin{definition}
[``Path-Connected''] As undirected graph is \emph{path-connected}
provided that for every pair of vertices $x$ and~$y$, there is
a path in~$G$ from $x$ to~$y$.
\end{definition}
Path-connectedness can be expressed in~$\VNC^1$ as the statement that
there is a set coding paths joining all pairs of
vertices $x$ and~$y$. Recall that $Z(\cdot,\cdot)$ was
used to denote a second-order object coding a path. We extend this notation by
considering a 4-place predicate, $Z(\cdot,\cdot,\cdot,\cdot)$,
coded as a set, and let $Z^{[x,y]}$ denote the 2-place predicate
such that $Z^{[x,y]}(i,u)$ is equal to $Z(x,y,i,u)$.
Then $\VNC^1$ can express that $G$ is path-connected with a
$\Sigma^{1,b}_1$-formula
\[
\exists Z\,
      \bigl[\forall x, y\in V(G)\,
      (\hbox{``$Z^{[x,y]}$ codes a path from $x$ to $y$''}) \bigr] .
\]
We denote this formula as $\PathConn(G)$.

Clearly, $\SubsetConn(G)$ and $\PathConn(G)$ both express
that $G$ is connected. In one direction $\VNC^1$
proves
\[
\PathConn(G) ~\limplies \SubsetConn(G)
\]
with a simple proof that runs as follows:
Suppose $U$~witnesses $\SubsetConn(G)$
is false and $Z$~witnesses that $\PathConn(G)$ is true.  Letting
$x\in U$ and $y \notin U$, then the path $Z[x,y]$ must have a first
vertex $u \notin U$, and this gives an edge between
$U$ and~$V(G)\setminus U$.

Proving the other direction is harder. In fact, if
\begin{equation}\label{eq:SubsetPathImplyVNC1}
\VNC^1 ~\vdash~ \SubsetConn(G) \limplies \PathConn(G)
\end{equation}
there must be an alternating log time (i.e., uniform~$\NC^1$)
algorithm for determining whether a given graph~$G$ is
connected. To see this, note
that if $\VNC^1$ can prove the $\Sigma^{1,b}_1$-formula
\[
\lnot \SubsetConn(G) \lor \PathConn(G),
\]
then by the witnessing theorem for~$\VNC^1$ (see~\cite{CookNguyen:book}),
there is a set-valued function, computable in alternating log time,
which, provably in~$\VNC^1$, produces a pair $U, Z$
such that either $U$~witnesses
the falsity of~$\SubsetConn(G)$ or $Z$~witnesses the truth of~$\PathConn(G)$.
Checking the truth or falsity of $\SubsetConn(G)$ and
$\PathConn$ given $U,Z$ is in the log time hierarchy, i.e., in uniform~$\AC^0$.
It is conjectured that no such algorithm exists, and hence that
$\VNC^1$ does not prove $\SubsetConn(G) \limplies \PathConn(G)$.

On the other hand, $\VL$ does prove $\SubsetConn(G) \limplies \PathConn(G)$
as a consequence of proving $\L=\SL$,
see Corollary~\ref{coro:SubsetPathEquiv}.  We formalize
``$\VL \vdash \L = \SL$'' as follows. Here $G$ is set encoding
an undirected graph, $x$~is a vertex of~$G$, $H$~is a set encoding a subgraph of~$G$,
and $Z$~is a set encoding
paths in~$G$.
\begin{theorem}\label{thm:LeqSLinVL}
$\VL$ proves
\begin{eqnarray*}
\lefteqn{ \forall \,\hbox{\rm graphs}\, G\,\, \forall x {\in} V(G)\,\, \exists H\, \exists Z \, [
    \hbox{\rm ``$H$ is a subgraph of $G$''} \land x\in V(H) } \\
&& \land \, \forall\, u \in H\, (
     \hbox{\rm ``$Z^{[u]}$ encodes a path from $x$ to $u$ in~$G$''} )  \\
&& \land \, \lnot\, \exists \,\hbox{\rm an edge}\,\{ u,v\}\in G\,
   (u\in H \land v \notin H) ] .
\end{eqnarray*}
\end{theorem}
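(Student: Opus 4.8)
The plan is to formalize directly, inside $\VL$, the logspace algorithm underlying \Cref{lem:LogSpaceAlgForEdges}. Given $G$ and a vertex $x$, $\VL$ builds the $4$-regular directed graph $X$ with self-loops exactly as in \Cref{sec:RVproof} (taking a variant with enough self-loops at each vertex that \Cref{thm:CheegerMihailVNCone} will apply to the auxiliary graphs below); builds the expander graphs $H_i$, whose existence and mixing ratio $\le 1/100$ are provable already in $\VNC^1$ by Theorem~32 of \cite{BKKK:Expanders}, and from them the graphs $G_i$; and builds $X_1=X^q$ and $X_{i+1}=X_i\circleS G_i$ up to $i=m_1$. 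All of these use items (vii), (viii) of the list in \Cref{sec:FormalizeInVL}, and the edge relation of $X_{m_1}$ is computed by the recursion of \Cref{alg:XmEdge} in logarithmic space; since the provably total functions of $\VL$ are exactly the logspace functions, $\VL$ proves $X_{m_1}$ exists as a string. We take $H$ to be the set of vertices $v$ of $G$ such that $X_{m_1}$ has an edge from the avatar $\langle x,0\rangle$ to some avatar of $v$, together with $x$, and we take $Z$ to encode, for each $u\in H$, the path obtained by unwinding the corresponding $X_{m_1}$-edge into a walk in $X$, projecting to a walk in $G$, and shortening it. That $H$ is a subgraph of $G$, contains $x$, and that each $Z^{[u]}$ codes a path from $x$ to $u$ are then routine $\VL$-verifications, so the existential witnesses exist; the content is the last conjunct.

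To prove in $\VL$ that no edge of $G$ leaves $H$, I would argue by contradiction, turning a counterexample into a $\VL$-constructible object that contradicts the (trivial) fact that a connected component is connected. Suppose $\{u,v\}\in E(G)$ with $u\in H$, $v\notin H$. Unwinding the $X_{m_1}$-edge witnessing $u\in H$ and appending the edge $\{u,v\}$ (plus a few cycle-edges among avatars of $u$) yields a walk $W$ in $X$ from $\langle x,0\rangle$ to some avatar $\bar v$ of $v$, while by construction $X_{m_1}$ has no edge from $\langle x,0\rangle$ to $\bar v$, i.e.\ $(M_{X_{m_1}})_{\bar v,\langle x,0\rangle}=0$. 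Running the computation in the proof of \Cref{thm:expandImpliesConnected} backwards produces from this vanishing entry a vector witnessing $\lnot\MixRat(X_{m_1},\beta)$ for a small $\beta$; iterating \Cref{thm:RVexpand}(b) down the chain $X_{m_1},\dots,X_1$ and tracking thresholds with \Cref{prop:BoundsOnF} exactly as in the reversed proof of \Cref{claim:lambdaX} gives a $\VL$-definable vector $\vec w\perp\vecone$ with $\|M_{X_1}\vec w\|$ very close to $\|\vec w\|$; undoing the $q$-fold powering by a telescoping argument gives $\vec w'=M_X^j\vec w$ for some $j<q$, still orthogonal to $\vecone$, with $\|M_X\vec w'\|$ close to $\|\vec w'\|$. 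Feeding $\vec w'$ into the $\VL$ form of \Cref{thm:connectedTri} --- forming $M_X^{\transpose}M_X$, which has the required self-loops by our choice of $X$, and invoking \Cref{thm:CheegerMihailVNCone}(b') --- extracts a set $U$ of vertices of $X$ with edge expansion so small that $|E_X(U,\overline U)|=0$, so that $\emptyset\ne U\subsetneq V(X)$ is closed under the edges of $X$ (and so is $\overline U$, since $X$ is symmetric).

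The step I expect to be the main obstacle is the passage from ``$\emptyset\ne U\subsetneq V(X)$ is a union of $X$-components'' to an actual contradiction: this is contradictory only after the whole argument has been localized to the connected component $C$ of $\langle x,0\rangle$, and $\VL$ cannot define $C$ by comprehension without already solving USTCON. The resolution I would pursue is never to name $C$: work throughout with vectors that are \emph{balanced on every edge-closed set}, i.e.\ orthogonal to the characteristic vector $\chi_S$ of every $S$ with $E_X(S,\overline S)=\emptyset$ --- a $\Pi^{1,b}_1$ condition not mentioning components --- and start the backwards chain from such a vector, e.g.\ $e_{\langle x,0\rangle}-M_{X_{m_1}}e_{\langle x,0\rangle}$, which is $\VL$-definable and balanced in this sense. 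Since graph powering and $\circleS$ preserve connected components, balancedness is preserved by every step of the chain, so the vector $\vec w'$ fed to the Cheeger--Mihail construction is balanced on each component; the sweep-cut analysis of \cite{Mihail:Markov} then returns a cut $U$ that genuinely splits some component rather than merely a union of them, and (conceptually) restricting to that component gives an $X|_C$-closed set that is neither empty nor all of $C$, contradicting connectedness of $X|_C$ --- all without ever forming $C$. Verifying in $\VL$ that each step of the reversed chain and the sweep-cut lemma of \cite{BKKK:Expanders} preserve balancedness, and that the resulting $U$ provably splits a component, is the delicate point, but should require no ideas beyond those already used for \Cref{thm:RVexpand} and \Cref{thm:CheegerMihailVNCone}. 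With the last conjunct established, the equalities $\VSL=\VL$ (\Cref{thm:VSLisVL}) and $\SubsetConn\limplies\PathConn$ (\Cref{coro:SubsetPathEquiv}) follow as sketched in \Cref{sec:FormalizeInVL}.
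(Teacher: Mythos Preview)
Your outline up through the construction of $X_{m_1}$, the definition of $H$, and the ``paths'' $Z^{[u]}$ is fine and matches what the paper does. The divergence, and the gap, is entirely in your treatment of the last conjunct (no edge of $G$ leaves $H$).

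The paper does \emph{not} attempt to run the witnessing chain backward while ``staying inside a component''. Instead it proves the stronger Theorem~\ref{thm:LequalSLmethodVL}: for \emph{any} $Y$, there is a path from $x$ to $y$ in $X_1$ iff there is an edge from $x$ to $y$ in $X_{m_1}$. The key observation is that Algorithm~\ref{alg:XmEdge} produces, from an edge index $w$ of $X_{m_1}$, a sequence of $X$-edge labels $(i_1,\ldots,i_{q2^{m_1}})$ that depends only on $w$ and the $G_i$'s --- hence only on the degree and an upper bound $N$ on the size of $X$ --- and not on $X$ itself. Let $U$ be the set of $X$-vertices visited by any such walk from $x$. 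If some edge leaves $U$, the paper \emph{modifies} $X$ to a connected $X'$ of the same size and degree that agrees with $X$ on the induced subgraph $U$. The traversal sequences are identical for $X$ and $X'$ inside $U$, so they visit the same set $U$ in both graphs; but applied to the connected $X'$, Lemma~\ref{lem:SubsetConnImpliesEdges} says they must reach every vertex, a contradiction. This reduces the general case to the connected case in one stroke, and all spectral reasoning (Claim~\ref{claim:lambdaX}, Theorems~\ref{thm:connectedTriVNC1} and~\ref{thm:expandImpliesConnected}) is only ever invoked on a connected graph.

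Your ``balanced vectors'' idea is a genuinely different route, but it is not a proof as stated. Two concrete issues: first, ``balanced on every edge-closed set'' is a $\Pi^{1,b}_1$ condition, so even though multiplication by each $A_i$ preserves it (this part is fine), you cannot use it as an induction hypothesis in $\VL$ without further work; you would need a $\Sigma^B_0$ surrogate. Second, and more seriously, the assertion that Mihail's sweep-cut applied to a balanced vector must return a set that splits a single component is unsupported. The sweep-cut sorts vertices by the value of $v_i$; vertices in other components have $v_i=0$ and sit in the middle of the order, so a threshold can easily return $\{x\}\cup(\text{some other component})$, which has small expansion without splitting anything. You would need a sharpened sweep-cut lemma that respects the block structure, and nothing in \cite{BKKK:Expanders} provides this. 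The paper's graph-modification trick sidesteps both problems entirely.
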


The condition on~$H$ expresses that $H$~is a maximal connected subgraph containing~$x$.
Thus, $G$~is connected precisely when $H$ is equal to~$G$. This gives:\footnote{This
is restated below as Theorem~\ref{thm:PathConnIffSubsetConn}.}

\begin{corollary}\label{coro:SubsetPathEquiv}
$\VL ~\vdash~ \forall G\, [\SubsetConn(G) \liff \PathConn(G)]$.
\end{corollary}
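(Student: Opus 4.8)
The plan is to formalize the Rozenman--Vadhan construction of \Cref{sec:LSLproof} inside~$\VL$. Given an undirected graph~$G$ and a vertex~$x$, the first step reproduces in~$\VL$ the reduction of \Cref{sec:RVproof}: build a properly labeled, constant-degree, directed graph~$X$ with a self-loop at every vertex (doubling self-loops if necessary so that at least half the edges at each vertex are self-loops, which is what makes \Cref{thm:CheegerMihailVNCone} applicable, as in \Cref{thm:connectedTriVNC1}), together with the distinguished vertex~$x'\in V(X)$ coming from~$x$. This is a straightforward graph manipulation available already in~$\VNC^1$ by items (vi)--(vii) of the toolkit of~\cite{BKKK:Expanders}, and $\VL$ proves the obvious correspondence between $G$ and~$X$: walks from~$x$ in~$G$ correspond to walks from~$x'$ in~$X$ and conversely, an edge of~$G$ crossing a vertex cut corresponds to an edge of~$X$ crossing the induced cut, and self-loops and multiplicities are tracked correctly. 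Next, $\VL$ proves the existence of the auxiliary expanders~$H_i$ (via Theorem~32 of~\cite{BKKK:Expanders} together with \Cref{thm:CheegerMihailVNCone}), defines the~$G_i$ from them, and then defines the tower $X_1=X^q$, $X_{i+1}=X_i\circleS G_i$ up to~$X_{m_1}$ as in~\eqref{eq:XiDefn}; since $X_{m_1}$ has polynomial degree this is a polynomial-size object, and by \Cref{lem:LogSpaceAlgForEdges} its rotation map is $\Delta^B_1$-definable (equivalently, computed by the logspace recursion of \Cref{alg:XmEdge}), so $X_{m_1}$ is available to~$\VL$.

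The analytic core is the ``expansion cascade'', which $\VL$ proves by induction on the tower using the constructive forms of \Cref{thm:RVexpand}(b), \Cref{prop:BoundsOnF} and \Cref{claim:lambdaX}: if there is a vector~$\vec v\perp\vecone$ of rationals with common denominator such that $\|M_{X_{m_1}}\vec v\|^2 > \lambda_{m_1}^2\|\vec v\|^2$ with $\lambda_{m_1}<1/N^2$, then there is a vector~$\vec u\perp\vecone$ with $\|M_X\vec u\|^2 > \lambda_X^2\|\vec u\|^2$ for $\lambda_X = 1-1/\Tpoly(N)$. Each application of \Cref{thm:RVexpand}(b) replaces the current witness by one of~$\vec u$ or~$M_{X_i}\vec u$, so composing $m_1$ of them gives a $\VL$-definable witness-transformation built from products of adjacency matrices along the tower; the norm bookkeeping uses only \Cref{lem:TriangleInEqMatrixNms}, \Cref{thm:JCdecomp} and Cauchy--Schwarz, all available in~$\VNC^1$, and the identity~\eqref{eq:Mexpand} is an equality of $\VL$-definable matrices. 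Combining the cascade with the $\VL$-provable Cheeger--Mihail lemma (\Cref{thm:CheegerMihailVNCone}, in its existential-witness form), a bad vector for~$X_{m_1}$ yields a cut~$U$ of the relevant graph with $|E(U,\overline U)| = 0$; this is the $\VL$-friendly restatement of \Cref{thm:connectedTri} and \Cref{thm:expandImpliesConnected}.

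With these tools $\VL$ defines the witnesses of \Cref{thm:LeqSLinVL}: let $H$ be the set of vertices~$y$ of~$G$ such that $X_{m_1}$ has an edge from~$x'$ to the canonical $X$-vertex of~$y$ (a $\Delta^B_1$, hence $\VL$-definable, set), and for $u\in H$ let $Z^{[u]}$ be the simple path from~$x$ to~$u$ in~$G$ obtained by unwinding a witnessing edge label via \Cref{alg:XmEdge} to an $X$-walk, projecting to a $G$-walk, and extracting a simple path; all of this is logspace and thus $\VL$-definable. That $x\in H$ and that each $Z^{[u]}$ genuinely encodes a path from~$x$ to~$u$ are verified by~$\VL$ directly, by re-tracing the construction and invoking the $G$--$X$ correspondence together with the provable correctness of simple-path extraction. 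The remaining condition, that no edge~$\{u,v\}$ has $u\in H$ and $v\notin H$, is the main obstacle. I would prove it by contradiction: such an edge produces, via the correspondence and an unwinding, an explicit $X$-walk from~$x'$ to a vertex~$v^{*}$ for which $X_{m_1}$ has no edge $x'\to v^{*}$; equivalently (working relative to the connected component~$C_0$ of~$x'$ in~$X$, on which, by component-locality of derandomized squaring, $X_{m_1}$ restricts to the tower built on~$X[C_0]$), the vector $e_{x'}-\vecone$ has the $v^{*}$-coordinate of its image under $M_{(X[C_0])_{m_1}}$ equal to~$-1/|C_0|$, so $\|M_{(X[C_0])_{m_1}}(e_{x'}-\vecone)\|^2\ge 1/|C_0|^2 > \lambda_{m_1}^2\|e_{x'}-\vecone\|^2$, i.e.\ it is a bad vector. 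Feeding it through the cascade and then \Cref{thm:CheegerMihailVNCone} yields a nonempty proper cut of the connected graph~$X[C_0]$ with no crossing edges, which is impossible; hence $X_{m_1}$ has the edge $x'\to v^{*}$, so $v\in H$, a contradiction.

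The step I expect to be hardest is precisely the last one, and the care it needs is in making the appeal to the component~$C_0$ legitimate inside~$\VL$, which can name only logspace objects. The point is that $C_0$ is never introduced as ``the set of vertices connected to~$x'$'' (an $\SL$-complete object); instead, the entire argument runs by contradiction with explicit vector and cut witnesses, and the only structural facts used about~$C_0$ are that $X[C_0]$ is $d$-regular, self-loop-rich, and connected, and that $(X[C_0])_{m_1}=X_{m_1}[C_0]$ because every $X\circleS G$-edge out of a vertex stays in that vertex's component — a purely local property that $\VL$ proves about the rotation maps. Equivalently, one phrases the whole of Section~\ref{sec:FormalizeInVL}'s expansion machinery as implications between existential (bad-vector / bad-cut) statements, exactly as in \Cref{thm:RVexpand}(b) and \Cref{thm:CheegerMihailVNCone}, so that $\VL$ is never asked to produce a connectivity certificate before it has one. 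Once \Cref{thm:LeqSLinVL} is established, $H$ is forced to be the connected component of~$x$, so $G$ is subset-connected iff $H=G$ iff (by the path witnesses~$Z$) $G$ is path-connected, giving \Cref{coro:SubsetPathEquiv} and hence $\VSL=\VL$.
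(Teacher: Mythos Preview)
Your overall plan—formalize the tower in $\VL$, prove the expansion cascade via \Cref{thm:RVexpand}(b), and unwind edges of $X_{m_1}$ into paths—matches the paper, and the toolkit you invoke ($\VNC^1$ graph manipulations, Cheeger--Mihail, the $\Delta^B_1$ rotation map for $X_{m_1}$) is the right one. But for this corollary you take a harder route than necessary, and that route has a real gap at exactly the point you flag.

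For $\SubsetConn(G)\Rightarrow\PathConn(G)$ the paper (\Cref{thm:PathConnIffSubsetConn}) simply \emph{uses the hypothesis}: if $G$ is subset-connected then so is the derived $X$, so \Cref{thm:connectedTriVNC1} applied to the whole graph gives the mixing-ratio bound on $[N]$ with $\vecone=\vecone_N$, the cascade shows every pair in $X_{m_1}$ is joined by an edge (\Cref{thm:expandImpliesConnected}), and unwinding gives paths. No component ever appears. Your detour through \Cref{thm:LeqSLinVL} forces you to handle a possibly disconnected $X$, and there the $C_0$ argument does not go through in $\VL$: the bad vector $e_{x'}-\vecone$ you write down already requires $|C_0|$ to form, and the contradiction you aim for at the bottom of the cascade—a bad vector for $X$, hence via Cheeger--Mihail a small cut—is no contradiction when $X$ is disconnected, since the cut Cheeger--Mihail returns may simply separate two components of $X$. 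Rephrasing everything as implications between existential witnesses does not save this, because the witness produced at the bottom need not live inside $C_0$. The paper's proof of the general statement (\Cref{thm:LequalSLmethodVL}) avoids $C_0$ altogether: it works with the $\VL$-definable set $U$ of vertices actually visited by the traversal walks of \Cref{alg:XmEdge} from $x'$, and in the bad case (an edge leaves $U$) it explicitly rewires $X$ outside $U$ into a path-connected graph $X'$ and uses the universality of the traversal sequences to reach a contradiction. That rewiring trick is the missing idea in your sketch of \Cref{thm:LeqSLinVL}; for the corollary alone, though, you can bypass all of this by using the $\SubsetConn$ hypothesis directly, as the paper does.
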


Theorem~\ref{thm:LeqSLinVL} and Corollary~\ref{coro:SubsetPathEquiv}
are proved in the next section.

\begin{theorem}\label{thm:VSLisVL} 
The theories $\VSL$ and $\VL$ prove the same theorems (that is, $\VSL=\VL$).  
\end{theorem}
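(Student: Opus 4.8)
The plan is to reduce everything to Theorem~\ref{thm:LeqSLinVL}. Since $\VNC^1\subseteq\VL\subseteq\VSL$ is already known \cite{CookNguyen:book,Kolokolova:thesis}, it suffices to prove the reverse inclusion $\VSL\subseteq\VL$. As $\VSL$ is axiomatized over $\Vzero$ by the single undirected-connectivity axiom
\[
\Symm(a,X)~\limplies~(\exists C)(\exists Z)\,\deltaUCONN(a,X,C,Z),
\]
and $\VL$ extends $\Vzero$, the whole theorem comes down to deriving this one axiom inside $\VL$.

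First I would work in $\VL$, fix $X$ satisfying $\Symm(a,X)$, and view $X$ as an undirected graph $G$ on vertices $[a{+}1]=\{0,\ldots,a\}$ --- symmetry makes the edge relation undirected and $\Symm$ also guarantees a self-loop at each vertex. Then I would apply Theorem~\ref{thm:LeqSLinVL} with the distinguished vertex $x:=0$, obtaining a set $H$ coding a maximal connected subgraph of $G$ with $0\in V(H)$, together with a set $Z_0$ such that for each $u\in V(H)$ the slice $Z_0^{[u]}$ codes a path in $G$ from $0$ to $u$, and such that no edge $\{u,v\}$ of $G$ has $u\in V(H)$ and $v\notin V(H)$. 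From these I would define, by $\Sigma^B_0$-comprehension, the string $C$ to be the characteristic set of $V(H)$, and the string $Z$ to be $Z_0$ with each path $Z_0^{[u]}$ padded (by repeating the final vertex $u$ along its self-loop) to the fixed length demanded by $\deltaPath(a,a,\cdot,\cdot)$. It then remains to verify the three conjuncts of $\deltaUCONN(a,X,C,Z)$: $C(0)$ holds since $0\in V(H)$; the closure clause $(\forall x{\le}a)(\forall y{\le}a)(C(x)\land X(x,y)\limplies C(y))$ is exactly the ``no edge leaving $H$'' property (here symmetry of $X$ ensures $X(x,y)$ genuinely witnesses the undirected edge $\{x,y\}$); and for each $x$ with $C(x)$ the padded slice $Z^{[x]}$ satisfies $\deltaPath(a,a,X,Z^{[x]})$ with $(Z^{[x]})^a=x$, and in particular $(Z^{[0]})^a=0$. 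This gives the $\VSL$ axiom in $\VL$, hence $\VSL=\VL$; Corollary~\ref{coro:SubsetPathEquiv} is then immediate as well.

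All the substance is hidden in Theorem~\ref{thm:LeqSLinVL}, so the present argument is essentially bookkeeping. The one place needing a little care is the mismatch between the flexible output format of Theorem~\ref{thm:LeqSLinVL} (a maximal connected subgraph plus a path of unspecified length to each of its vertices) and the rigid format demanded by $\deltaUCONN$ --- in particular, every witnessing path must start at $0$ and have the fixed length built into $\deltaPath(a,a,\cdot,\cdot)$. This is dispatched by the self-loop padding above, which is available precisely because $\Symm(a,X)$ forces a self-loop at every vertex; and the manipulations (reading off $V(H)$, slicing $Z_0$, padding each slice, reassembling $Z$) are all $\Sigma^B_0$-definable and hence carried out already in $\Vzero$. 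The only remaining thing to double-check is that the reassembled $Z$ and the set $C$ meet the polynomial size bounds needed for the existential quantifiers in the axiom, which is clear since $C\subseteq[a{+}1]$ and $Z$ codes at most $a{+}1$ paths of length $a{+}1$ over $[a{+}1]$.
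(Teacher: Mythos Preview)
Your proposal is correct and follows essentially the same route as the paper: reduce to deriving the single $\VSL$ axiom in $\VL$, then apply Theorem~\ref{thm:LeqSLinVL} with $x=0$ and take $C=V(H)$ and $Z$ from the paths produced there. You are somewhat more careful than the paper in spelling out the format mismatch between the paths delivered by Theorem~\ref{thm:LeqSLinVL} and the fixed-length paths demanded by $\deltaPath(a,a,\cdot,\cdot)$, and your self-loop padding (available because $\Symm$ guarantees self-loops) is exactly the right fix; the paper simply glosses over this bookkeeping.
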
 

To prove Theorem~\ref{thm:VSLisVL} we need to show that 
\[
\VL \vdash \Symm(a,X) ~\limplies~ (\exists C)(\exists Z)\deltaUCONN(a,X,C,Z),
\]
as defined in \eqref{eq:deltauconn} (on page \pageref{eq:deltauconn}).  Here, $C$~is a connected component of $G$ containing the vertex~0, and $Z$~is the set of paths from 0 to all elements of $C$. Now the statement follows from \Cref{thm:LeqSLinVL}  by setting $x=0$. The corresponding $H$  will be  $C$ satisfying $\deltaUCONN(a,X,C,Z)$, and $Z$ will be the set of paths from $0$ to elements of $C$, as required.


\subsection{Formalizing the proof of \texorpdfstring{$\L = \SL$}{L=SL} in \texorpdfstring{$\VL$}{VL}}
\label{sec:RVproofVL}

\paragraph*{Preliminaries in $\VNC^1$.}
As a preliminary, we claim that $\VNC^1$ can formalize and prove all
the results in Section~\ref{sec:Prelims}, namely
Theorems \ref{thm:connected}-\ref{thm:JCdecomp}
and replacements for Theorems \ref{thm:connectedBis} and~\ref{thm:connectedTri}.
Theorem~\ref{thm:connected} is 
in terms of subset-connectivity.

\begin{theorem}\label{thm:connectedVNC1}
$\VNC^1$ proves the following: Let $G$ be a
d-regular graph on $n$ vertices. Then
\[
\SubsetConn(G) \;\;\rightarrow \;\;
\EdgeExp(G, 2/(dn)).
\]
\end{theorem}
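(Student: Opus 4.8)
The plan is to lift the trivial argument of Theorem~\ref{thm:connected} into $\VNC^1$, using only the counting and rational-arithmetic capabilities~(i)--(iv) recalled above together with elementary first-order reasoning. Recall that both $\SubsetConn(G)$ and $\EdgeExp(G,\alpha)$ are $\Pi^{1,b}_1$-formulas of the shape $\forall U\,[\,\cdots\,]$ (one vertex-subset quantifier in front of a $\Sigma^{1,b}_0$ matrix), so to prove the implication it suffices to reason inside $\VNC^1$ about a generic set~$U$. We may assume $n=|V(G)|\ge 1$ and $d\ge 1$, since otherwise no witnessing $U$ exists for $\EdgeExp$ or the conclusion is vacuous.

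First I would fix an arbitrary~$U$ with $U\subset V(G)$ and $0<|U|\le n/2$ — formally $2|U|\le n$ — and aim to derive the ratio bound in the consequent of $\EdgeExp(U,G,2/(dn))$. From $2|U|\le n$ and $|U|>0$ we get $|U|<n$, so $U$ is a proper, nonempty vertex subset; hence $\SubsetConn(G)$, instantiated at this same~$U$, yields $E(U,V(G)\setminus U)\ne\emptyset$. Now $\VNC^1$ can form and count the polynomial-size edge (multi)set $E(U,\overline U)$: by capability~(i) the out-degree of each vertex (and its restriction to neighbours outside~$U$) is definable, and by capability~(iv) the cardinality $|E(U,\overline U)|=\sum_{u\in U}|\{\,\text{edges from }u\text{ to }\overline U\,\}|$ is a provable summation. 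Since a nonempty set has cardinality at least~$1$, $\VNC^1$ proves $|E(U,\overline U)|\ge 1$.

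It remains to verify the inequality $|E(U,\overline U)|/(d\,|U|)\ge 2/(dn)$ inside $\VNC^1$, which by capability~(ii) (rationals as pairs of numbers) is equivalent to $n\cdot|E(U,\overline U)|\ge 2\,|U|$; this is immediate from $|E(U,\overline U)|\ge 1$ and $2|U|\le n$. This establishes the consequent of $\EdgeExp(U,G,2/(dn))$ for the fixed~$U$, hence $\EdgeExp(U,G,2/(dn))$, and since $U$ was arbitrary, $\VNC^1$ concludes $\forall U\,\EdgeExp(U,G,2/(dn))$, i.e., $\EdgeExp(G,2/(dn))$.

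I do not anticipate a genuine obstacle here: the only points needing care are the formalization of $|E(U,\overline U)|$ as a $\VNC^1$-definable summation over the (rotation-map) encoding of~$G$, and keeping every quantifier instantiation within the $\Pi^{1,b}_1$ fragment, both of which are routine given the machinery of~\cite{BKKK:Expanders}. The single place the hypothesis is used — the step ``$E(U,\overline U)\ne\emptyset$'' — is invoked in its $\SubsetConn$ form precisely so that no path-reasoning, and hence no appeal to~$\VL$, is required for this theorem.
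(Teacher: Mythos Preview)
Your proposal is correct and takes essentially the same approach as the paper. The paper's own proof is a single sentence (``an immediate consequence of the definitions since $\VNC^1$ is able to reason about rational numbers''), and what you have written is simply a careful unpacking of that sentence: fix $U$, use $\SubsetConn$ to get $|E(U,\overline U)|\ge 1$, then clear denominators to reduce to $n\ge 2|U|$.
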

Theorem~\ref{thm:connectedVNC1} is an immediate consequence
of the definitions since $\VNC^1$ is able to reason about
rational numbers.

Theorem~~\ref{thm:adjacencyTrivial} is easy to state and prove
in $\VNC^1$ since, as discussed above, $\VNC^1$ is able to reason about summations.
The proof of Sedrakyan's Lemma (Lemma~\ref{lem:Sedrakyan})
as given above uses square roots
and thus does not seem to be easily formalizable in~$\VNC^1$. However,
Section~\ref{sec:Sedrakyan}
gives an alternate proof of Sedrakyan's Lemma that does formalize directly in~$\VNC^1$.
The earlier proof Theorem~\ref{thm:normM} now formalizes in~$\VNC^1$ since it proceeds by
manipulating summations and invoking Sedrakyan's Lemma.  Note that the proof
of Theorem~\ref{thm:normM} worked by bounding $\|M \vec v\|^2$, so this fits
exactly the way $\VNC^1$ expresses matrix norms.

Theorem~\ref{thm:JCdecomp} and its proof also formalize directly in~$\VNC^1$
in form given in Section~\ref{sec:Prelims}. When formalized,
this theorem has the form ``$\forall M\, \exists C\, \exists D (\cdots)$''.
Here $C$ and~$D$ are easy to define in terms of~$M$.

Theorem~\ref{thm:connectedBis} about the mixing ratio of an undirected graph.
To make it provable in~$\VNC^1$, we need the additional hypothesis that
there are sufficiently many self-loops at each vertex so that
Mihail's version of the Cheeger Inequality can be used. 
It becomes:
\begin{theorem}\label{thm:connectedBisVNC1}
$\VNC^1$ proves the following:
Let $G$ be a $d$-regular, undirected graph on $n$ vertices.
Suppose $d$ is even, and $G$ has at least $d/2$ self-loops at
each vertex.
Then 
\[
\SubsetConn(G) \;\;\limplies\;\; \MixRat(G, 1 - 2/(dn)^2).
\]
\end{theorem}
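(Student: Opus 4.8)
The plan is to obtain \Cref{thm:connectedBisVNC1} as an essentially immediate consequence of two facts that $\VNC^1$ already proves: the combinatorial lower bound on the edge expansion of a connected graph (\Cref{thm:connectedVNC1}) and the Cheeger--Mihail lemma (\Cref{thm:CheegerMihailVNCone}). The key numerical point is that the parameters line up exactly: taking $\alpha = 2/(dn)$, we have $\alpha^2/2 = 2/(dn)^2$, so $1 - \alpha^2/2 = 1 - 2/(dn)^2$ is precisely the mixing-ratio bound we want to establish.

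I would argue inside $\VNC^1$ by contradiction; this is legitimate since the ambient logic is classical and $\MixRat(G,\cdot)$ and $\EdgeExp(G,\cdot)$ are $\Pi^{1,b}_1$-formulas whose negations are exactly the relevant $\Sigma^{1,b}_1$ existential statements. Assume $\SubsetConn(G)$ together with a witness $\vec v$ --- a vector of rationals with common denominator satisfying $\vec v \perp \vecone$ --- showing that $\MixRat(G, 1 - 2/(dn)^2)$ fails, i.e., $\|M\vec v\|^2 > (1 - 2/(dn)^2)^2\,\|\vec v\|^2$. From $\SubsetConn(G)$ and \Cref{thm:connectedVNC1}, $\VNC^1$ derives $\EdgeExp(G, 2/(dn))$, that is, $\forall U\,\EdgeExp(U, G, 2/(dn))$.

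Now apply \Cref{thm:CheegerMihailVNCone}(b') with $\alpha = 2/(dn)$. Its hypotheses are precisely the extra assumptions in the statement ($d$ even and at least $d/2$ self-loops at each vertex), so $\VNC^1$ may invoke it. Since $1 - \alpha^2/2 = 1 - 2/(dn)^2$, the witness $\vec v$ for $\lnot\MixRat(G, 1 - 2/(dn)^2)$ is carried, by the alternating-logtime transformation supplied by \Cref{thm:CheegerMihailVNCone}, to a set $U$ of vertices with $\lnot\EdgeExp(U, G, 2/(dn))$. This contradicts $\forall U\,\EdgeExp(U, G, 2/(dn))$, so no such $\vec v$ can exist and $\MixRat(G, 1 - 2/(dn)^2)$ holds, completing the $\VNC^1$ proof.

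I expect essentially no substantive obstacle here: all the genuine content is packed into \Cref{thm:CheegerMihailVNCone} (which itself rests on Mihail's construction as formalized in \cite{BKKK:Expanders}), and the present theorem is just an instantiation plus bookkeeping. The only points needing care are that the rational arithmetic $(2/(dn))^2/2 = 2/(dn)^2$ and the side conditions $0 < |U| \le |V(G)|/2$ stay within $\VNC^1$'s reasoning about rationals and cardinalities, and the observation that the self-loop hypothesis is genuinely essential --- it is exactly what lets us substitute \Cref{thm:CheegerMihailVNCone} for the full Cheeger inequality invoked in the original \Cref{thm:connectedBis}.
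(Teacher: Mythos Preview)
Your proposal is correct and follows exactly the approach the paper takes: the paper states that the proof is ``immediate from the Cheeger-Mihail Theorem~\ref{thm:CheegerMihailVNCone}'', and you have simply spelled out the contrapositive argument, combining \Cref{thm:connectedVNC1} with \Cref{thm:CheegerMihailVNCone} at $\alpha = 2/(dn)$, which is the intended derivation (mirroring the non-$\VNC^1$ proof of \Cref{thm:connectedBis}).
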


The proof of Theorem~\ref{thm:connectedBisVNC1} in $\VNC^1$ is immediate
from the Cheeger-Mihail Theorem~\ref{thm:CheegerMihailVNCone}.
Theorem~\ref{thm:connectedTri} also needs to be modified so that 
Theorem~\ref{thm:connectedBisVNC1} can be applied to the matrix~$H$. It
becomes:

\begin{theorem}\label{thm:connectedTriVNC1}
$\VNC^1$ proves the following:
Let $G$ be a connected, $d$-regular, directed graph on $n$ vertices
with a self-loop at each vertex.  Let $M$ be the adjacency matrix of~$G$,
and $H$ be the undirected graph with adjaceny matrix $M^\transpose M$.
Suppose $H$ has at least
$d^2/2$ self-loops at each vertex.  Then the mixing ratio~$\eta$ of~$G$ is
at most $1 - 1/(d^4n^2)$.
\end{theorem}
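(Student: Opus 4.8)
The plan is to follow the proof of Theorem~\ref{thm:connectedTri}, adapting it in three places so that every step stays inside~$\VNC^1$: (i) the auxiliary undirected graph~$H$ must be shown to be \emph{subset}-connected rather than merely ``connected'' in the path sense, since $\VNC^1$ reasons about edge expansion rather than reachability; (ii) the Cheeger bound on~$H$ must come from the Cheeger--Mihail Theorem~\ref{thm:connectedBisVNC1} instead of the full Cheeger inequality, which is exactly why the statement assumes $H$ has at least $d^2/2$ self-loops at each vertex; and (iii) the closing estimate must be phrased purely with squared norms, so that $\VNC^1$ never has to extract a square root.

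Concretely, I would first have $\VNC^1$ form the string coding~$H$ (using the graph-manipulation capabilities~(vii)) and verify that $M^\transpose M$ is its normalized adjacency matrix: $M^\transpose M$ is symmetric, has non-negative entries, and all of its row and column sums equal~$1$ --- the last being a one-line summation computation from the fact that $M$ is doubly stochastic --- so that $H$ is a well-defined $d^2$-regular undirected multigraph on~$n$ vertices. Next, $\VNC^1$ proves $\SubsetConn(H)$: as in the proof of Theorem~\ref{thm:connectedTri}, the self-loops of~$G$ guarantee that every directed edge of~$G$ gives rise to an edge of~$H$ with the same endpoints, so any proper vertex cut crossed by an edge of~$G$ is also crossed by an edge of~$H$; since $G$ is connected, every proper cut is crossed in~$G$, hence in~$H$. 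At that point Theorem~\ref{thm:connectedBisVNC1} applies to~$H$ (degree~$d^2$, $n$~vertices, at least $d^2/2$ self-loops per vertex), and $\VNC^1$ derives $\MixRat(H, 1-2/(d^4n^2))$; that is, $\|M^\transpose M\,\vec w\|^2 \le (1-2/(d^4n^2))^2\|\vec w\|^2$ for every $\vec w \perp \vecone$. Finally, to conclude $\MixRat(G, 1-1/(d^4n^2))$, fix an arbitrary vector $\vec v \perp \vecone$ of rationals with common denominator; $\VNC^1$ proves the summation identity $\|M\vec v\|^2 = \langle \vec v, M^\transpose M \vec v\rangle$, and a single application of Cauchy--Schwarz (capability~(v)) to $\vec v$ and $M^\transpose M\vec v$, followed by the mixing-ratio bound for~$H$, yields
\[
\bigl(\|M\vec v\|^2\bigr)^2 ~\le~ \|\vec v\|^2\,\|M^\transpose M\vec v\|^2 ~\le~ (1-2/(d^4n^2))^2\,\bigl(\|\vec v\|^2\bigr)^2 .
\]
Since $\|M\vec v\|^2$ and $\|\vec v\|^2$ are non-negative rationals, $\VNC^1$ concludes $\|M\vec v\|^2 \le (1-2/(d^4n^2))\|\vec v\|^2$, and then $1-2/(d^4n^2)\le(1-1/(d^4n^2))^2$ gives $\|M\vec v\|^2 \le (1-1/(d^4n^2))^2\|\vec v\|^2$. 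As $\vec v$ was arbitrary, this is exactly $\MixRat(G, 1-1/(d^4n^2))$.

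I do not expect a deep obstacle, since the genuinely hard ingredient --- a constructive form of the Cheeger inequality --- is already packaged in Theorem~\ref{thm:connectedBisVNC1} via Mihail's argument. The two points that will need care are, first, deriving $\SubsetConn(H)$ from the connectedness of~$G$ (here the self-loops of~$G$ are precisely what make the edge-inclusion argument work, and only weak connectedness is actually used), and second, making sure no square root is ever introduced: this is handled by squaring the target bound $1-1/(d^4n^2)$ and absorbing the slack through the elementary inequality $1-2/(d^4n^2)\le(1-1/(d^4n^2))^2$, so that every quantity appearing in the derivation is a rational of the form $\|\cdot\|^2$.
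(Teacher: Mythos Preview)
Your proposal is correct and follows essentially the same route as the paper: apply Theorem~\ref{thm:connectedBisVNC1} to the $d^2$-regular undirected graph~$H$ (using the self-loop hypothesis to invoke Cheeger--Mihail), then transfer the bound to~$G$ via $\|M\vec v\|^2 = \langle \vec v, M^\transpose M\vec v\rangle$ and Cauchy--Schwarz, working throughout with squared norms. You are in fact more explicit than the paper on two points the paper leaves implicit---deriving $\SubsetConn(H)$ from the (subset-)connectedness of~$G$ via the self-loop edge-inclusion argument, and spelling out the squared-norm chain $(\|M\vec v\|^2)^2 \le (1-2/(d^4n^2))^2(\|\vec v\|^2)^2$ followed by $1-2/(d^4n^2)\le(1-1/(d^4n^2))^2$---but these are exactly the details the paper intends when it says the proof of Theorem~\ref{thm:connectedTri} ``formalizes straightforwardly'' by talking about squares of vector norms.
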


The proof of Theorem~\ref{thm:connectedTriVNC1} in $\VNC^1$
follows the earlier proof of Theorem~\ref{thm:connectedTri}.
The graph~$H$ has degree~$d^2$; therefore Theorem~\ref{thm:connectedBisVNC1}
applies to~$H$.
The rest proof of Theorem~\ref{thm:connectedTri} formalizes straightforwardly
in~$\VNC^1$. The only change needed to carry out
proof in $\VNC^1$ is to talk about the squares
of vector norms.  The last paragraph of the proof 
of Theorem~\ref{thm:connectedTri} is modified so that $\VNC^1$ now
proves 
\[
\| M \vec v \|^2 \le (1-1/(d^4 n^2))^2.
\]
This is done by starting with
\[
\|M^\transpose M \vec v\|^2 \le  (1-\penalty10000 2/(d^4 n^2))^2
\]
from Theorem~\ref{thm:connectedBisVNC1}, and then proving
that 
\[
\| M \vec v \|^4 \le (1-\penalty10000 1/(d^4 n^2))^4.
\]
From that,
\[
\| M \vec v \|^2 \le (1-1/(d^4 n^2))^2
\]
follows immediately.

One final preliminary result is that $\VNC^1$ proves that
$\|P\| = \sqrt n$ and $\|L\| = 1/\sqrt n$, as discussed at the
end of Section~\ref{sec:Prelims}. These statements involve
a square root, but the square roots disappear when formalizing
these in ~$\VNC^1$.  We let $L$ and~$P$ have dimensions $mn\times n$ and
$n \times mn$. To prove that $\|L\|=n$, $\VNC^1$~proves
\[
\|L \vec v\|^2 = (1/n)\cdot \|\vec v\|^2
\]
holds for all $m$-vectors~$\vec v$.
This follows immediately from the fact that $\sum_{i=1}^n (1/n)^2 = 1/n$
and the fact that $\VNC^1$ can prove this and
reason effectively about summations of rationals.
$\VNC^1$ also proves 
\[
\|P \vec v\|^2 \le n \|\vec v^2\|.
\]
Let $\vec w$ be an $mn$-vector with entries $w_{i,j}$; then
the $i$-th entry of $P\vec w$ is $\sum_{j=1}^n w_{i,j}$.
$\VNC^1$~must prove that 
\[
\bigl( \sum_{j=1}^n w_{i,j} \bigr)^2\leq n \cdot \sum_{j=1}^n w_{i,j}^2.
\]
This follows
by the Cauchy-Schwarz theorem that the
(squared) dot product $(\vec v \cdot \vecone)^2$
is less than or equal to $\|\vec v\|^2 \cdot \|\vecone\|^2 = \|\vec v\|^2 n$,
with $\vec v$ the $n$-vector with entries $w_{i,j}$, for $j=1,\dots, n$.

\paragraph*{Main part of proof in $\VL$.}
We next discuss how to formalize the proof that $\L=\SL$
in~$\VL$, following closely the exposition in Section~\ref{sec:RVproof}.
Many of the steps can be formalized in $\VNC^1$ in fact,
but some of the crucial steps require the use of~$\VL$.
The starting input is an undirected graph~$Y$ without
self-loops or multiedges. We shall conflate
a graph such as~$Y$ with the string encoding the edge relation
on~$Y$.

\smallskip
\textbf{The first step} in the proof $\L = \SL$ was 
to transform~$Y$ into a 4-regular directed
graph~$X$. The graph~$X$ had a self-loop
at each vertex. For formalization in $\VL$, this
step is modified to use a 16-regular direct graph~$X^*$ formed
similarly to~$X$ but
with 13 self-loops at each vertex instead of four.\footnote{It would suffice 
for $X^*$ to
have seven self-loops at each vertex in order to apply 
Theorem~\ref{thm:connectedTriVNC1} to~$X^*$ to bound the mixing
ratio of~$X^*$. However,
later when Theorem~32 of~\cite{BKKK:Expanders} is used to
construct the graphs~$H_i$, it is convenient
for the degree of~$X^*$ to be a power
of~2. Hence, $X^*$ is taken to be 16-regular. This makes little
difference to the constants used in the constructions.}
Namely, each vertex in~$X^*$ has the same four edges of~$X$
with labels 0-3 plus 12 additional self-loops with labels 4-15.

This step is completely straightforward to
formalize in~$\VNC^1$: Each vertex~$y$ in~$Y$ is replaced with
$d$ vertices to form~$X$, where $d$ is the degree of~$v$.  For $i<d$,
the edges~$(y,u)$ incident to~$v$ in~$Y$ are ordered according to
the index~$u$ of the other vertex of the edge. This is easily defined
with counting, and the rest of the construction is straightforward to
carry out in~$\VNC^1$. In particular, the number~$N$ of vertices in~$X$
is the sum of the degrees of the vertices in~$Y$, and hence equal to
twice the number of edges in~$Y$. Let $E(\cdot,\cdot)$ be the
edge relation for~$Y$. For a vertex $y\in Y$, the $i$-th neighbor of~$y$
is equal the value~$u$ (if any) such that $E(y,u)$ holds and such that
there are $i{-}1$ many values~$u^\prime$ such that $E(y,u^\prime)$ holds.
The vertex~$y$ of~$Y$ is replaced in~$X$ with the vertices $u = j+k$
where $j$~is the sum of the degrees of the vertices $y^\prime<y$ in~$Y$ (namely
the number of pairs $(u^\prime, u^\pprime)$ such that $E(u^\prime,u^\pprime)$
with $u^\prime < y$) and where $k$~is less than the degree of~$y$ in~$Y$.
Each vertex~$x$ in~$X^*$ has degree~$16$. Given $i<16$, it is straightforward
to define, in~$\VNC^1$, the $i$-th neighbor of~$x$ in~$X$.

The self-loops in~$X^*$ allow $\VNC^1$
to apply Theorem~\ref{thm:connectedTriVNC1} to~$X^*$.
Form the undirected graph~$H$ by letting $M$~be the
adjacency matrix of~$X^*$ and letting $M^\transpose M$ be
the adjacency matrix of~$H$.  Then $H$ is 256-regular.
Furthermore, $H$~has 171 self-loops at each vertex; this is because
the thirteen self-loops at a vertex in~$X^*$ contribute $13^2 = 169$
self-loops to~$x$ in~$H$ and the four directed edges between
vertices $\langle v, i \rangle$ and $\langle v, i\pm 1 \bmod d \rangle$
contribute two more self-loops to~$\langle v, i \rangle$.
Since $171 \ge 256/2$, $X^*$~and $H$ satisfy the hypotheses of 
Theorem~\ref{thm:connectedTriVNC1}, 
thus $X^*$~has mixing
ratio at most $1 - 1/(16^4 N^2)$ where $N$ is the number of vertices of~$X^*$.
This is provable in~$\VNC^1$.

As discussed earlier, the Rozenman-Vadhan proof
of $\L=\SL$ used the constant $Q=4^q$ for some constant~$q$ since $X$ had
degree~4 (see Figure~\ref{fig:XGdegrees}). The $\VL$-proof uses the 16-regular~$X^*$
instead of~$X$.
For this reason, we now redefine~$Q$ to equal ${16}^q$ instead instead of~$4^q$. 
The value of~$q$ is defined in the next paragraphs so that 
Theorem~32 of~\cite{BKKK:Expanders} combined with Theorem~\ref{thm:connectedTriVNC1} above
will give large $Q$-regular graphs~$H_i$ with
good mixing ratio.\footnote{Theorem~32 of~\cite{BKKK:Expanders} holds
when $d=2^\ell$ is any sufficiently large power of~2.}

\smallskip
\textbf{The second step} is to prove the existence of the graphs~$H_i$
with the desired expansion properties.  
For $i > 0$, $H_i$~is to be a $(Q^i, Q, 1/100)$-graph.   
By \cite[Theorem 32]{BKKK:Expanders}, 
$\VNC^1$ proves that for constant~$p$,
letting $d = 2^p$, there are undirected $d$-regular graphs of
arbitrary size
which have edge expansion $>1/2592$.  By replacing undirected edges
with pairs of directed edges, there are also $d$-regular
directed graphs with the same properties.
Let $q'$ be the least value such that $16^{q'} = 2^{4q'} \ge 4\cdot 2^p$,
and let $Q' = 16^{q'}$. Of course $4q' - p \in \{2,3,4,5\}$.

Fix $k>0$ a constant, to be defined in the next paragraph.
Let $Q = (Q')^k$, and let $q = q' \cdot k$, so $Q = 16 ^ q$.
Letting $i=O(\log N)$, we claim that $\VNC^1$
can prove the existence of an undirected $Q'$-regular graph~$K_i$
on $Q^i$~many vertices with nontrivial mixing ratio.  
By Theorem~32 of~\cite{BKKK:Expanders}, $\VNC^1$ can prove
that there is an undirected
$2^p$-regular graph~$K'_i$ on $Q^i$~many vertices
with edge expansion $\ge 1/2592$. 
Form the $Q'$-regular undirected graph~$K_i$ by 
adding $16^{q'} - 2^p$ many self-loops to each vertex in~$K'_i$.
This increases the degree of the graph by a factor
$f = 16^{q'}/2^p$, and $4 \le f \le 32$.  Therefore,
$K_i$~has edge expansion $\ge 1/(32\cdot 2592)$.
Since at least $3/4$ of the edges on any vertex on~$K_i$ are
self-loops and since $(3/4)^2 \ge 1/2$, 
Theorem~\ref{thm:connectedTriVNC1} applies to~$K_i$ and implies
that $K_i$ has mixing ratio $\le 1-\delta$ where
$\delta = (1/(32\cdot 2592))^2/2$, still provably in $\VNC^1$.
(The value~$\delta$ is an upper bound on the {\em spectral gap} of~$K_i$.)

We choose $k$ to be the smallest integer such that $(1-\delta)^k \le 1/100$.
Let $H_i = K_i^k$. Thus $H_i$~is $Q$-regular (since $Q = (Q')^k$)
and still has $Q^i$ many vertices.
In addition, $\VNC^1$ can
prove that the mixing ratio of~$H_i$ at most $(1-\delta)^k \le 1/100$.  This is
proved in $\VNC^1$ using the fact that, for any~$\vec v \perp \vecone$,
we have $(K_i \vec v) \perp \vecone$ and $\|K_i \vec v\| \le (1-\delta)\|\vec v\|$.
We have established
that $\VNC^1$ can prove that $H_i$ exists and is a
$(Q^i,Q,1/100)$ graph.  This completes the second step.

The second step of the proof as formalized in~$\VNC^1$
used different values for the constants $q$ and $\delta$ than were
used in Section~\ref{sec:LSLproof}. This will need to be taken into account
below when showing that $\VL$ can prove the new version
of Claim~\ref{claim:lambdaX}.

\smallskip
\textbf{The third step} is to define the graphs~$G_i$
and prove they have the desired expansion properties.
For $i\le m_0$, we set $G_i = H_i$. For the remaining matrices,
we set \[
G_{m_0+i} = (H_{m_0+2^i-1})^{2^i},
\]
where $i=O(\log\log N)$.
Of course $G_{m_0+i}$ has $Q^{m_0+2^i-1}$ vertices and is
$Q^{2^i}$-regular.  Since $i = O(\log\log n)$ and $Q = O(1)$,
these facts are readily expressed and proved by~$\VL$.
In particular, the number of vertices $Q^{m_0+2^i-1}$
and the degree $Q^{2^i}$ are both $N^{O(1)}$ and thus
are numbers (first-order objects).
This allows an edge index~$e$ in~$G_{m_0+i}$ to be
specified by a sequence $\langle a_1,\ldots, a_{2^i} \rangle$
with each $a_i<Q$, namely as a sequence of $2^i$ steps
in~$H_{m_0+2^i-1}$.  Thus $\VL$ can formalize all these concepts.
Indeed, $\VL$~can define the function that takes
a value~$i\le m_1-m_0$, a vertex~$w$ of~$G_{m_0+i}$,
and an edge index~$e<Q^{2^i}$ and produces the vertex~$u$
of~$G_{m_0+i}$ which is the $e$-th neighbor of~$w$ in~$G_{m_0+i}$.

This allows $\VL$ to define the entries in the adjacency matrix
of $G_{m_0+i}$. Namely the $(w,u)$-entry of the unnormalized
adjacency matrix is the number of edges from $w$ to~$u$ in $G_{m_0+i}$;
the same entry in the (normalized) adjacency matrix is
this number divided by~$Q^{2^i}$.

We furthermore claim that $\VL$ can
prove that the mixing ratio of $M_{G_{m_0+i}}$ is $<(1/100)^{2^i}$.
The idea for the $\VL$ proof is that it establishes
that if $\vec v \perp \vecone$ with $\|\vec v\|=1$ then
\begin{equation}\label{eq:mixingViGm0i}
(M_{H_{m_0+2^i-1}})^j(\vec v) \perp \vecone \qquad \hbox{and} \qquad
\|(M_{H_{m_0+2^i-1}})^j(\vec v) \| < (1/100)^j
\end{equation}
using induction with $j$ ranging from $1$ to~$2^i$.  $\VL$~can formalize
this kind of induction {\em provided} that it can meaningfully define
the matrices~$M_j$ and vectors~$v_j$
\[
M_j ~:=~ (M_{H_{m_0+2^i-1}})^j
\qquad\hbox{and}\qquad
\vec v_j ~:=~ (M_{H_{m_0+2^i-1}})^j(\vec v).
\]
These two concepts were already shown to be $\VL$-definable in the
previous two paragraphs when $j=2^i$ for $i=O(\log\log n)$:
The same argument shows that they
are definable for arbitrary $j=O(\log n)$.  Furthermore,
$\VL$ proves that
\begin{equation}\label{eq:Mip1M1Mi}
M_{j+1} ~=~  M_1 \cdot M_j.
\end{equation}
It does this by showing that the number of edges in~$M_{j+1}$
from vertex $w$ to~$u$ is equal to
\begin{eqnarray*}
\lefteqn{\sum_x  ~~
     [(\hbox{\# of edges from $w$ to $x$ in $(M_{H_{m_0+2^i-1}})^j$})} \\
&&\qquad\qquad
\cdot (\hbox{\# of edges from $x$ to $u$ in~$M_{H_{m_0+2^i-1}}$}) ],
\end{eqnarray*}
where $x$ ranges over vertices (from $[N]$).  This is just another
way of stating the matrix product identity~(\ref{eq:Mip1M1Mi}).
Given this, $\VL$~readily proves (\ref{eq:mixingViGm0i}) by induction
on~$j$.\footnote{This uses induction only to a logarithmically sized
number $j = O(\log N)$. In fact, $\VL$~can use induction to
an arbitrary integer. The limiting factor is the property to be proved
must be $\Sigma^{1,b}_0$. This is why we need $j = O(\log N)$; namely,
so that the values $v_j$ and $M_j$ are $\Sigma^{1,b}_1$-definable
in~$\VL$.}

\smallskip
\textbf{The fourth step} is to prove the existence
of the inductively defined graphs~$X_{m_1}$
based on the recurrence $X_{i+1} = X_i \circleS G_i$
of Equation~(\ref{eq:XiDefn}).  This is formalized in~$\VL$
by using Algorithm~\ref{alg:XmEdge} to define the edge relation
of~$X_i$.  The graphs~$X_i$ all have the vertex set~$[N]$.
Also, $X_i$~is directed and $d_i$~regular, with $d_i$ as shown
in Figure~\ref{fig:XGdegrees}; namely, $X_i$~has degree $d_i:=Q_i$
for $i\le m_0$; and for $i=m_0+j$, $X_i$~has degree
$d_i:=Q^{m_0+2^j-1}$.
Similarly to before
(but now generalized to~$X_i$ instead of $X_{m_0}$),
an edge in~$X_i$ is represented by a tuple
of length~$i$:
\[
(z,a_1,\ldots,a_i),
\]
where
\begin{itemize}
\setlength{\itemsep}{0pt}
\setlength{\parsep}{0pt}
\item[\rm (a)] $z\in[Q]$
\item[\rm (b)] for $j\le m_0$, $a_j\in [Q]$, and
\item[\rm (c)] for $j = m_0 + j^\prime$ (if there is any such $j\le i$), $a_j \in [Q^{2^j}]$.
\end{itemize}
Algorithm~\ref{alg:XmEdge} can be directly
and straightforwardly formalized as a logspace algorithm by~$\VL$.
This uses the fact that the graphs~$H_i$, and thereby the graphs~$G_i$,
can be uniformly defined in~$\VL$. Namely, $\VL$~can define
the algorithm that takes as inputs $i<m_1$, a vertex $v\in [N]$
and an edge index~$e$ for $H_i$ or~$G_i$
and produces the $e$-th neighbor of~$v$
in $H_i$ or~$G_i$ (respectively). This is clear from the
constructions in~\cite{BKKK:Expanders} that were
formalizable in~$\VNC^1$. Hence, $\VL$~can define the
algorithm that takes appropriate values $i,v,e$ as inputs
and produces the $e$-th neighbor of~$v$ in~$X_i$.\footnote{Note that
algorithm for traversing edges in~$X_i$ is formalizable in~$\VL$ only, and
not (known to be formalizable) in~$\VNC^1$.  This is because
the edge relation for~$X_i$ is logspace computable and not
(known to be) in alternating log time.}  Clearly, $\VL$~defines
$X_i$ as a directed and $d_i$-regular graph.

Furthermore, Algorithm~\ref{alg:XmEdge}
is recursive and computes the edge relation for~$X_i$
by a call to the edge relation for~$G_i$
plus two calls to the edge relation for~$X_{i-1}$. From
this, it is clear that $X_i = X_{i-1}\circleS G_i$; and
$\VL$~proves this fact.

\smallskip
\textbf{The fifth step} is to prove a good lower bounds
for the mixing ratios of the~$X_i$'s.  This requires a little
care to formalize in~$\VL$.  The first observation is
that the proof of Theorem~\ref{thm:RVexpand} can be
adequately carried out
in~$\VNC^1$; specifically, the proof of part~(b) of
the theorem.  For this, we
claim that $\VL$~proves that if $A$ and~$M$
are the adjacency matrices for $X_i$ and~$X_{i+1}$, then
\begin{equation}\label{eq:PerpProperty}
\vec u \perp \vecone \;\;\rightarrow\;\; A \vec u \perp \vecone
\end{equation}
and
\begin{equation}\label{eq:RVexpandProperty}
\vec u \perp \vecone \; \land\;  \|M\vec u\| > f(\lambda,\mu)\|\vec v\|
  \;\; \rightarrow\;\;
  ( \|A\vec u\| > \lambda \|\vec v\|
        \;\lor\; \|A(A\vec u)\| > \lambda \|A\vec u\| ).
\end{equation}
The vector~$\vec u$ will always be expressible
as a vector of rational numbers with a common
denominator. The entries in the adjacency matrices~$X_i$
are also rational numbers with common denominator~$d_i$,
the degree of~$X_i$.  These denominators will be
integers, i.e., first-order objects.
Therefore the implication~(\ref{eq:PerpProperty})
can be expressed in terms of exact arithmetic on
rational numbers, and the implication~(\ref{eq:PerpProperty})
can be proved in the theory~$\VNC^1$ using the arguments from
Section~\ref{sec:Prelims} used to prove Theorem~\ref{thm:adjacencyTrivial}.
Thus, this is also provable in~$\VL$.

Handling the implication~(\ref{eq:RVexpandProperty}) is a little
more difficult because the proof of Theorem~\ref{thm:RVexpand} reasons about
mixing ratios, and thus needs to reason about vector norms.
Computing a vector norm~$\|\vec v\|$ requires a square root, and this is
a little problematic since it uses a square root can of course
takes us outside the domain of rational numbers.
As discussed earlier,
the solution is to argue about {\em squares of vector norms},~$\|\vec v\|^2$,
whenever possible.  That is, $\VNC^1$ (and hence~$\VL$) will prove
\begin{equation}\label{eq:RVexpandPropertySq}
\vec v \perp \vecone \;\land\; \|M\vec v\|^2 > f(\lambda,\mu)^2 \|\vec v\|^2
  \;\; \rightarrow \;\;
  ( \|A\vec v\|^2 > \lambda^2 \|\vec v\|^2
        \;\lor\; \|A(A\vec v)\|^2 > \lambda^2 \|A\vec v\|^2 )
\end{equation}
instead of proving~(\ref{eq:RVexpandProperty}). In
fact, (\ref{eq:RVexpandPropertySq})~is provable in~$\VNC^1$.

For this, we need to show that the argument in the final paragraph
of the proof of Theorem~\ref{thm:RVexpand} can be formalized
in~$\VNC^1$. 
That is, $\VNC^1$ proves that if 
\[
\| M \vec v \|^2 > f(\lambda, \mu)^2
\]

then either 
\[
\|A \vec v\|^2 > \lambda^2 \|\vec v\|^2
\]
or
\[
\|A^2 \vec v\|^2 > \lambda^2 \|A \vec v\|^2.
\]
It is enough to show that $\VNC^1$ proves that if 
$\| M \vec v \|^2 > f(\lambda, \mu)^2$ then $\|A^2 \vec v\|^2 > \lambda^4 $.
Recall that $\VNC^1$ can prove that $\|P\|\le \sqrt n$
and $\|L\| \le 1/\sqrt n$. Also, $\|\tilde A\| \le 1$ since it is a
permutation matrix and $\|I_N \otimes C\| \le 1$ by
Theorem~\ref{thm:JCdecomp}(a). Therefore,
letting 
\[
R = \mu P \tilde A ( I_N \otimes\penalty10000 C ) \tilde A L,
\]
we have $\|R\|\le \mu$ by Lemma~\ref{lem:TriangleInEqMatrixNms}(b).
We will apply the triangle inequality
to the vectors $\vec x = (1-\mu)A^2 \vec v$ and $y = R \vec v$, and scalars
$a = (1-\mu)\lambda^2 \|\vec v\|$ and $b = \mu \|\vec v\|$.
By hypothesis, $M\vec v = \vec x + \vec y$
and
\begin{align*}
    \|M \vec v\|^2 &= \|\vec x + \vec y\|^2\\ 
    &>
  f(\lambda,\mu)^2 \|\vec v\|^2\\ 
  &= (\mu + (1-\mu)\lambda^2)^2\|\vec v\|^2\\
  &= (a+b)^2.
\end{align*}
Since $\|R\|^2\le \mu^2$, we have $\| \vec y \|^2 \le b^2$. 
Assuming (for sake of a contradiction) that $\|A^2 \vec v\|^2 \le \lambda^4 $,
we have
\begin{align*}
    \|\vec x\|^2 &= \|(1-\penalty10000 \mu) A^2 \vec v\|^2  \\
    &\le a^2\\
    &= (1-\mu)^2\lambda^4 \|v\|^2.
\end{align*}
This contradicts the triangle inequality of 
Lemma~\ref{lem:TriangleInEqMatrixNms}(a).
Therefore, $\VNC^1$ can prove~(\ref{eq:RVexpandPropertySq}).

\smallskip
For \textbf{the sixth step}, to finish proving lower bounds on the mixing ratios of
the graphs~$X_i$, $\VL$~must prove Proposition~\ref{prop:BoundsOnF}
and Claim~\ref{claim:lambdaX}.  $\VNC^1$~can carry out the proof
of Proposition~\ref{prop:BoundsOnF} as given above. However,
the proof of Claim~\ref{claim:lambdaX} needs to be carried out in~$\VL$
instead of $\VNC^1$ for
the simple reason that $\VL$ can define and prove the existence of
the graphs~$X_i$ and their expansion properties, and $\VNC^1$~cannot (as
far as we know). As already discussed, $\VL$~can define
the graphs~$G_i$ and $X_i$, can prove $X_{i+1} = X_i \otimes G_i$,
can prove Theorem~\ref{thm:RVexpand}, can define the
functions $f_p(m) = p^{\lceil \log m \rceil}$, and of course
reason about rational numbers. The $\VL$-proof of the Claim~\ref{claim:lambdaX}
uses all these facts, but also needs to handle the use of
logarithms.  Logarithms take us outside rational numbers of course, but
$\VL$~can instead get by with rough approximations to logarithms
(specifically, rough approximations to $\lceil \log 3/2 \rceil$ and
$\lceil \log 8/7 \rceil$) as shown by the next very simple lemma.
The lemma is stated using $16^2$. 

\begin{lemma} \label{lem:ineqsVNC1}
Let $N>1$.
\begin{itemize}
\setlength{\itemsep}{0pt}
\item[\rm (a)] $\VNC^1$ proves $16^4 N^2 < (3/2)^{m_0}$
    where $m_0 = 100 \cdot \lceil \log N \rceil$. 
\item[\rm (b)] $\VNC^1$ proves $N^2 < (8/7)^{2^\ell}$
where $\ell = 10 + \lceil \log ( \lceil \log N \rceil) \rceil$.
\end{itemize}
\end{lemma}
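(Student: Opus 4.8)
The plan is to strip away both the rational exponentials and the logarithms: clearing denominators turns each claim into a purely integer inequality comparing a product of small powers of~$2$ with a fixed power of~$3$ (part~(a)) or of~$8$ (part~(b)), after which only the bound $N \le 2^{\lceil \log N\rceil}$ and a couple of concrete numerical facts about constants remain. Throughout, write $L := \lceil \log N\rceil$; the only property of $L$ we use is $N \le 2^{L}$, which is provable in~$\Vzero$. Note that every exponential quantity appearing below --- such as $2^{m_0}$, $3^{m_0}$, $2^{L}$, $7^{2^\ell}$, $8^{2^\ell}$ --- is bounded by a fixed polynomial in~$N$ (for instance $3^{m_0} = (3^{\lceil \log N\rceil})^{100} < (3N^{2})^{100}$), hence is a genuine $\Vzero$-definable number by the techniques behind item~(viii) above; and $\VNC^1$ proves the elementary exponent laws $a^{x+y}=a^{x}a^{y}$, $(a^{x})^{y}=a^{xy}$ and $(ab)^{x}=a^{x}b^{x}$ together with the monotonicity statements ``$a\le b \Rightarrow a^{x}\le b^{x}$'' and ``$x\le y \Rightarrow a^{x}\le a^{y}$'', restricted to constant bases and $O(\log N)$-size exponents so that all powers in sight remain definable numbers.

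For part~(a): since $(3/2)^{m_0}=3^{m_0}/2^{m_0}$ and $m_0=100L$, the claim is equivalent to the integer inequality $16^{4}N^{2}\cdot 2^{100L}<3^{100L}$. Using $16^{4}=2^{16}$ and $N^{2}\le 2^{2L}$, the left-hand side is at most $2^{16+102L}$; and since $L\ge 1$ we have $16+102L\le 140L$, so the left-hand side is at most $2^{140L}=(2^{7})^{20L}$. The right-hand side is $(3^{5})^{20L}$, so it suffices to verify the single concrete fact $2^{7}<3^{5}$ and raise both sides to the power $20L$.

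For part~(b): write $\ell=10+L'$ with $L'=\lceil\log L\rceil$, so that $2^{\ell}=1024\cdot 2^{L'}$ and $2^{L'}\ge L$; hence $2^{\ell}\ge 1024L$. Since $8/7>1$, monotonicity in the exponent gives $(8/7)^{2^\ell}\ge (8/7)^{1024L}$, so it is enough to prove $N^{2}<(8/7)^{1024L}$, i.e., after clearing denominators, $N^{2}\cdot 7^{1024L}<8^{1024L}$. Using $N^{2}\le 2^{2L}=4^{L}$, the left-hand side is at most $(4\cdot 7^{1024})^{L}$ while the right-hand side is $(8^{1024})^{L}$, so it suffices to check $4\cdot 7^{1024}<8^{1024}$, equivalently $(8/7)^{1024}>4$. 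This in turn follows from the concrete fact $8^{6}>2\cdot 7^{6}$: it gives $(8/7)^{6}>2$, whence $(8/7)^{1024}\ge((8/7)^{6})^{170}>2^{170}>4$. This completes part~(b).

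The routine integer arithmetic and the verification of the two numerical facts $2^{7}<3^{5}$ and $8^{6}>2\cdot 7^{6}$ are unproblematic for~$\VNC^1$, being true $\Sigma^B_0$ statements about fixed numerals. The only genuinely delicate points are the two flagged in the first paragraph: that the enormous-looking quantities $2^{m_0},3^{m_0},8^{2^\ell},\dots$ are in fact $N^{O(1)}$-bounded and therefore honest $\Vzero$-definable numbers --- exactly the (mildly nontrivial) content of item~(viii), handled by the standard constructions for the functions $f_{p}$ --- and that $\VNC^1$ can prove the basic monotonicity of exponentiation in base and exponent for such bounded powers. The latter follows from $\VNC^1$'s ability to reason about polynomial-length summations, via the factorization $b^{k}-a^{k}=(b-a)\sum_{i<k}a^{i}b^{k-1-i}$ (a sum of $O(\log N)$ bounded terms), together with $\Sigma^{1,b}_0$-induction. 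I expect this bookkeeping, rather than any of the inequalities themselves, to be the part requiring the most care.
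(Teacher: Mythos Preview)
Your proof is correct and follows essentially the same approach as the paper: reduce each inequality, via $N \le 2^{\lceil\log N\rceil}$, to a concrete numerical fact about constants (the paper uses $(3/2)^2>2$ and $(8/7)^8>2$, you use $2^7<3^5$ and $8^6>2\cdot 7^6$), then apply monotonicity of exponentiation. Your version is somewhat more careful about clearing denominators to stay in integer arithmetic and about flagging the definability of the large powers, which is appropriate attention to the formalization but not a different idea.
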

\begin{proof}
We argue informally in~$\VNC^1$. Since $(3/2)^2 > 2$, we have
\[
(3/2)^{m_0} ~>~ 2^{50 \lceil \log N \rceil}
   ~\ge~ 2^{25 \lceil \log N^2 \rceil}
   ~=~ (2^{\lceil \log N^2 \rceil})^{25}
   ~\ge~ (N^2)^{25} ~>~ 16^4 N^2
\]
for $N \ge 2$.
That proves~(a).  Now, $(8/7)^8 > 2$. Thus
\[
(8/7)^{2^\ell} ~=~ (8/7)^{2^{10 + \lceil \log ( \lceil \log N \rceil) \rceil} }
   ~\ge~ (8/7)^{2^{10} \lceil \log N \rceil}
   ~>~ 2^{2^7 \lceil \log N \rceil} ~\ge~ N^{2^7} ~\ge~ N^2,
\]
proving~(b).\footnote{It is clear from these calculations that the constants 100 and
10 are far from optimal. We have kept them unchanged from the
conventions of~\cite{RozenmanVadhan:DerandSquaring}.}
\end{proof}
This suffices for $\VL$ to prove Claim~\ref{claim:lambdaX} and its bounds
on mixing ratios. That is, $\VL$ proves that if $\lambda(X_1) \le 1- 1/(16^4 N^2)$,
then $\lambda(X_{m_1})^2 < 1/N^3$.  Recall that $\VL$ expresses
bounds on the mixing ratios in terms of their squares, as
in Equation~(\ref{eq:mixingRatioVNCone}).

\smallskip
\textbf{The seventh step} is to prove Theorem~\ref{thm:expandImpliesConnected}
in~$\VNC^1$. Fortunately, the proof as given earlier
formalizes directly in~$\VNC^1$.

\bigskip

We are now ready to complete the proof that $\VL$ proves
$\L = \SL$. As a first step, we have:
\begin{theorem}\label{thm:PathConnIffSubsetConn}
$\VL$~can prove the following. If $G$~is an undirected
graph, then $G$ is subset-connected iff $G$~is path-connected.
That is, 
\[
\VL \vdash \PathConn(G) \liff \SubsetConn(G).
\]
\end{theorem}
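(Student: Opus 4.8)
The plan is to prove the two directions separately. The direction $\PathConn(G)\limplies\SubsetConn(G)$ is the easy one and, as noted above, already goes through in $\VNC^1$: given a witness $Z$ of $\PathConn(G)$ and a hypothetical witness $U$ of $\lnot\SubsetConn(G)$, fix $x\in U$ and $y\notin U$; the walk coded by $Z^{[x,y]}$ begins inside $U$ and ends outside it, so it has a first vertex $v_i\notin U$, and then $\{v_{i-1},v_i\}$ is an edge of $G$ crossing the cut $U$, contradicting the choice of $U$. This is $\Sigma^{1,b}_0$ reasoning about the given strings, hence available in $\VNC^1\subseteq\VL$. The work is in the converse $\SubsetConn(G)\limplies\PathConn(G)$, and there the plan is to run, inside $\VL$, the seven-step formalization of the Rozenman--Vadhan construction from Section~\ref{sec:RVproofVL} and then read off an explicit path set.

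So assume $\SubsetConn(G)$. First I would preprocess $G$ in $\VNC^1$ (deleting self-loops and parallel edges, which affects neither $\SubsetConn$ nor $\PathConn$) and build the $16$-regular directed graph $X^*$ on $N$ vertices with $13$ self-loops at each vertex, exactly as in the first step of Section~\ref{sec:RVproofVL}. Then $\VNC^1$ proves $\SubsetConn(G)\limplies\SubsetConn(X^*)$: a proper nonempty cut $U$ of $V(X^*)$ either splits some gadget $\{\langle v,i\rangle\}_i$, which is a directed cycle (or a single self-looped vertex) and so contributes a crossing edge, or it respects every gadget, in which case it descends to a proper nonempty cut of $V(G)$ whose crossing edge $\{v,w\}$ of $G$ lifts to a label-$2$ edge of $X^*$ across $U$. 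Because $X^*$ has a self-loop at every vertex, every directed edge of $X^*$ is an edge of the undirected graph $H$ with adjacency matrix $M^\transpose M$, so $\SubsetConn(X^*)\limplies\SubsetConn(H)$; and $H$ is $256$-regular with $171\ge 256/2$ self-loops at each vertex. Hence Theorem~\ref{thm:connectedTriVNC1} applies, and $\VNC^1$ proves that $X^*$, and therefore $X_1=(X^*)^q$, has mixing ratio at most $1-1/(16^4N^2)$, in the square-norm formulation that $\VNC^1$ and $\VL$ use.

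Next, working in $\VL$, I would carry out steps two through six of Section~\ref{sec:RVproofVL}: define the expanders $H_i$ via Theorem~32 of~\cite{BKKK:Expanders} together with Theorem~\ref{thm:connectedTriVNC1}, define the $G_i$ and the graphs $X_i$ with $X_{i+1}=X_i\circleS G_i$ up to $X_{m_1}$ where $m_1=m_0+O(\log\log N)$, prove the mixing-ratio bounds $1/100$ for $i\le m_0$ and $(1/100)^{2^{i-m_0}}$ for $i>m_0$ on the $G_i$, and then iterate Theorem~\ref{thm:RVexpand} together with Proposition~\ref{prop:BoundsOnF}, the reformulated Claim~\ref{claim:lambdaX}, and Lemma~\ref{lem:ineqsVNC1} to conclude $\lambda(X_{m_1})^2<1/N^3$. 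By Theorem~\ref{thm:expandImpliesConnected} this forces $X_{m_1}$ to be a complete graph on $[N]=V(X^*)$. Now I would read off paths: for each vertex $v$ of $G$ and base vertex $x$, there is an edge of $X_{m_1}$ from $\langle x,0\rangle$ to $\langle v,0\rangle$; by Lemma~\ref{lem:LogSpaceAlgForEdges} the logspace procedure of Algorithm~\ref{alg:XmEdge} returns its endpoint, and unwinding that recursion produces a walk of length $N^{O(1)}$ in $X^*$ from $\langle x,0\rangle$ to $\langle v,0\rangle$, which contracts (discarding self-loops and the cycle-gadget edges) to a walk in $G$ from $x$ to $v$. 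The theory $\VL$ then loop-erases each such walk in logspace --- from each position jump to one past the last occurrence of the current vertex, which strictly advances the position and yields a simple walk --- and collects these over all $x,v$ into the string $Z$ witnessing $\PathConn(G)$; since $\VL$ proves exactly the logspace facts, it proves that $Z$ has the required property. The same construction, carried out without the connectivity assumption and restricted to the component of $x$, underlies the more general Theorem~\ref{thm:LeqSLinVL} and hence Corollary~\ref{coro:SubsetPathEquiv}.

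I expect the main obstacle to be this last step: propagating a single edge of $X_{m_1}$ all the way back down to an honest path of $G$, and verifying that every intermediate object --- the graphs $H_i$, $G_i$, $X_i$, the recursive edge traversal of Algorithm~\ref{alg:XmEdge}, the projection back to $G$, and the loop-erasure --- stays inside the logspace resources that $\VL$ can define and reason about, so that the existential content (a concrete path, not merely reachability) is witnessed constructively. This is precisely what the staged development of Section~\ref{sec:RVproofVL} is arranged to deliver --- in particular the uniform $\VL$-definability of the $H_i$, $G_i$, $X_i$ and of Algorithm~\ref{alg:XmEdge} --- so the remaining task is careful assembly of these pieces rather than overcoming a genuinely new difficulty.
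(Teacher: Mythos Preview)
Your proposal is correct and follows essentially the same approach as the paper: the easy direction in $\VNC^1$, then for $\SubsetConn(G)\limplies\PathConn(G)$ build $X^*$, establish its subset-connectedness and mixing-ratio bound via the Cheeger--Mihail route (Theorem~\ref{thm:connectedTriVNC1}), run the seven-step $\VL$ formalization of the Rozenman--Vadhan construction to get $X_{m_1}$ with an edge between every pair of vertices, and unwind each such edge through Algorithm~\ref{alg:XmEdge} to a walk in $X^*$ and then in~$G$. You supply more detail than the paper does (the explicit cut argument for $\SubsetConn(G)\Rightarrow\SubsetConn(X^*)$, and the loop-erasure step, which is not strictly needed since $\PathConn$ does not require simple paths), but the route is the same.
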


This is the same as Corollary~\ref{coro:SubsetPathEquiv},
but now restated as a theorem since it will be used for the proof
of Theorems \ref{thm:LequalSLmethodVL}
and~\ref{thm:LeqSLinVL}.

\begin{proof}
As remarked earlier, $\VL$~can easily prove that
$\PathConn(G)$ implies $\SubsetConn(G)$.  For the converse,
we argue informally in~$\VL$. Let $Y=G$~be an undirected
graph such that $\SubsetConn(Y)$. Form $X_1$ as a
16-regular, directed graph using the
construction of~$X^*$ used earlier.
Clearly $X_1$ is also subset-connected. By Theorem~\ref{thm:connectedBisVNC1},
$X_1$ has mixing ratio at most $1-1/(16^4 N^2)$.
Using the
constructions above, form~$X_{m_1}$ from~$X_1$.  By the arguments in
Section~\ref{sec:RVproofVL}, which are formalizable in~$\VL$,
$X_{m_1}$~is connected, with each pair of
vertices in~$X_{m_1}$ joined by at least one edge.
Furthermore, unwinding Algorithm~\ref{alg:XmEdge},
each edge~$\langle x, y\rangle$ in~$X_{m_1}$ corresponds to a directed path
of length~$2^{m_1}$ in~$X_1$. By the construction of~$X_1$ from~$Y$, this
path in~$X_1$ yields a path in~$Y$. Therefore $Y$ is path-connected.
\end{proof}

This proof established a stronger property:

\begin{lemma}\label{lem:SubsetConnImpliesEdges}
The theory $\VL$ proves the following.
Suppose $Y$ is an undirected graph such that $\SubsetConn(Y)$.
Let $X_1$ and~$X_{m_1}$ be formed
from~$Y$ as in the above constructions. Then, for every pair of vertices
$x$ and~$y$ in~$X$, there is an edge from $x$ to~$y$ in~$X_{m_1}$.
\end{lemma}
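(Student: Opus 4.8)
The plan is to observe that \Cref{lem:SubsetConnImpliesEdges} is exactly the intermediate conclusion reached in the course of proving \Cref{thm:PathConnIffSubsetConn}: there, path-connectedness of~$Y$ was deduced only \emph{after} first establishing that every two vertices of~$X_{m_1}$ are joined by an edge, and it is precisely this intermediate fact that we isolate here. So I would reproduce that chain of reasoning in $\VL$, stopping before the final translation back to paths in~$Y$.

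Concretely, arguing informally in~$\VL$: assume $\SubsetConn(Y)$. The first step of \Cref{sec:RVproofVL} turns $Y$ into the $16$-regular directed graph $X_1 = X^*$ on $N$ vertices, with $13$ self-loops at each vertex; since $Y$ is subset-connected, so is~$X_1$. Forming the undirected graph $H$ with adjacency matrix $M^\transpose M$ (where $M$ is the adjacency matrix of $X_1$), the graph $H$ is $256$-regular with $171 \ge 256/2$ self-loops at every vertex, so \Cref{thm:connectedTriVNC1} applies and yields that the mixing ratio of~$X_1$ is at most $1 - 1/(16^4 N^2)$; this is already provable in $\VNC^1$, hence in~$\VL$. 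Next I would invoke the second through fourth steps: $\VL$ defines the auxiliary graphs~$H_i$ (via Theorem~32 of~\cite{BKKK:Expanders} combined with \Cref{thm:connectedTriVNC1}, using the redefined $Q = 16^q$), the graphs~$G_i$, and the iterated derandomized square~$X_{m_1}$ determined by $X_{i+1} = X_i \circleS G_i$ through \Cref{alg:XmEdge}, and $\VL$ proves $X_{i+1} = X_i \circleS G_i$ together with the mixing-ratio bounds on the~$G_i$. The fifth and sixth steps then give, provably in~$\VL$, the $\VL$-formalized version of \Cref{claim:lambdaX}: combining the squared-norm form~(\ref{eq:RVexpandPropertySq}) of \Cref{thm:RVexpand}, \Cref{prop:BoundsOnF}, and the logarithm estimates of \Cref{lem:ineqsVNC1}, one obtains $\lambda(X_{m_1})^2 < 1/N^3$ by induction on~$i$. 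Finally, $X_{m_1}$ has $N$ vertices and mixing ratio small enough that \Cref{thm:expandImpliesConnected} (provable in $\VNC^1$) applies, so for every pair of vertices $x,y$ of $X_1 = X$ there is an edge from $x$ to~$y$ in~$X_{m_1}$, which is the claim.

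I expect essentially no new obstacle beyond what has already been surmounted in \Cref{sec:RVproofVL}; the only point requiring care—and the place I would double-check—is that the constants produced by the $\VNC^1$ construction of the~$H_i$ (the redefined $Q = 16^q$, the spectral-gap bound~$\delta$, and the least power~$k$ with $(1-\delta)^k \le 1/100$) remain compatible with the mixing-ratio induction of \Cref{claim:lambdaX}. This is exactly what \Cref{lem:ineqsVNC1}(a),(b) is arranged to guarantee, now stated with $16^4$ in place of the original constant, so that the base bound $\lambda(X_1) \le 1 - 1/(16^4 N^2)$ feeds correctly through $m_0 = 100\lceil\log N\rceil$ rounds and then through $\ell = 10 + \lceil\log(\lceil\log N\rceil)\rceil$ further rounds. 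Given that verification, the lemma follows directly by reading off the penultimate conclusion of the proof of \Cref{thm:PathConnIffSubsetConn}.
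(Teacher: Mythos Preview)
Your proposal is correct and matches the paper's approach: the paper simply remarks that the proof of \Cref{thm:PathConnIffSubsetConn} ``established a stronger property'' and states \Cref{lem:SubsetConnImpliesEdges} without a separate argument, exactly as you identify. Your write-up in fact cites \Cref{thm:connectedTriVNC1} (rather than \Cref{thm:connectedBisVNC1}) for the mixing-ratio bound on the directed graph~$X_1$, which is the appropriate reference.
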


We further improve on this as follows:

\begin{theorem}\label{thm:LequalSLmethodVL}
The theory $\VL$ proves the following.
Suppose $Y$ is an undirected graph, and $X_1$ and~$X_{m_1}$ are formed
from~$Y$ as in the above constructions. Then
for each pair $x,y$ of vertices of~$X_1$, there is a directed path
from $x$ to~$y$ in~$X_1$ if and only if there is an edge $\langle x,y\rangle$
from $x$ to~$y$ in~$X_{m_1}$.
\end{theorem}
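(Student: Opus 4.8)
I would prove the two implications separately; the backward direction is routine and the forward one carries all the content. For the backward direction (an edge of $X_{m_1}$ yields a directed path in $X_1$), I would simply unwind Algorithm~\ref{alg:XmEdge}: by the definition of derandomized squaring an edge of $X_i = X_{i-1}\circleS G_i$ from $v$ to $u$ exhibits $u$ as the endpoint of a walk of length two in $X_{i-1}$, so recursing on the edge index $w = \langle z,a_1,\dots,a_{m_1}\rangle$ shows that an edge $\langle x,y\rangle$ of $X_{m_1}$ witnesses a directed walk of length $2^{m_1-1}$ in $X_1$ from $x$ to $y$. Since $m_1 = 100\lceil\log N\rceil + O(\log\log N)$, this walk has polynomial length, and it is precisely the sequence of intermediate $X_1$-vertices produced by the logspace recursion of Algorithm~\ref{alg:XmEdge}; so $\VL$ $\Sigma^B_1$-defines it and then extracts a simple directed path by the usual ``jump to last occurrence'' logspace pass. (This is essentially the computation already used at the end of the proof of Theorem~\ref{thm:PathConnIffSubsetConn}, and it uses no connectedness.)

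\textbf{The forward direction, mathematically.} The natural picture is to restrict to the connected component $C$ of $x$ in $X_1$. Since $X^*$, hence $X_1 = (X^*)^q$, has a symmetric edge relation, $C$ is closed under both incoming and outgoing edges of $X_1$, so $X_1|_C$ is a connected, properly labeled, $Q$-regular directed graph with a self-loop at every vertex and with $|C| =: N_C \le N$; moreover $C$ is also a component of $X^*$. Because every step of Algorithm~\ref{alg:XmEdge} from a vertex of $C$ stays in $C$, the construction localizes: $X_{m_1}|_C$ equals the graph built by running $C_{i+1} = C_i\circleS G_i$ with the \emph{same} auxiliary graphs $G_i$. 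Applying Theorem~\ref{thm:connectedTriVNC1} to the $16$-regular graph $X^*|_C$ (then $q$-fold powering) gives $\lambda(X_1|_C)\le 1 - 1/(16^4 N_C^2)$, and the inductive analysis of Claim~\ref{claim:lambdaX} (using Theorem~\ref{thm:RVexpand} at each step) applied to $C_1,\dots,C_{m_1}$ yields $\lambda(X_{m_1}|_C)^2 < 1/N_C^3$. This survives the fact that $m_0$ and $\ell$ were tuned to $N \ge N_C$ and not to $N_C$: extra derandomized-squaring steps only decrease the mixing ratio (Proposition~\ref{prop:BoundsOnF}), so $m_0 \ge 100\lceil\log N_C\rceil$ and $\ell \ge 10 + \lceil\log\lceil\log N_C\rceil\rceil$ suffice, with Lemma~\ref{lem:ineqsVNC1} invoked with parameter $N_C$. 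Finally Theorem~\ref{thm:expandImpliesConnected} applied to $X_{m_1}|_C$ supplies an edge between every pair of vertices of $C$, in particular from $x$ to $y$.

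\textbf{The main obstacle, and how to route around it.} The difficulty is that $\VL$ cannot name $C$: computing $C$ from $X_1$ (equivalently from $Y$) is essentially USTCON, an $\SL$ predicate that is not available at this point in the proof. I would therefore recast the forward direction as a contradiction argument that never forms $C$ directly but instead \emph{receives} the relevant component as a witness extracted from the hypothesized path. Assume there is a path $P$ from $x$ to $y$ in $X_1$ but no edge $\langle x,y\rangle$ in $X_{m_1}$, and let $M$ be the adjacency matrix of $X_{m_1}$. The vector $\vec u = e_x - \vecone/N$ satisfies $\vec u \perp \vecone$, and since $(M e_x)_y = 0$ we get $(M\vec u)_y = -1/N$, so $\|M\vec u\|^2 \ge 1/N > (1/N^{1.5})^2\,\|\vec u\|^2$; hence $\vec u$ witnesses $\lnot\MixRat(X_{m_1},1/N^{1.5})$. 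Feeding $\vec u$ through the witnessed forms of Theorem~\ref{thm:RVexpand}(b) (descending through $X_{m_1},X_{m_1-1},\dots,X_1$), then graph powering, then Theorem~\ref{thm:connectedTriVNC1}, and finally the Cheeger--Mihail Lemma~\ref{thm:CheegerMihailVNCone}(b$'$) produces a nonempty proper set $U \subsetneq V(X_1)$ with $E_{X_1}(U,\overline U) = \emptyset$; because the initial vector is concentrated at $x$, one arranges that the threshold cut output at the last step contains $x$. Having no crossing edges, $U$ is a union of $X_1$-components, and since it contains $x$ it contains all of $x$'s component, hence $y$ --- so $P$ does not cross $U$ and the argument can be repeated inside $X_1|_U$, with $U$ strictly shrinking; in fact each extracted $U$ can be taken to isolate a single component, so one round already yields $U = C$. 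Then $X_1|_U$ is connected, and since the construction localizes, applying Lemma~\ref{lem:SubsetConnImpliesEdges} to $Y|_U$ says every pair of vertices of $U$, in particular $x$ and $y$, is joined by an edge of $X_{m_1}|_U \subseteq X_{m_1}$ --- contradicting the assumption.

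\textbf{Where the work will go.} I expect the delicate points in the full proof to be: (i) verifying that the extracted $U$ genuinely contains $x$, which needs the witness produced by the chain to remain ``concentrated near $x$'' as it passes through the derandomized-squaring and powering layers; (ii) checking that $U$ can be taken to isolate a component in a bounded number of rounds that $\VL$ can iterate over (ideally one); and (iii) proving in $\VL$ the localization identities that identify $X_{m_1}|_U$ with the analogous construction applied to $Y|_U$, including that $X^*|_U$ still has enough self-loops in $(M_{X^*|_U})^{\transpose} M_{X^*|_U}$ to invoke Theorem~\ref{thm:connectedTriVNC1}. The rest is bookkeeping with rational-entry vectors and squares of norms, exactly as in the earlier steps of Section~\ref{sec:RVproofVL}.
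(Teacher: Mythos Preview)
Your backward direction and your diagnosis of the central obstacle (that $\VL$ cannot form the component $C$ of $x$) are both correct and match the paper. Your forward-direction plan, however, takes a route quite different from the paper's and carries a genuine gap at the point you yourself flag as (ii). Unwinding the witness chain through Theorem~\ref{thm:RVexpand}(b), powering, Theorem~\ref{thm:connectedTriVNC1}, and Cheeger--Mihail does yield a nonempty proper $U$ with $E_{X_1}(U,\overline U)=\emptyset$ (the integrality argument you have in mind works: edge expansion $<2/(dn)$ with $|U|\le n/2$ forces $|E(U,\overline U)|<1$), and by replacing $U$ with $\overline U$ if necessary you may assume $x\in U$ and hence $y\in U$. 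But $U$ is then only a \emph{union} of components, and your subsequent appeal to Lemma~\ref{lem:SubsetConnImpliesEdges} requires $X_1|_U$ to be subset-connected, i.e., $U$ to be a single component. Your proposed fix is to iterate, shrinking $U$ each round; however each round is already a full logspace computation on the current $U$, and you may need up to $N$ rounds. Iterating a logspace function $N$ times is not in general logspace, and your parenthetical claim that ``one round already yields $U=C$'' is unsupported: the Mihail threshold cut depends on the entries of a vector that has been mangled by many adjacency-matrix multiplications, and there is no reason it should separate off exactly one component.

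The paper sidesteps this entirely by choosing a different $U$ that is $\VL$-definable by inspection: take $U$ to be the set of \emph{all} vertices visited by the intermediate $X_1$-walks traced out by Algorithm~\ref{alg:XmEdge} starting from $x$, ranging over all edge indices $w$ of $X_{m_1}$ and all intermediate steps $j\le q\,2^{m_1}$. Membership in this $U$ is a single logspace test per vertex, so $U$ is available in $\VL$ without any iteration or witness-unwinding. If no edge leaves $U$, then $X_1|_U$ is a $Q$-regular, properly labeled, connected, self-looped directed graph on $\le N$ vertices, and since the auxiliary graphs $G_i$ were chosen using the parameter $N$ (and the traversal sequences depend only on the $G_i$, the degree, and $N$, not on $X_1$ itself), the same edge indices form a universal traversal for $X_1|_U$; Lemma~\ref{lem:SubsetConnImpliesEdges} then gives the edge $\langle x,y\rangle$. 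If some edge does leave $U$, the paper derives a contradiction by a graph-surgery trick: modify $X_1$ outside $U$ to a graph $X_1'$ that agrees with $X_1$ on all edges internal to $U$ but is globally connected (reroute the crossing and external edges into a single long path through $\overline U$ and patch up to restore regularity). Because the traversal sequences are universal and $X_1'$ is connected, starting from $x$ they must visit some vertex of $\overline U$; but the walks from $x$ never see a changed edge (they stay in $U$, where $X_1=X_1'$), so they visit exactly the same vertices in $X_1'$ as in $X_1$, namely $U$ --- contradiction. This universality-plus-surgery argument is the key idea you are missing; it replaces your uncontrolled iteration with a single $\VL$-definable object and a one-shot contradiction.
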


The proof of the theorem will depend on the fact that Algorithm~\ref{alg:XmEdge}
implements a universal traversal procedure
(see again Rozenman-Vadhan~\cite{RozenmanVadhan:DerandSquaring}). What this means is
that, given a vertex~$x$ in a degree~$D$ directed graph~$X$
and an edge index $w\in X_m$,
Algorithm~\ref{alg:XmEdge} finds the vertex reached on the $w$-th outgoing
edge from~$x$ in~$X_m$, in effect, by calculating a sequence of edge
indices
\begin{equation}\label{eq:AlgIndices}
      i_1, \, i_2, \, \ldots, \, i_{q 2^{m_1}}
\end{equation}
in~$X$ (so $i_j < Q$)
and traversing the path in~$X$ containing the vertices $x_0,x_1,x_2,\ldots, x_{q 2^{m_1}}$
where $x_0 = x$ and each $x_{j+1} = x_j[i_j]$.  The extra factor of~$q$ arises
because $X_1 = X^q$, so that each edge in~$X_1$ is a sequence of $q$ many edges
in~$X$.

It is important that the traversal sequence of
edge indices~(\ref{eq:AlgIndices}) depends on $w$ and the graphs~$G_i$ (and thereby
depends on the degree~$D$ and size~$N$ of~$X$); however it
does {\em not} depend on any properties of~$X$ other than its
degree and size.  In fact, the graphs~$G_i$ have only a minimal
dependence on the number~$N$ of vertices in~$X$; in fact, the value of~$N$
affects the construction of the~$G_i$'s only
in that is was used to pick the values $m_0$ and~$m_1$. All that is
required that the number of vertices of the (connected)
graph~$X$ is {\em at most}~$N$.

\begin{proof}[Proof of Theorem~\ref{thm:LequalSLmethodVL}]
From the above, it is clear that if there is an edge from $x$ to~$y$
in~$X_{m_1}$, then there is a path from $x$ to~$y$ in~$X$, and this is
provable in~$\VL$.
So suppose there is a path from $x$ to~$y$ in~$X$,
we must show that there is an edge from $x$ to~$y$ in~$X_m$, using
arguments that can be formalized in~$\VL$.

Let $U$ be the set of vertices~$z$ that are reachable in~$X$
using a path that is implicitly traversed by an edge~$w$ of~$X_{m_1}$.
Specifically, for each edge index~$w$, use Algorithm~\ref{alg:XmEdge}
to form the edge indices~$i_j$ in~$X$ and
define $x_i[i_j]$ as above, see~(\ref{eq:AlgIndices}).  Then $U$ is the set
of vertices of the form $x_j[i_j]$ for some edge index~$w$ of~$X_{m_1}$ and
some $j\le q 2^{m_1}$. It is possible that $U$ is equal to all of~$X$. If so,
since $U$ is clearly path-connected, it is also subset-connected. Thus, in this
case, Theorem~\ref{thm:LequalSLmethodVL} follows
from Lemma~\ref{lem:SubsetConnImpliesEdges}.

So, suppose instead that $U$ is not all of~$X$. The argument now splits
into two cases depending on whether there is edge from $U$ to~$\overline U$,
that is, depending on whether there are $z\in U$ and $z'\notin U$ joined by an edge.
If there is no such edge, $U$~induces a $4$-regular subgraph of~$X$, which
is path-connected and hence subset-connected. Then $|U|<N$, so the
edge indices $w= \langle z, a_1,\ldots, a_{m_1}\rangle$
form a complete traversal sequence for~$U$, As a result, for every pair of
vertices $z,z' \in U$ there is an edge in $X_{m_1}$ from $z$ to~$z'$. In particular,
if there is a path from $x$ to~$y$ in~$X$, then $y\in U$ and hence
there is an edge in~$X_{m_1}$ from $x$ to~$y$.  (This subcase is just
a generalization of the case $U=X$.)

The remaining case to consider is where
there is an edge from $U$ to~$\overline U$.
We prove that this leads to a contradiction and so cannot happen.
Modify the graph~$X$
to form a new graph~$X^\prime$ that satisfies
\begin{enumerate}
\setlength{\itemsep}{0pt}
\item[i.] $X^\prime$ has the same vertices of~$X$, and $X$~is 4-regular.
\item[ii.] Any edge in~$X$ joining two vertices in~$X$ is also in~$X^\prime$.
Specifically, if $z, z' \in U$ and the $i$-outgoing edge of~$z$ is the
$i'$-th incoming edge of~$z'$, then the same holds in~$X'$.
\item[iii.] $X^\prime$ is path-connected and hence subset-connected
\end{enumerate}
Form $X^\prime$ from~$X$ as follows. First remove all edges from~$X$ that
touch a vertex not in~$U$. This removes all edges involving vertices in~$\overline U$.
It also removes at least one outgoing edge from some vertex~$u_1$ in~$U$
and (by 4-regularity) at least one incoming edge from some vertex~$u_2$ in~$U$.
Then, introduce a sequence of edges that form a directed path from $u_1$
to~$u_2$ through all the vertices of~$\overline U$,
say by taking the vertices in~$\overline U$ in ascending order
and using the first incoming and outgoing edges of~$U$.  This makes
$X'$ path-connected. Third, add back in edges to
make $X^\prime$ 4-regular; e.g., take the vertices in~$X$ and their
missing incoming-edges and outgoing edges in sequential (ascending) order,
and join them up, sequentially. Note that the 4-regularity of~$X$ implies
that there are the same number of missing incoming edges as missing outgoing edges.

Now the universal traversal sequences (\ref{eq:AlgIndices}) based in
edge indices~$w$ should be universal
for both $X$ and~$X'$. Note, however, that each such traversal sequence has
to lead to the same vertex in~$X^\prime$ as in~$X$. This is because
the traversal never leaves~$U$, and this part of~$X^\prime$ is
the same as in~$X$.  Now we obtain a contradiction: Let $z\in \overline U$ have
an edge from~$u_1$ in~$X_1$.
It is not reachable by any traversal~(\ref{eq:AlgIndices}) in~$X$, but it must
reachable in~$X^\prime$ since $X^\prime$ is (subset-)connected.
\end{proof}

This completes the proof that $\L=\SL$ in~$\VL$.  The
logspace algorithm to determine whether there
is a path from $u$ to~$v$ in an undirected graph~$Y$,
acts by forming $X$ and~$X_m$, choosing (arbitrarily) a
pair of vertices $x,y$ in~$X$ that correspond (respectively) to
$u$ and~$v$, and checking whether one of the polynomially many
outgoing edges of~$x$ in~$X_{m-1}$ connects $x$ to~$y$.

\section{Cheeger and Sedrakyan inequalities in \texorpdfstring{$\VNC^1$}{VNC1}}\label{sec:SedrakyanCheeger}

The section proves that Sedrakyan's lemma (Lemma~\ref{lem:Sedrakyan})
and that part~(a) the Cheeger inequality (namely, the $2 \epsilon ~\ge~ \lambda$ part)
is provable in
$\VNC^1$, and hence in~$\VL$.  To formalize and prove these in
$\VNC^1$, vectors should be encoded as rational numbers with a
common denominator. 

The proof
of Sedrakyan's lemma based on Cauchy-Schwarz that was
given earlier used square roots, which means
that $\VNC^1$ cannot use this proof method using only rational
numbers.  Instead, we present a proof (based on one of the standard
proofs of the Cauchy-Schwarz inequality) that can be formalized
via reasoning only about rational numbers.

Our $\VNC^1$ proof of the first half of Cheeger inequality will
follow the first one of the four proofs exposited in
Chung~\cite{Chung:Cheeger}.  We have also used the
lecture notes of Sauerwald-Sub~\cite{SauerwaldSun:Cheeger}
to fill in
some of the missing details in~\cite{Chung:Cheeger}.
For simplicity, we will give the proof only for
the case of $d$-regular graphs. However, the proof
in~\cite{Chung:Cheeger} covers the more general case of part~(a)
of Cheeger's inequality, and is
also formalizable in~$\VNC^1$.

\subsection{Sedrakyan's lemma in \texorpdfstring{$\VNC^1$}{VNC1}}\label{sec:Sedrakyan}

\begin{theorem}\label{thm:SedrakyanVNC1}
$\VNC^1$~proves Sedrakyan's Lemma for
$n$-vectors $\vec u$ and $\vec v$ of rational numbers with
a common denominator provided each component $v_i$ of~$\vec v$ is positive.
\end{theorem}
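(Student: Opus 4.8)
The plan is to replace the Cauchy--Schwarz argument of Lemma~\ref{lem:Sedrakyan}, which introduces the irrational quantities $u_i/\sqrt{v_i}$, by the ``completing the square'' proof, but evaluated at a specific rational point rather than phrased through a discriminant. Put $S_v=\sum_i v_i$ and $S_u=\sum_i u_i$; since each $v_i$ is a positive rational, $S_v$ is a positive rational, and both $S_u$ and $S_v$ are iterated sums of the kind $\VNC^1$ can form and reason about. For each~$i$ the rational number
\[
\frac{(v_i S_u - u_i S_v)^2}{v_i} ~=~ v_i\,S_u^2 ~-~ 2 u_i S_u S_v ~+~ \frac{u_i^2}{v_i}\,S_v^2
\]
is nonnegative, being a square divided by the positive~$v_i$. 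Summing this identity over~$i$ and using $\sum_i v_i = S_v$ and $\sum_i u_i = S_u$ gives
\[
0 ~\le~ \sum_i \frac{(v_i S_u - u_i S_v)^2}{v_i} ~=~ S_v^{\,2}\Bigl(\sum_i \tfrac{u_i^2}{v_i}\Bigr) - S_u^2\,S_v ,
\]
and dividing by $S_v>0$ yields $\sum_i u_i^2/v_i \ge S_u^2/S_v$, i.e.\ Sedrakyan's inequality.

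I would carry this out in $\VNC^1$ using only facts already on the list in Section~\ref{sec:FormalizeInVL}: rational arithmetic, the formation of vectors and of summations of rationals with a common denominator, and the invariance of summations under reordering together with distributivity over the scalars $S_u$, $S_v$, $S_v^2$; the only slightly nonelementary ingredient, that a sum of nonnegative rationals is nonnegative, is a basic monotonicity property of summations provable in $\VNC^1$ (after clearing the common denominator it is just that a sum of nonnegative integers is nonnegative). Thus the displayed computation is, for $\VNC^1$, an identity between formal summations verified term by term, and the conclusion follows by one division by the positive rational~$S_v$. For the left-hand sum $\sum_i u_i^2/v_i$ to name a rational that $\VNC^1$ can represent, the statement is read with the (harmless) proviso --- satisfied in every application here, e.g.\ in the proof of Theorem~\ref{thm:normM}, where each term $u_i^2/v_i$ equals $M_{ij}v_j^2$ --- that the numbers $u_i^2/v_i$ also form a vector of rationals with a common denominator.

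The main obstacle is thus not mathematical depth but bookkeeping: one must check that every quantity appearing --- the $v_i$, the $u_i$, $S_u$, $S_v$, and the per-term expressions $(v_i S_u - u_i S_v)^2/v_i$ --- remains a rational with polynomially bounded numerator and denominator (so that it is representable by a $\VNC^1$ object), and that the regrouping of $\sum_i \bigl(v_i S_u^2 - 2 u_i S_u S_v + (u_i^2/v_i)S_v^2\bigr)$ into $S_v^2\sum_i u_i^2/v_i - S_u^2 S_v$ is justified by the summation axioms rather than by any unbounded induction. Both points are routine given the capabilities of $\VNC^1$ recalled earlier, so the argument goes through, and hence the same statement holds in $\VL$.
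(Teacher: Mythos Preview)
Your proof is correct and takes a genuinely different route from the paper's. The paper instead clears the denominator $\sum_i v_i$, expands $(\sum_i u_i)^2$ into diagonal and off-diagonal terms, and reduces the inequality to
\[
0 ~\le~ \sum_{i<j} (u_i v_j - u_j v_i)^2 \cdot \frac{1}{v_i v_j},
\]
i.e.\ the Lagrange-identity form of Cauchy--Schwarz, with one square per pair of indices. Your argument instead evaluates the quadratic $t\mapsto \sum_i (v_i t - u_i)^2/v_i$ at its rational minimizer $t=S_u/S_v$, packaging everything into a single sum of $n$ nonnegative terms $(v_i S_u - u_i S_v)^2/v_i$. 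Both avoid square roots and stay within rational arithmetic plus summation manipulations, so both are $\VNC^1$-formalizable for the same reasons; your version is arguably tidier (a single sum rather than a double sum over pairs), while the paper's makes the equality case $u_i v_j = u_j v_i$ more visible. The common-denominator caveat you raise about $u_i^2/v_i$ applies equally to the paper's proof, which also manipulates $u_i^2/v_i$ and $1/(v_i v_j)$ as summands, and is handled in the applications exactly as you say.
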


\begin{proof}
The following argument is formalizable in~$\VNC^1$, where $i$ and~$j$ range over
indices for members of $\vec u$ and $\vec v$.

\begin{eqnarray*}
\lefteqn{\sum\nolimits_i \frac {u_i^2}{v_i} ~\ge~ \frac {\bigl(\sum\nolimits_i u_i\bigr)^2}{\sum_i v_i}}
\\ &\Leftrightarrow&
\bigl(\sum\nolimits_i u_i\bigr)^2 ~\le~ \sum\nolimits_i \frac {u_i^2}{v_i} \cdot \sum\nolimits_i v_i
\\ &\Leftrightarrow&
\sum\nolimits_{i<j} 2u_i u_j + \sum\nolimits_i u_i^2 ~\le~
     \sum\nolimits_i \sum\nolimits_j \frac{u_i^2}{v_i} v_j
\\ &\Leftrightarrow&
\sum\nolimits_{i<j} 2u_i u_j + \sum\nolimits_i u_i^2 ~\le~
   \sum\nolimits_i u_i^2 + \sum \nolimits_{i\not=j} \frac{u_i^2}{v_i} v_j
\\ &\Leftrightarrow&
\sum\nolimits_{i<j} 2u_i u_j ~\le~
   \sum\nolimits_{i<j}\Bigl( \frac{u_i^2}{v_i} v_j + \frac{u_j^2}{v_j} v_i \bigr)
\\ &\Leftrightarrow&
0 ~\le~ \sum\nolimits_{i<j} \bigl( u_i^2 \frac{v_j}{v_i} + u_j^2 \frac{v_i}{v_j} -2 u_i u_j \bigr)
\\ &\Leftrightarrow&
0 ~\le~ \sum\nolimits_{i<j} \bigl( u_i^2 v_j^2 + u_j^2 v_i^2 - 2 u_i u_j v_i v_j \bigr) \frac 1 {v_i v_j}
\\ &\Leftrightarrow&
0 ~\le~ \sum\nolimits_{i<j} ( u_i v_j - u_j v_i )^2 \frac 1 {v_i v_j}.
\end{eqnarray*}
The last inequality is obviously true as squares are nonnegative,
and since the $v_i$'s are positive.
\end{proof}

\subsection{Half of the Cheeger inequality in \texorpdfstring{$\VNC^1$}{VNC1}}\label{sec:Cheeger}

\begin{theorem}
$\VNC^1$ can prove part (a) of the Cheeger inequality in the form stated
in Theorem~\ref{thm:CheegerVNCone}.
\end{theorem}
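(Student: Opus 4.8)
The plan is to make Theorem~\ref{thm:CheegerVNCone}(a) constructive by producing, from any set $U$ witnessing $\lnot\EdgeExp(U,G,\alpha)$, an explicit \emph{integer}-valued test vector $\vec v$ that witnesses $\lnot\MixRat(\vec v,M,1-2\alpha)$; this is essentially the first of Chung's four proofs, rephrased to use only squared norms. I argue in the range $0<\alpha\le 1/2$, which is the only case in which part~(a) is informative (one wants $1-2\alpha\ge 0$). Write $n=|V(G)|$ and $k=|U|$; from $\lnot\EdgeExp(U,G,\alpha)$ we get $0<k\le n/2$ (hence $0<k<n$, $n\ge 2$) and $|E(U,\overline{U})|<\alpha d k$. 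Since $\VNC^1$ can count the elements of a coded set, $k$ is a first-order object definable from $U$, and the vector $\vec v$ with $v_i=n-k$ for $i\in U$ and $v_i=-k$ for $i\notin U$ is definable as a string with all entries integers. A one-line summation gives $\sum_i v_i=k(n-k)+(n-k)(-k)=0$, so $\vec v\perp\vecone$, and another gives $\|\vec v\|^2=k(n-k)^2+(n-k)k^2=k(n-k)n>0$.

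Next I would have $\VNC^1$ verify the Dirichlet-form identity. Writing $m_{ij}$ for the number of edges between $i$ and $j$ in $G$ (a self-loop at $i$ contributing $1$ to $m_{ii}$), $d$-regularity reads $\sum_j m_{ij}=d$ for every $i$ and symmetry reads $m_{ij}=m_{ji}$. Using only splitting, reindexing and reordering of summations (facts (i)--(viii) of the enumeration recalled in \cref{sec:FormalizeInVL}), $\VNC^1$ proves
\[
\|\vec v\|^2-\langle\vec v,M\vec v\rangle ~=~ \tfrac1d\sum_{i<j}m_{ij}(v_i-v_j)^2 ,
\]
the self-loop terms dropping out because $(v_i-v_i)^2=0$. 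For our $\vec v$ the summand $m_{ij}(v_i-v_j)^2$ equals $n^2 m_{ij}$ when exactly one of $i,j$ lies in $U$, and $0$ otherwise; since $\sum_{i\in U,\,j\notin U}m_{ij}=|E(U,\overline{U})|$, this gives $\|\vec v\|^2-\langle\vec v,M\vec v\rangle=\tfrac{n^2}{d}\,|E(U,\overline{U})|$.

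The rest is exact rational arithmetic, all available to $\VNC^1$. From $|E(U,\overline{U})|<\alpha d k$ and $n\le 2(n-k)$ (which is $k\le n/2$) we get $|E(U,\overline{U})|\,n<\alpha d k n\le 2\alpha\,d k(n-k)$, hence $\|\vec v\|^2-\langle\vec v,M\vec v\rangle=\tfrac{n^2}{d}|E(U,\overline{U})|<2\alpha\,k(n-k)n=2\alpha\|\vec v\|^2$, so $\langle\vec v,M\vec v\rangle>(1-2\alpha)\|\vec v\|^2\ge 0$. Applying the squared Cauchy--Schwarz inequality (provable in $\VNC^1$, \cref{sec:FormalizeInVL}) to $\vec v$ and $M\vec v$ gives $\langle\vec v,M\vec v\rangle^2\le\|\vec v\|^2\|M\vec v\|^2$; combined with $0\le(1-2\alpha)\|\vec v\|^2<\langle\vec v,M\vec v\rangle$ this yields $\|\vec v\|^2\|M\vec v\|^2\ge\langle\vec v,M\vec v\rangle^2>(1-2\alpha)^2\|\vec v\|^4$, whence $\|M\vec v\|^2>(1-2\alpha)^2\|\vec v\|^2$. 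Together with $\vec v\perp\vecone$ this is exactly $\lnot\MixRat(\vec v,M,1-2\alpha)$, so the implication of part~(a) is proved in $\VNC^1$ (hence in $\VL$).

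The only step that takes genuine care is the Dirichlet-form identity, where one must track multi-edges and self-loops in the normalized adjacency matrix and invoke symmetry of the edge relation, all within the summation calculus $\VNC^1$ supports; the construction of $\vec v$, the counting of $k$, and the closing arithmetic are routine, and since the whole argument mentions only $\|\cdot\|^2$ (never $\|\cdot\|$ itself) no square roots enter, so the reasoning is a $\Sigma^{1,b}_0$ computation from $U$ plus bounded reasoning --- exactly what a $\VNC^1$ proof allows. The more general non-regular case of part~(a) treated in Chung~\cite{Chung:Cheeger} goes through in the same way, replacing $d k$ by the volume $\sum_{i\in U}\deg(i)$ and weighting $\vec v$ by the degree sequence.
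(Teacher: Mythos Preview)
Your proposal is correct and follows essentially the same route as the paper's proof: both take the centered characteristic vector of~$U$ as the test vector, invoke the Dirichlet-form identity $\|\vec v\|^2-\langle\vec v,M\vec v\rangle=\tfrac1d\sum_{\{i,j\}\in E}(v_i-v_j)^2$, use $|U|\le n/2$ to bound this by $2\alpha\|\vec v\|^2$, and finish with Cauchy--Schwarz (the paper phrases the last step as ``projection of $M\vec v$ onto~$\vec v$'', which is the same inequality). Two small points where your write-up is actually tidier than the paper's: you scale the vector by~$n$ so that all entries are integers (avoiding the common denominator~$N$ the paper carries), and you make explicit the harmless hypothesis $\alpha\le 1/2$ needed so that $\langle\vec v,M\vec v\rangle>(1-2\alpha)\|\vec v\|^2\ge 0$ can be squared without reversing the inequality.
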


Recall that part~(a) formalizes the inequality $2 \epsilon ~\ge~ \lambda$ constructively.

\begin{proof}
We prove part~(a) of Theorem~\ref{thm:CheegerVNCone}, arguing informally
using methods that can be formalized in~$\VNC^1$.
Assume that $U$ is a set of vertices in~$G$ that shows the edge expansion
as defined by~(\ref{eq:edgeExpansionDef}) in Definition~\ref{def:edgeExpansion}
is $< \alpha$.  Let $G$ have vertices~$[N]$ and edges~$E$; let $i \sim j$ mean that
$\{i,j\}$ is in the multiset of edges~$E$.
Let $\chi_U$ be the $N$-vector
such that its $i$-th component is
$(\chi_U)_i = 1$ if $i\in U$ and $(\chi_U)_i = 0$ for $i\notin U$.
Let $\vec v$ be the component of~$\vec u$ that is perpendicular to~$\vecone$;
namely, $\vec v = \chi_U - (\langle \chi_U, \vecone\rangle / N^2)\vecone$.
More explicitly,
\[
\vec v_i ~=~ \left\{ \begin{array}{ll}
1 - \frac{|U|}{N} \quad & \hbox{if $i \in U$} \\[0.3ex]
- \frac{|U|}{N} \quad & \hbox{if $i \notin U$.}
\end{array} \right.
\]
Our goal is to prove $\|M \vec v\|^2 \ge (1-2\alpha)^2 \|\vec v\|^2$ and
thereby prove part~(a).
Using the assumption of edge expansion assumption, we get
\begin{equation}\label{eq:edgeExpAssume}
\sum_{i\sim j} ( (v)_i - (v)_j )^2 ~=~
    \sum_{i\sim j} ( (\chi_U)_i - (\chi_U)_j )^2 ~<~ \alpha \cdot d \cdot |U|.
\end{equation}
We have\footnote{We
work with $\|\vec v\|^2$
instead of $\|\vec v\|$ since $\VNC^1$ needs to talk about
square norms, not norms.}
\begin{align*}
\|\vec v\|^2 &= |U|\Bigl(1-\frac{|U|}{N}\Bigr)^2 + (N-|U|)\Bigl(\frac{U}{N}\Bigr)^2 \\ 
&=
     |U|\Bigl( 1 - \frac{|U|}{N}\Bigr) .
\end{align*}
Consequently, $\|v\|^2 \ge |U|/2$ since w.lo.g., $|U| \le N/2$.

\smallskip

The following holds
for an arbitrary vector~$\vec v$:
\begin{lemma}
Let $M$ be the (normalized) adjacency matrix for a $d$-regular undirected graph~$G$.
Then
\[
\sum_{i\sim j} ( (v)_i - (v)_j )^2 ~=~
   d \cdot ( \|\vec v\|^2 - \langle \vec v, M \vec v \rangle ).
\]
\end{lemma}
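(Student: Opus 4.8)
The plan is to prove the identity by expanding both sides in terms of the \emph{unnormalized} adjacency matrix $A = d\cdot M$, whose $(i,j)$ entry $A_{i,j}$ is the number of edges joining vertex~$j$ to vertex~$i$ (so $A_{i,i}$ counts the self-loops at~$i$). The only properties of~$A$ used are its symmetry $A_{i,j} = A_{j,i}$, which holds because $G$ is undirected, and the degree identity $\sum_j A_{i,j} = d$ for every~$i$, which is just the $d$-regularity of~$G$; both are facts that $\VNC^1$ establishes by counting over the edge multiset. Here the sum $\sum_{i\sim j}$ is understood as the sum over the edge multiset~$E$, each edge contributing one term (self-loops contribute a term equal to zero).

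First I would rewrite the right-hand side. Since $\sum_j A_{i,j} = d$ for each~$i$, we get $d\,\|\vec v\|^2 = \sum_i (v)_i^2\sum_j A_{i,j} = \sum_{i,j} A_{i,j}(v)_i^2$, and by the definition of the matrix-vector product $d\,\langle \vec v, M\vec v\rangle = \sum_{i,j} A_{i,j}(v)_i(v)_j$. Subtracting, $d\,(\|\vec v\|^2 - \langle \vec v, M\vec v\rangle) = \sum_{i,j} A_{i,j}\big((v)_i^2 - (v)_i(v)_j\big)$.

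Next I would symmetrize. Using $A_{i,j} = A_{j,i}$ and relabeling the summation indices (an operation on finite sums of rationals with a common denominator that $\VNC^1$ supports), $\sum_{i,j} A_{i,j}\big((v)_i^2 - (v)_i(v)_j\big) = \frac{1}{2}\sum_{i,j} A_{i,j}\big((v)_i^2 - 2(v)_i(v)_j + (v)_j^2\big) = \frac{1}{2}\sum_{i,j} A_{i,j}\big((v)_i - (v)_j\big)^2$. In this last double sum the diagonal terms ($i=j$) vanish, while each unordered pair $\{i,j\}$ with $i\neq j$ occurs twice (as $(i,j)$ and as $(j,i)$), each time with coefficient $A_{i,j}$; hence $\frac{1}{2}\sum_{i,j} A_{i,j}\big((v)_i - (v)_j\big)^2$ equals the sum over the edge multiset~$E$ of $\big((v)_i - (v)_j\big)^2$, which is exactly $\sum_{i\sim j}\big((v)_i - (v)_j\big)^2$. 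Dividing the resulting chain of equalities by~$d$ gives the claimed identity.

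Since the entire argument is a sequence of rearrangements of finite summations together with the elementary facts $\sum_j A_{i,j} = d$ and $A_{i,j} = A_{j,i}$, it formalizes in $\VNC^1$. I do not anticipate a genuine obstacle; the only point needing care is the bookkeeping of the edge multiset and self-loops (a self-loop at~$i$ contributes to the degree count but contributes a zero term to the Dirichlet-type sum on the left), together with checking that the index relabeling used in the symmetrization step is among the summation manipulations that $\VNC^1$ is known to carry out.
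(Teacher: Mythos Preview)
Your proposal is correct and takes essentially the same approach as the paper: both proofs are direct algebraic expansions relying on the degree identity $\sum_j A_{i,j}=d$ and the definition of $M\vec v$. The paper simply runs the computation in the opposite direction, starting from $\sum_{i\sim j}(v_i-v_j)^2$, expanding the square, and collapsing to $d\|\vec v\|^2 - d\langle \vec v, M\vec v\rangle$; your version starts from the right-hand side, passes through the double sum $\sum_{i,j}A_{i,j}(v_i-v_j)^2$, and is a bit more explicit about self-loops and the symmetrization step.
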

\begin{proof} (of the lemma).
The lemma is proved by:
\begin{align*}
\sum_{i\sim j} ( (v)_i - (v)_j )^2
   &=~
       \sum_{i\sim j} ( v_i^2 + v_j^2 - 2 v_i v_j )
   ~=~ \sum_i d \cdot v_i^2 - 2 \sum_{i\sim j} v_i v_j \\
&=~ d \cdot \|\vec v\|^2 - \sum_i \bigl(v_i \cdot \sum_{j : i\sim j} v_j \bigr) \\
&=~ d \cdot \|\vec v\|^2 - \sum_i \bigl(v_i \cdot d \cdot (M \vec v)_i \bigr) \\
&=~ d \cdot \|\vec v\|^2 - d \cdot \langle \vec v, M \vec v \rangle . \qedhere
\end{align*}
\end{proof}
The lemma and (\ref{eq:edgeExpAssume}) imply
$\langle \vec v, M \vec v \rangle \le \|\vec v\|^2 - \alpha \cdot |U|$,
whence
\begin{align*}
    \frac{\langle \vec v, M \vec v \rangle}{\|\vec v\|^2} &\ge 1 - \frac{U}{\|\vec v\|}^2 \\
   &\ge 1 - 2 \alpha,
\end{align*}
since $|U|/2 \le \|\vec v\|^2$.  Since $\langle \vec v, M \vec v \rangle \vec v/ \|\vec v\|^2$
is the projection of $M\vec v$ onto~$\vec v$, we have
\begin{align*}
\|M\vec v\|^2 &\ge \frac{ \| ( \langle \vec v, M \vec v \rangle \vec v) \|^2} { \|\vec v\|^4 } \\
   &= \frac{ \langle \vec v, M \vec v \rangle^2 } {\|\vec v|^2} .
\end{align*}
Hence $\|M\vec v\|^2 \ge (1-2\alpha)^2 \|\vec v\|^2$. This establishes part~(a)
of Theorem~\ref{thm:CheegerVNCone} since the second-order~$V$ can be the set encoding
the vector~$V$.
\end{proof}

\printbibliography


\end{document}